\newtheorem{theorem}{Theorem} 
\newtheorem{proposition}{Proposition} 
\newtheorem{corollary}{Corollary} 
\newtheorem{lemma}{Lemma}
\newtheorem{remark}{Remark}
\newtheorem{definition}{Definition} 
\newtheorem*{rep@theorem}{\rep@title}
\newcommand{\newreptheorem}[2]{%
\newenvironment{rep#1}[1]{%
 \def\rep@title{#2 \ref{##1}}%
 \begin{rep@theorem}}%
 {\end{rep@theorem}}}
\def\ve#1{\mathchoice{\mbox{\boldmath$\displaystyle\bf#1$}}
{\mbox{\boldmath$\textstyle\bf#1$}}
{\mbox{\boldmath$\scriptstyle\bf#1$}}
{\mbox{\boldmath$\scriptscriptstyle\bf#1$}}}
\newcommand\veb{{\ve b}}
\newcommand\veg{{\ve g}}
\newcommand\veh{{\ve h}}
\newcommand\vel{{\ve l}}
\newcommand\veo{{\ve o}}
\newcommand\vep{{\ve p}}
\newcommand\veq{{\ve q}}
\newcommand\ver{{\ve r}}
\newcommand\ves{{\ve s}}
\newcommand\veu{{\ve u}}
\newcommand\vev{{\ve v}}
\newcommand\vew{{\ve w}}
\newcommand\vex{{\ve x}}
\newcommand\vey{{\ve y}}
\newcommand\vez{{\ve z}}
\newcommand\vealpha{{\ve \alpha}}
\newcommand\vebeta{{\ve \beta}}
\newcommand\verho{{\ve \rho}}
\newcommand\vezero{{\ve 0}}
\def\R{\mathbb{R}}
\def\Z{\mathbb{Z}}
\def\N{\mathbb{N}}
\def\G{\mathcal{G}}
\def\wt{\mathbf{wt}}
\newcommand{\suppo}{\textrm{supp}}
\newcommand{\CC}{\mathcal{C}}
\newcommand{\X}{\mathbb{X}}
\providecommand{\vrect}{\mkern1mu\mathpalette\v@rectangle\relax\mkern1mu}
\newcommand{\v@rectangle}[2]{%
  \hbox{
  \fboxrule=0.5\fontdimen 8
    \ifx#1\displaystyle\textfont\else
    \ifx#1\textstyle\textfont\else
    \ifx#1\scriptstyle\scriptfont\else
    \scriptscriptfont\fi\fi\fi 3
  \fboxsep=-\fboxrule
  \fbox{$\m@th#1\phantom{(}$}%
  }
}
\DeclarePairedDelimiter\ceil{\lceil}{\rceil}
\DeclarePairedDelimiter\floor{\lfloor}{\rfloor}
\DeclareMathOperator{\sign}{sign}
\newcommand{\df}{:=}
\newcommand{\Oh}{\mathcal{O}}
\newcommand{\OhOp}[1]{\Oh\mathopen{}\mathclose\bgroup\left( #1 \aftergroup\egroup\right)}
\DeclareMathOperator{\poly}{{\rm poly}}
\newcommand{\FPT}{{\sf FPT}\xspace}
\newcommand{\NPh}{\hbox{{\sf NP}-hard}\xspace}
\newcommand{\NPhness}{\hbox{{\sf NP}-hardness}\xspace}
\newcommand{\XP}{{\sf XP}\xspace}
\newcommand{\W}[1]{{\sf W[#1]}\xspace}
\newcommand{\Whness}[1]{$\mathsf{W[#1]}$-hardness\xspace}
\newcommand{\prob}[3]{
\begin{center}
\begin{tabularx}{\textwidth}{lX}
	\multicolumn{2}{l}{#1}\\
	{\bf Input:}&{#2}\\
	{\bf Goal:}&{#3}
\end{tabularx}
\end{center}
}
\patchcmd{\thebibliography}{\chapter*}{\section*}{}{}
\newcommand{\ie}{i.\,e.\xspace}
\newcommand{\wrt}{w.\,r.\,t.\xspace}
\newcommand{\eg}{e.\,g.\xspace}
\newcommand{\dist}{\mathrm{dist}}
\newcommand{\ksum}{\textsc{Subset Sum}}
\begin{document}
\title{Sometimes, Convex Separable Optimization \emph{Is} Much Harder than Linear Optimization, and Other Surprises}

\author{Cornelius Brand$^1$\thanks{Supported by OP RDE project No. CZ.02.2.69/0.0/0.0/18 053/0016976 International mobility of research, technical and administrative staff at the Charles University, Prague and by the Austrian Science Fund (FWF) via projects Y1329
 (Parameterized Analysis in Artificial Intelligence) and P31336 (New Frontiers for Parameterized
 Complexity).}, Martin Kouteck{\'{y}}$^2$\thanks{Partially supported by Charles University project UNCE/SCI/004 and by the project 19-27871X of GA ČR.}, Alexandra Lassota$^3$\thanks{Supported by the Swiss National Science Foundation within the project \emph{Lattice algorithms and Integer Programming}~(200021\_185030/1).}, and Sebastian Ordyniak$^4$\thanks{Supported by the
Engineering and Physical Sciences Research Council (EPSRC, project EP/V00252X/1).}\\[0.1cm]
  \mbox{}\small$^1$Vienna University of Technology, Austria (\small cbrand@ac.tuwien.ac.at)\\
  \mbox{}\small$^2$Charles University, Czech Republic (\small koutecky@iuuk.mff.cuni.cz)\\
  \mbox{}\small$^3$EPFL, Lausanne, Switzerland (\small alexandra.lassota@epfl.ch)\\
  \mbox{}\small$^4$University of Leeds, UK (\small sordyniak@gmail.com)
}
  
\date{}

\maketitle

\begin{abstract}
An influential 1990 paper of Hochbaum and Shanthikumar made it common wisdom that ``convex separable optimization is not much harder than linear optimization''~[JACM 1990].
We exhibit two fundamental classes of mixed integer (linear) programs that run counter this intuition.
Namely those whose constraint matrices have small coefficients and small primal or dual treedepth: While linear optimization is easy~[Brand, Koutecký, Ordyniak, AAAI 2021], we prove that separable convex optimization \emph{is} much harder.
Moreover, in the pure integer and mixed integer linear cases, these two classes 
have the same parameterized complexity. We show that they yet behave quite differently in the separable convex mixed integer case.

Our approach employs the mixed Graver basis introduced by Hemmecke~[Math. Prog. 2003]. 
We give the first non-trivial lower and upper bounds on the norm of mixed Graver basis elements.
In previous works involving the integer Graver basis, such upper bounds have consistently resulted in efficient algorithms for integer programming.
Curiously, this does not happen in our case. In fact, we even rule out such an algorithm.

\end{abstract}

\section{Introduction}
The \textsc{Mixed Integer Programming} problem is to solve 
\begin{equation}
	\min f(\vex):\, E\vex = \veb, \, \vel \leq \vex \leq \veu, \, \vex \in \Z^{n_\Z} \times \R^{n_{\R}},\label{MIP} \tag{MIP}
\end{equation}
where the number of columns is $n = n_{\Z} + n_{\R}$, the objective function is $f: \R^n \to \R$, $E \in \Z^{m \times n}$, $E$ denotes the constraint matrix, $\veb \in \R^m$ is the right-hand side, and the lower and upper bounds are $\vel, \veu \in (\R \cup \{\pm \infty\})^n$.
We focus on the case when $f$ is separable convex, that is, $f(\vex) = \sum_{i=1}^n f_i(x_i)$ where each $f_i: \R \to \R$ is convex.
Without this restriction, the problem is essentially solved in fixed dimension and hopelessly hard in variable dimension, as already minimizing (non-separable) convex quadratic functions over the unit cube $\{0,1\}^n$  is \NPh~\cite[Proposition 101]{EisenbrandHunkenschroederKleinKouteckyLevinOnn19}.
In the sequel, we set $\X = \Z^{n_\Z} \times \R^{n_\R}$, where $n_\Z$ and $n_\R$ should be clear from the context.

\ref{MIP} is a fundamental modeling tool widely used in optimization.
As Bixby~\cite{Bixby:2002} says in his famous analysis of LP solver speed-ups, \emph{``[I]nteger programming, and most particularly the mixed-integer variant, is the dominant application of linear programming in practice.''}
Despite this, the computational complexity of~\eqref{MIP} is significantly underexplored, especially compared to the state of the art in \textsc{Integer Programming} (IP). Let us give a brief overview.

\textsc{Integer Programming} is \NPh in general, but three prominent classes of IP were shown to be tractable.
First, Hoffman and Kruskal~\cite{HoffmanK56} showed that any LP with a totally unimodular (TU) constraint matrix has integral vertices. This, together with the polynomiality of \textsc{Linear Programming} (LP), shows that ILPs with TU matrices can be solved in polynomial time.
A generalization of TU matrices are matrices with bounded subdeterminants $\Delta$; the polynomiality of such ILPs is a major open problem, so far only solved for the case $\Delta = 2$ (bimodular matrices)~\cite{ArtmannWZ17}.
Hochbaum and Shanthikumar~\cite{HochbaumShantikumar1990} extended these results to separable convex objectives.
In the '80s, Lenstra~\cite{Lenstra:1983} showed that \textsc{Integer Programming}~(IP) can be solved in time $g(n) \poly(L)$, where $L$ is the input length and $g$ is some computable function.
His algorithm works even if $f$ is a convex function.
Finally, it was recently shown that IP with separable convex $f$ can be solved in time $g(a,d)\poly(n,L)$, where $a = \|E\|_\infty$, $d$ is a measure of structural sparsity of $E$ (specifically, the smaller of the primal or dual treedepth of $E$) and $g$ is some computable function~\cite{KouteckyLO18}.
This result is a part of a major current stream of results on the complexity of IP with matrices of bounded treedepth, the state of the art being~\cite{EisenbrandHunkenschroederKleinKouteckyLevinOnn19,CslovjecsekEPVW21,CslovjecsekEHRW21,klein2021collapsing,ChanCKKP20}. IPs with bounded treedepth can intuitively be seen as IPs with small entanglement \wrt their variables and constraints, \eg, block-structured matrices such as $n$-fold or $2$-stage stochastic IPs.
Bounded treedepth IPs led to several breakthroughs in applications~\cite{KnopKM17,KnopK18,KnopKLMO:2019,JansenKMR19,ChenM18} and are the focus of a plethora of recent works.

What is the complexity of the \emph{mixed} variant~\eqref{MIP} for these classes of constraint matrices?
Already in his original paper, Lenstra showed an extension of his algorithm to~\eqref{MIP} with complexity $g(n_{\Z}) \poly(n_{\R},L)$.
But what about the other two classes? 
Let us first restrict our attention to the simpler situation where $f(\vex) = \vew\vex$ is a linear function and $\veb, \vel, \veu$ are integral, \ie,
\begin{equation}
	\min \vew \vex:\, E\vex = \veb, \, \vel \leq \vex \leq \veu, \quad \vex \in \X, \, \vel, \veu \in \Z^n, \, \veb \in \Z^m \label{MILPz} \tag{MILP$_{\Z}$}.
\end{equation}
Regarding the two aforementioned classes, the totally unimodular case is polynomial for~\eqref{MILPz} because it again coincides with LP.
For MILPs with small coefficients and treedepth, Brand, Koutecký and Ordyniak~\cite{BrandKO21} have recently shown that also this problem remains efficiently solvable.

\subsection{Our Contributions}
\subsubsection{Algorithms and Complexity of Mixed-Integer Programs.} 
The work of Hochbaum and Shanthikumar~\cite{HochbaumShantikumar1990} seemingly made it common sense that ``convex separable optimization is not much harder than linear optimization,'' as the title of their paper says.
Another case supporting Hochbaum and Shantikhumar's claim is an algorithm of Chubanov~\cite{Chubanov16}, which reduces separable convex continuous minimization to linear programming.
This nurtures the idea that positive results for the easier problem~\eqref{MILPz} translate to~\eqref{MIP}.

We provide evidence against this intuition. Namely, we show that the two positive results above do \emph{not} extend to~\eqref{MIP}. This even holds if we restrict $\veb, \vel, \veu$ to be integral.
In fact, already a small generalization of~\eqref{MILPz} where we allow fractional entries, \ie,
\begin{equation}
	\min \vew \vex:\, E\vex = \veb, \, \vel \leq \vex \leq \veu, \quad \vex \in \X,\, \vel, \veu \in \X, \, \veb \in \R^m,\label{MILPf} \tag{MILP$_{\R}$}
\end{equation}
becomes much harder. We show that~\eqref{MILPf} can be reduced to~\eqref{MIP} with integral data (Lemma~\ref{lem:milp-to-mip}) by capturing the fractionality of $\veb, \vel, \veu$ by the separable convex objective $f$.
Thus, any hardness of~\eqref{MILPf} also directly shows hardness of~\eqref{MIP}.

Regarding the totally unimodular case (and, by extension, also the bimodular case), Conforti et al.~\cite{ConfortiSEW09} proved that deciding the feasibility of~\eqref{MILPf} is \NPh.
Together with our reduction to~\eqref{MIP}, this shows that~\eqref{MIP} with a TU constraint matrix is \NPh even with integral $\veb, \vel, \veu$.

In the same vein, we prove in Theorems~\ref{thm:nfoldNPh} and~\ref{thm:2stageWh1} below that for matrices of bounded treedepth, already~\eqref{MILPf}, and a fortiori the separable convex case~\eqref{MIP}, is hard.
Looking closer, we will see that the complexity of~\eqref{MILPf} for matrices of small treedepth harbours another curiosity:
As mentioned above, the prototypical examples of matrices from this class are certain block-structured matrices, namely $n$-folds and 2-stage stochastic matrices (Fig.~\ref{fig:schematicMatrices}), with blocks of size bounded by a parameter.
\begin{figure}
\centering
\begin{subfigure}{.4\textwidth}
  \centering
\begin{equation*}
\begin{pmatrix}
\begin{tikzpicture}
  \matrix (m)[matrix of math nodes, nodes in empty cells, nodes={minimum width = 0.8cm, minimum height = 0.4cm} ]
   {
B_1	& B_2	& \dots	& B_n      \\
A_1	& 0 	& \dots  	& 0 	  \\
0	&A_2	&\dots 	& 0	\\
\vdots	& \vdots 	& \ddots & \vdots \\
0 	& 0 & \dots 	 & A_n \\
};

\draw(m-1-1.south west) rectangle (m-1-4.north east);
\draw(m-1-1.south west) rectangle (m-2-1.south east);
\draw(m-2-1.south east) rectangle (m-3-2.south east);
\draw(m-5-4.north west) rectangle (m-5-4.south east);
\end{tikzpicture}
\end{pmatrix}
\end{equation*}
\end{subfigure}
\begin{subfigure}{.4\textwidth}
  \centering
\begin{equation*}
\begin{pmatrix}
\begin{tikzpicture}
  \matrix (m)[matrix of math nodes, nodes in empty cells, nodes={minimum width = 0.8cm, minimum height = 0.4cm} ]
   {
B_1		& A_1		& 0 	& \dots 	& 0    \\
B_2		& 0 		& A_2 	& \dots  	& 0 	  \\
\vdots	& \vdots 	& \vdots & \ddots	& \vdots \\
B_n 	& 0 		& 0 & 	\dots		& A_n \\
};
\draw(m-1-1.north west) rectangle (m-4-1.south east);
\draw(m-1-1.north east) rectangle (m-1-2.south east);
\draw(m-1-2.south east) rectangle (m-2-3.south east);
\draw(m-4-5.north west) rectangle (m-4-5.south east);
\end{tikzpicture}
\end{pmatrix}
\end{equation*}
\end{subfigure}
\subcaptionbox{An $n$-fold matrix}[.4\linewidth]{}
\subcaptionbox{A $2$-stage stochastic matrix}[.4\linewidth]{}
\caption{
The $A_i$ and $B_i$ are matrices of dimension bounded by a parameter. Note that $n$-folds and $2$-stage stochastic matrices are transpositions of each other.}
\label{fig:schematicMatrices}
\end{figure}
While solving $n$-fold IPs can be done in single-exponential time in the parameters,\footnote{As is common in the literature, we use the term \emph{single-exponential} in $x$ for functions of the form $2^{\poly(x)}$, as opposed to e.g. $2^{O(x)}$. Similarly, we call exponential towers of height two, that is, $2^{2^{\poly(x)}}$ \emph{double-exponential} in $x$.}
$2$-stage stochastic IPs require double-exponential time~\cite{JansenKL21}.
We show that this pattern is broken 
by~\eqref{MILPf}.
In the $n$-fold case, we show:
\begin{theorem} \label{thm:nfoldNPh}
$n$-fold~\eqref{MILPf} and~\eqref{MIP} with integral data are \NPh already with blocks of constant dimensions and with $\|E\|_\infty = 1$.
\end{theorem}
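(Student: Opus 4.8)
The plan is to reduce from \ksum, which is \NPc: given positive integers $a_1, \dots, a_n$ and a target $t$, decide whether some $S \subseteq \{1, \dots, n\}$ satisfies $\sum_{i \in S} a_i = t$. I would construct a (uniform) $n$-fold instance of~\eqref{MILPf} whose feasibility is equivalent to the existence of such an $S$, engineered so that the large numbers $a_i$ and $t$ are hidden entirely in fractional lower/upper bounds and right-hand sides, while every matrix entry stays in $\{-1,0,1\}$. Since deciding feasibility is already hard, the optimization problem is a fortiori \NPh. To begin, I would rescale: put $D \df 1 + \sum_{i=1}^n a_i$ and set $\alpha_i \df a_i/D \in (0,1)$ and $\tau \df t/D$; these have bit-length polynomial in the input, so the reduction is polynomial.

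Each brick $i$ of the $n$-fold carries one integer variable $z_i$ with $0 \le z_i \le 1$ and one continuous variable $y_i$ with $0 \le y_i \le \alpha_i$. The heart of the construction is a local gadget, realized by the diagonal block $A_i$, that pins $y_i$ to the two-point set $\{0, \alpha_i\}$ using only unit coefficients, namely the pair of inequalities
\begin{equation*}
 0 \;\le\; z_i - y_i \;\le\; 1 - \alpha_i .
\end{equation*}
If $z_i = 0$, the left inequality together with $y_i \ge 0$ forces $y_i = 0$; if $z_i = 1$, the right inequality together with $y_i \le \alpha_i$ forces $y_i = \alpha_i$. Thus the only remaining freedom is the binary choice $z_i \in \{0,1\}$, encoding whether $i \in S$, and effectively $y_i = \alpha_i z_i$. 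After adding one nonnegative continuous slack per inequality to turn them into equalities, the block $A_i$ has constant dimension and entries in $\{-1,0,1\}$, with all arithmetic information about $a_i$ residing in the fractional bounds $\alpha_i$ and $1-\alpha_i$.

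The single linking row, realized by the top blocks $B_i$, is $\sum_{i=1}^n y_i = \tau$, again with unit coefficients; invoking the gadget this reads $\sum_{i=1}^n \alpha_i z_i = \tau$, i.e. $\sum_{i \in S} a_i = t$. Hence the instance is feasible if and only if the \ksum instance is a yes-instance. Since all $A_i$ and all $B_i$ are identical and of constant size, and $\|E\|_\infty = 1$, this is a uniform $n$-fold instance of~\eqref{MILPf} with the claimed parameters, establishing its \NPhness. To transfer this to~\eqref{MIP} with integral data, I would finally apply Lemma~\ref{lem:milp-to-mip}, which absorbs the fractional data $\alpha_i$, $1-\alpha_i$, and $\tau$ into a separable convex objective over integral $\veb, \vel, \veu$.

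The step I expect to be delicate is precisely the gadget: one must confine each continuous $y_i$ to $\{0, \alpha_i\}$ using linear constraints all of whose coefficients are $\pm 1$, which is exactly where the fractionality of the bounds is indispensable — with integral data the analogous $n$-fold program is tractable, consistent with the paper's narrative, so the hardness must be extracted from the fractional bounds alone. Verifying that the slack-augmented blocks remain of constant size with unit entries, and that the rescaling preserves polynomial bit-length, is then routine.
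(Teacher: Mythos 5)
Your proposal is correct and takes essentially the same route as the paper: the paper reduces from \textsc{Partition} rather than \ksum{}, scales the numbers into $[0,1]$, and uses the same unit-coefficient indicator trick to pin each continuous variable to $\{0,a_i\}$ (there via a complementary pair $y_1^i + y_2^i = a_i$ with $y_j^i \leq x_j^i$, so the fractional numbers sit in local right-hand sides rather than in variable bounds as in your gadget), followed by one global linking row and the same application of Lemma~\ref{lem:milp-to-mip} to get hardness of~\eqref{MIP} with integral data. The remaining differences are cosmetic.
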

In contrast, our main algorithmic result shows that $2$-stage stochastic~\eqref{MILPf} \emph{are} solvable in polynomial time for fixed block dimensions:
\begin{theorem} \label{thm:2stageXP}
	$2$-stage stochastic~\eqref{MILPf} with block dimensions $r,s$ can be solved in time $g(r,s,\|E\|_\infty) n^r$ for some computable function $g$.
	\end{theorem}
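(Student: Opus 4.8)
The plan is to decompose the program along the first-stage (global) variables. Write $\vex_0 \in \X_0 = \Z^{r_\Z} \times \R^{r_\R}$ for the $r$ global variables (the columns of the $B_i$) and $\vex_i \in \X_i = \Z^{s_\Z} \times \R^{s_\R}$ for the local variables of block $i$ (the columns of $A_i$), with $r=r_\Z+r_\R$ and $s=s_\Z+s_\R$. Once $\vex_0$ is fixed, the coupling constraints $B_i \vex_0 + A_i \vex_i = \veb_i$ become independent across blocks, so \eqref{MILPf} reduces to minimising over $\vex_0$ the separable function
\[
F(\vex_0) \df \vew_0 \vex_0 + \sum_{i=1}^n \psi_i(\vex_0), \qquad \psi_i(\vex_0) \df \min\{\, \vew_i \vex_i : A_i \vex_i = \veb_i - B_i \vex_0,\ \vel_i \le \vex_i \le \veu_i,\ \vex_i \in \X_i \,\}.
\]
Each $\psi_i$ is a mixed-integer value function of the $r$-dimensional argument $\vex_0$ (in fact only of $B_i\vex_0 \in \R^t$, where $t$ is the bounded block height), and each block problem has bounded dimension and entries bounded by $a \df \|E\|_\infty$. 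For any fixed $\vex_0$ the $n$ block minima can thus be computed independently, each in time $g'(s)\poly(L)$ by Lenstra's algorithm~\cite{Lenstra:1983} since a block has only $s$ variables; it remains to localise the minimiser of $F$ to $\Oh(n^r)$ candidate values of $\vex_0$.

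I would handle the integer and continuous global coordinates by different means. Solving the continuous relaxation is a single LP and hence polynomial; I would then prove a proximity bound showing that some optimum of \eqref{MILPf} has its integer global coordinates within $\ell_\infty$-distance $\rho = \rho(r,s,a)$ of an optimal relaxation solution. This confines the integer part of $\vex_0$ to a box of parameter-only size, giving $(2\rho+1)^{r_\Z} = g''(r,s,a)$ integer global assignments to enumerate. For the continuous global variables I would exploit that, once all integer coordinates are fixed, every $\psi_i$ becomes the value function of a \emph{linear} program of bounded size, hence convex and piecewise affine with at most $c(s,a)$ affine pieces whose domains are delimited by at most $c(s,a)$ hyperplanes in $\R^{r_\R}$. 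Collecting these over all blocks yields an arrangement of at most $cn$ hyperplanes, subdividing $\R^{r_\R}$ into $\Oh((cn)^{r_\R}) = g(s,a)\,n^{\Oh(r)}$ cells. On each cell $F$ is affine, so its minimum over the feasible box is attained at a vertex of the cell, and enumerating the cells and solving one LP per cell locates the optimal $\vex_0$; polynomial factors in $L$ are suppressed to match the claimed bound.

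The main obstacle is that this clean picture breaks when integrality of the \emph{local} variables interacts with the \emph{continuous} global variables: a block of the form $\{x_i = z,\ x_i \in \Z\}$ makes $\psi_i(z)$ finite only on a lattice, so the mixed value function is neither convex nor of bounded combinatorial complexity, and the naive arrangement above can acquire super-polynomially many pieces. Overcoming this is where I expect bounds on the norm of mixed Graver basis elements to enter. I would use them to establish two points: that such lattice-type conditions force the affected continuous global variables onto a sublattice of bounded index, so they may be folded into the integer global variables and absorbed by the proximity step; and that, after this reduction, each block contributes only $c(s,a)$ facet-defining constraints to the region on which it is feasible and affine. The factor $n^{\Oh(r)}$ — and hence the \XP rather than \FPT running time — then arises precisely because the optimal global vector is pinned down by up to $r$ tight constraints that may originate from $r$ distinct among the $n$ blocks, leaving $\binom{\Oh(n)}{r} = n^{\Oh(r)}$ combinations to try. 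The crux is to verify that no further source of complexity survives, i.e. that both the proximity window and the per-block facet count are bounded by a function of $r,s,a$ alone.
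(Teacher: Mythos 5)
Your high-level architecture --- confine the optimum via a proximity bound, enumerate roughly $n^{O(r)}$ candidate global parts, then let the blocks decompose and solve them independently --- matches the paper's, and your hyperplane-arrangement enumeration is a legitimate alternative to the paper's basis-signature counting (Lemmas~\ref{lem:block_bound}--\ref{lem:global_cont_vert}, where invertibility of a basis forces at most $r$ ``portrait'' blocks, giving the $n^r$ factor). But there is a genuine gap at exactly the step you defer: the proximity bound is not a routine ingredient to be filled in later, it is the main technical content of the theorem, and the version you posit is both mis-anchored and too weak. The paper first solves the \emph{pure integer} program (fixed-parameter tractable by \cite{AH,KouteckyLO18}) and then proves that some mixed optimum lies within $\ell_\infty$-distance $h'(r,s,\|E\|_\infty)$ of that integer optimum \emph{in every coordinate} (Corollary~\ref{cor:proximity}). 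This rests on Theorem~\ref{lem:2stagebound}, the new norm and weight bounds for the mixed Graver basis of $2$-stage stochastic matrices, whose proof needs one-fat decompositions (Lemma~\ref{lem:improved-basic-bound}) and Proposition~\ref{prop:stronger-klein}. Note in particular that a bound on mixed Graver elements alone does \emph{not} yield proximity --- one needs the stronger weight bound $\wt^{\X}_\infty(E)$, which is precisely why the paper introduces one-fat decompositions and why it poses the relation between $g^{\X}_p(E)$ and $\wt^{\X}_p(E)$ as an open question. None of this machinery is supplied, or even correctly targeted, in your plan.

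The mis-targeting matters concretely. You anchor proximity at an optimum of the continuous relaxation and confine only the integer \emph{global} coordinates. First, the difference between an LP optimum and a mixed optimum is generically fractional in the integer coordinates, hence does not lie in $\ker_{\X}(E)$, so the conformal/one-fat decomposition argument behind Lemma~\ref{lem:proximity} does not even apply to it. Second, and more importantly, leaving the \emph{local} integer variables unconfined is exactly what creates the obstacle you identify yourself: each $\psi_i$, viewed as a function of the continuous global variables, is then a mixed-integer value function of unbounded combinatorial complexity, and your arrangement has no parameter-bounded description. With the paper's all-coordinate proximity this obstacle evaporates: after recentering at the integer optimum, every integer variable (global and local) ranges over a box of width $2h'(r,s,\|E\|_\infty)$ as in \eqref{AugMILP}, so each block's value function on that region is a minimum of at most $(2h'(r,s,\|E\|_\infty)+1)^{s}$ LP value functions of bounded size, and an argument like yours (or the paper's Lemma~\ref{lem:signature}) can go through. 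Your proposed repair via ``sublattices of bounded index'' is speculative and, as far as I can see, not the right mechanism --- what is needed is simply proximity in all coordinates. So the proposal is incomplete: the heart of the proof (Theorem~\ref{lem:2stagebound} combined with Lemma~\ref{lem:proximity}) is missing, and the proximity statement you propose to prove instead would not suffice to complete your own plan.
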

The algorithm relies on a combination of insights into mixed Graver bases as well as the structure of basic solutions for 2-stage mixed-integer linear programs.
We obtain a matching lower bound for $2$-stage stochastic~\eqref{MILPf}:
\begin{theorem} \label{thm:2stageWh1}
$2$-stage stochastic~\eqref{MILPf} and~\eqref{MIP} with integral data are \W{1}-hard parameterized by the block dimensions and with $\|E\|_\infty = 1$.
\end{theorem}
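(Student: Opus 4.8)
The plan is to give a parameterized reduction from \probfont{Multicolored Clique}, which is \Wh{1} when parameterized by the number $k$ of color classes, producing a $2$-stage stochastic instance of~\eqref{MILPf} whose block dimensions $r,s$ are bounded by a function of $k$ and whose constraint matrix satisfies $\|E\|_\infty = 1$. The \W{1}-hardness for~\eqref{MIP} with integral data then follows immediately by composing with the structure-preserving reduction of Lemma~\ref{lem:milp-to-mip}, so it suffices to treat~\eqref{MILPf}. The architecture is guided by the $n^r$ running time of Theorem~\ref{thm:2stageXP}: the few first-stage (global) variables---of which there are only $O(k)$, hence bounded by the parameter---encode the choice of one vertex per color class, while the many second-stage scenarios carry a constant number of local variables and constraints each and verify consistency. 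Crucially, all combinatorial magnitude is pushed into the fractional lower/upper bounds and right-hand sides, which is exactly what~\eqref{MILPf} permits and what keeps every matrix entry in $\{-1,0,1\}$.

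First I would build a \emph{selection gadget}: for each color class $i \in [k]$ a global variable $x_i$ whose admissible value encodes the index of the chosen vertex through a fractional code $a_u$ assigned to each vertex $u$. Because coefficients are confined to $\{-1,0,1\}$, the numerical content of $a_u$ lives entirely in the bounds and right-hand sides; a small number of auxiliary integer and continuous global variables, linked by $\pm1$ equations of the form $z + x = a_u$, pin down the integral and fractional parts and thereby force $x_i$ onto one of the legal codes. Second, adjacency between classes would be checked by the second stage: the scenarios are indexed so that each one imports, via $\pm1$ equations, copies of two of the global codes into its local variables and is feasible precisely when those codes encode an adjacent pair. Choosing $a_u$ so that ``is an edge'' translates into a $\pm1$ linear (in)equality on the local variables with a suitable fractional right-hand side lets a single scenario perform one adjacency test while respecting $\|E\|_\infty = 1$; ranging over all pairs of classes (a number depending only on $k$) and all candidate vertices yields polynomially many scenarios, each of constant local dimension.

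Finally I would prove the two directions of correctness. Completeness is the easy direction: a multicolored clique yields global codes for its $k$ vertices, and every scenario can be completed locally because each selected pair is an edge. The main obstacle is soundness: I must rule out \emph{spurious} fractional solutions---assignments of the continuous variables that satisfy all scenario constraints yet do not arise from any genuine vertex selection. This is where the design must exploit the interplay between the integer variables (which quantize the global choices) and the continuous variables (which carry the fractional payload) under the severe restriction $\|E\|_\infty = 1$: the codes $a_u$ have to be separated enough, and the pinning equations rigid enough, that no blending of distinct codes can masquerade as a legal selection while simultaneously passing all pairwise checks. Establishing this separation---and verifying that the resulting block dimensions and coefficient bound are as claimed while the reduction runs in \FPT time---constitutes the technical heart of the proof.
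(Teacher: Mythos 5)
Your high-level architecture (reduce from \probfont{Multicolored Clique}, let $O(k)$ global variables carry the vertex choices as fractional codes, keep all numerical content in the bounds and right-hand sides so that $\|E\|_\infty=1$, and finish via Lemma~\ref{lem:milp-to-mip}) is a reasonable plan, and it genuinely differs from the paper, which reduces from \ksum{} instead. But as written the proposal has real gaps, and its one concretely described gadget would fail. The pinning equations ``$z + x = a_u$'' cannot exist as stated: a constraint's right-hand side is fixed input data, while $a_u$ depends on which vertex the solution selects, so you would need one such equation per vertex $u$ of the class --- and then every feasible point would have to satisfy all of them simultaneously, which is infeasible once the codes are distinct. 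More fundamentally, in a $2$-stage stochastic matrix the only constraints not tied to a single block are constraints on the $r = O(k)$ global variables alone, and a bounded number of $\pm1$ equations over $O(k)$ global variables (integer or continuous) cannot encode the $n$-way disjunction $x_i \in \{a_u : u \in V_i\}$. Any working selection gadget must be distributed over the second-stage blocks (\eg, one constant-size block per vertex that forbids a punctured neighborhood of its code, or per-vertex indicator binaries whose cross-block consistency is enforced through the objective). The same issue recurs in your adjacency test: a single constant-size scenario with $\pm1$ entries cannot decide adjacency of an arbitrary graph for whatever pair of codes arrives, no matter how the $a_u$ are chosen; you need one scenario per edge or per non-edge, not one per pair of color classes.

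Second, you explicitly defer the soundness argument (``ruling out spurious fractional solutions \dots constitutes the technical heart of the proof''), so the proof is missing exactly where the difficulty lies. For comparison, the paper's reduction (from \ksum{}, which is \Wh{1} parameterized by the solution size $k$) solves this cross-block coordination problem not by feasibility but through the objective: each input number $i$ gets its own block with slot indicators $x^i_j$, its value $a'_i$ enters only as the fractional right-hand side of $\sum_j y^i_j = a'_i$, the global variables $z_j$ are compared to the $y^i_j$ via compensation variables $s^i_j$ with $-r^i_j \le s^i_j \le r^i_j$, and minimizing $\sum_{i,j} r^i_j$ with acceptance threshold $(n-1)k$ forces each slot to match exactly one (distinct) input number. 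Some mechanism of this kind --- objective-based counting, or feasibility gadgets spread over polynomially many constant-size blocks --- is precisely what your outline lacks, so in its current form it does not constitute a proof.
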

Note that \Whness{1} rules out an algorithm of the form $g(k) \poly(n)$ where $k$ are the block dimensions and $n$ is the number of blocks, under the standard complexity assumption \FPT $\neq$ \W{1}.

\begin{remark}
We formulate our algorithms
in reference to 2-stage stochastic constraint matrices.
These are generalized by matrices of bounded primal treedepth, so-called multi-stage stochastic matrices. They are structured in much the same way as in Fig. \ref{fig:schematicMatrices}, but with diagonal blocks of recursive multi-stage stochastic form (and the depth of this recursion is bounded).
While we are confident that our algorithmic results generalize to multi-stage stochastic programs, we focus on the 2-stage case for ease of exposition.
\end{remark}

In a more general vein, we give a mixed-integer analogue of the famous algorithm of Papadimitrou~\cite{Papadimitriou81} for ILPs with few rows and small coefficients:
\begin{theorem}\label{thm:fewrows-mip}
	\eqref{MIP} can be solved in single-exponential time $(m \|E\|_\infty)^{\Oh(m^2)} \cdot \mathcal{R}$, where $\mathcal{R}$ is the time needed to solve the continuous relaxation of any~\eqref{MIP} with the constraint matrix $E$.
\end{theorem}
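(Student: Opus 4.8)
The plan is to reduce one instance of~\eqref{MIP} to $(m\|E\|_\infty)^{\Oh(m^2)}$ many calls of the continuous-relaxation oracle, exploiting that once the integer variables are fixed the remaining continuous problem is exactly such a relaxation. Split the columns as $E = (E_\Z \mid E_\R)$ according to the integer and continuous variables, so that feasibility reads $E_\Z\vex_\Z + E_\R\vex_\R = \veb$. The key observation is that the contribution $\veg \df E_\Z\vex_\Z \in \Z^m$ of the integer variables to the right-hand side \emph{decouples} the instance: for a fixed $\veg$, the optimal objective splits as $F_\Z(\veg) + F_\R(\veg)$, where $F_\Z(\veg)$ minimizes $\sum_{i\in\Z} f_i(x_i)$ over integral $\vex_\Z$ with $E_\Z\vex_\Z = \veg$ (respecting the bounds), and $F_\R(\veg)$ minimizes $\sum_{i\in\R} f_i(x_i)$ over $\vex_\R$ with $E_\R\vex_\R = \veb - \veg$. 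The quantity $F_\R(\veg)$ is computed by a single relaxation solve (pin the integer columns to realize $\veg$ and move it into the right-hand side, keeping the matrix $E$), costing $\mathcal{R}$, and the optimum of~\eqref{MIP} equals $\min_{\veg \in \Z^m}\, F_\Z(\veg) + F_\R(\veg)$.

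First I would bound the number of relevant values of $\veg$. Solving the continuous relaxation of the whole instance once yields an optimum $\vey^*$. Invoking a mixed-integer proximity theorem in the spirit of Eisenbrand--Weismantel, there is an optimal solution $\vex^*$ of~\eqref{MIP} with $\|\vex^* - \vey^*\|_1 \le m \cdot g_\infty(E)$, where $g_\infty(E)$ bounds the $\ell_\infty$-norm of the \emph{mixed} Graver basis elements of $E$. In the few-rows regime the Steinitz lemma gives $g_\infty(E) \le (2m\|E\|_\infty + 1)^m = (m\|E\|_\infty)^{\Oh(m)}$, crucially \emph{independent of $n$}. Consequently the target $\veg^* = E_\Z\vex^*_\Z$ satisfies $\|\veg^* - E_\Z\vey^*_\Z\|_\infty \le \|E\|_\infty \, \|\vex^*_\Z - \vey^*_\Z\|_1 \le (m\|E\|_\infty)^{\Oh(m)}$, so $\veg^*$ lies in an explicit axis-parallel box $\SSS$ of side $(m\|E\|_\infty)^{\Oh(m)}$ centred at (a rounding of) $E_\Z\vey^*_\Z$. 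This box contains at most $\bigl((m\|E\|_\infty)^{\Oh(m)}\bigr)^m = (m\|E\|_\infty)^{\Oh(m^2)}$ integer points.

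It then remains to evaluate $F_\Z(\veg) + F_\R(\veg)$ for every $\veg \in \SSS \cap \Z^m$ and return the minimum. Each $F_\R(\veg)$ is one relaxation solve (time $\mathcal{R}$); each $F_\Z(\veg)$ is a pure-integer program with $m$ rows, small coefficients and a separable convex objective, which I would solve by the Papadimitriou/Eisenbrand--Weismantel dynamic program over the integer columns, shifting the target so that the effective right-hand side is small and using the Steinitz lemma to confine the reachable prefix sums to a tube of radius $\Oh(m\|E\|_\infty)$, so that its cost is dominated by $\mathcal{R}$. Summing over the $(m\|E\|_\infty)^{\Oh(m^2)}$ candidates gives the claimed running time; feasibility and (un)boundedness are detected by the initial relaxation solve, and variables with infinite bounds cause no difficulty because the proximity bound still confines $\veg^*$ to $\SSS$.

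The main obstacle is the second step: making the proximity bound genuinely \emph{independent of $n$}. A naive bound of the form $\|\vex^* - \vey^*\|_1 \le n_\Z \cdot g_\infty(E)$ would inflate $\SSS$ by a factor $n^m$ and destroy the running time; one must instead charge the proximity distance to the number of rows $m$ through the mixed Graver basis, and argue that the only object that needs to be enumerated is the coarse contribution $\veg = E_\Z\vex_\Z$ rather than the full vector $\vex_\Z$ (of which there are far too many within $\ell_1$-distance $(m\|E\|_\infty)^{\Oh(m)}$). This is precisely where the upper bound on the norm of mixed Graver basis elements enters, and where the few-rows structure — as opposed to bounded treedepth — is what keeps the bound polynomial in the parameter.
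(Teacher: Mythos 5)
Your overall architecture --- enumerate the aggregated integer contribution $\veg = E_\Z\vex_\Z$ over a small box, pay one relaxation solve per candidate, and handle the integer part by a few-rows dynamic program --- is essentially a re-arrangement of the paper's proof (the paper runs a single prefix-sum dynamic program over the box $R=[-P,P]^m$ and calls the relaxation oracle $|R|$ times in its last stage; your per-candidate accounting still lands at $(m\|E\|_\infty)^{\Oh(m^2)}\cdot\mathcal{R}$). The genuine gap is in your proximity step, which is the crux of the theorem. You anchor the box at the \emph{continuous} optimum $\vey^*$ and claim $\|\vex^*-\vey^*\|_1\le m\cdot g^{\X}_\infty(E)$ by decomposing the difference into mixed Graver elements. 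But $\vex^*-\vey^*$ is not an element of $\ker_{\X}(E)$ at all: its coordinates indexed by integer variables are $\vex^*_\Z-\vey^*_\Z$, and $\vey^*_\Z$ is in general fractional, so no conformal decomposition into elements of $\G_{\X}(E)$ (whose integer-indexed parts are integral) can exist. Moreover, even for genuine mixed kernel elements, the paper is explicit that a bound on the norm of $\G_{\X}(E)$ does \emph{not} by itself give proximity: what is needed is the one-fat decomposition weight $\wt^{\X}_1(E)$, and whether $\wt^{\X}_p(E)$ can be bounded in terms of $g^{\X}_p(E)$ is posed there as an open question. So the sentence ``this is precisely where the upper bound on the norm of mixed Graver basis elements enters'' is exactly the step that does not work as stated.

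The repair is to anchor at an \emph{integer} optimum instead: first solve the pure-integer problem (few rows, small coefficients, separable convex objective, solvable by the known Steinitz-based dynamic program), obtaining $\vez^*$. The difference $\vez^*-\vex^*$ between $\vez^*$ and a nearest mixed optimum \emph{is} an element of $\ker_{\X}(E)$, so the one-fat decomposition applies; Lemma~\ref{lem:improved-basic-bound} bounds the fat element by $\wt^{\X}_1(E)\le(2m^2\|E\|_\infty+1)^{2m+2}$, and the exchange argument for separable convex objectives (Proposition~\ref{prop:sepconvex}, as used in Lemma~\ref{lem:proximity}) converts this into $\|\vez^*-\vex^*\|_1\le \wt^{\X}_1(E)$. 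Centring your box at $E_\Z\vez^*_\Z$ rather than at $E_\Z\vey^*_\Z$, the remainder of your argument goes through and matches the claimed running time. Two smaller corrections: feasibility of the continuous relaxation does not certify feasibility of~\eqref{MIP}, so feasibility must be handled by the two-phase approach of Remark~\ref{rem:feasibility} rather than by ``the initial relaxation solve''; and infinite bounds are eliminated in the preliminaries by a separate proximity argument, not by your box.
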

Until now, the best way to solve a~\eqref{MIP} with few rows and small coefficients would be to preprocess $E$ to not contain any column twice, leaving at most $(2\|E\|_\infty + 1)^m$ columns, and then use Lenstra's algorithm.
However, this leads to a double-exponential running time in terms of $m$ and single-exponential in terms of $\|E\|_\infty$.

\subsubsection*{Results on Mixed Graver Bases.}
Our algorithmic approach uses the mixed Graver basis of the constraint matrix.
This is a mixed analogue of the usual integral Graver basis, which is a central object in all the recent developments around block-structured IPs.
Deeper insights into the Graver basis have led to new dynamic data structures~\cite{EisenbrandHunkenschroederKleinKouteckyLevinOnn19}, proximity theorems~\cite{KnopKLMO:2019,CslovjecsekEHRW21,EisenbrandHunkenschroederKleinKouteckyLevinOnn19,CslovjecsekEPVW21,KleinR21}, better convergence rate analyses~\cite{EisenbrandHunkenschroederKleinKouteckyLevinOnn19}, and much more.
The mixed Graver basis was introduced by Hemmecke~\cite{Hemmecke:2000} already in 2003, but has been neglected in the literature ever since.
On our way to showing Theorem~\ref{thm:2stageXP}, we prove several results about the mixed Graver basis which are of independent interest, and disagree with the typical intuitions gained by studying the ordinary integral Graver basis.
First, while all 
elements of the integral Graver basis of an $n$-fold matrix have small $1$-norm, we show:
\begin{theorem} \label{thm:nfoldmip-lowerbound}
	There is an $n$-fold matrix $E$ with constant-sized blocks and $\|E\|_\infty = 1$ such that the mixed Graver basis of $E$
	contains an element with $1$-norm $\Omega(n)$.
\end{theorem}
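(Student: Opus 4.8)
The plan is to exhibit an explicit $n$-fold matrix together with a single kernel vector $\vez$ and to prove that $\vez$ is conformally indecomposable in the mixed sense while satisfying $\|\vez\|_1=\Omega(n)$. The observation that drives the whole argument is that a mixed vector with a nonzero integer coordinate cannot be split by the naive ``halving'' $\vez=\tfrac12\vez+\tfrac12\vez$ that trivializes the purely continuous case: halving would destroy integrality. Consequently, to certify that $\vez$ lies in the mixed Graver basis it suffices to rule out two kinds of conformal decompositions $\vez=\veu+\vev$: those in which the integer support of $\vez$ is genuinely partitioned between $\veu$ and $\vev$, and those in which one summand, say $\vev$, is purely continuous. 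The first kind is finitary because I will make the integer coordinates atomic (all equal to $\pm1$), so such a decomposition corresponds to choosing a proper nonempty subset $S$ of the active integer tokens; the second kind amounts to forbidding a nonzero continuous kernel vector $\vev$ that is conformal to (sign-compatible with and bounded by) $\vez$.

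For the construction I would use a flow-type gadget: each of the $n$ bricks carries one integer ``source'' token of value $1$ together with a constant number of continuous ``flow'' variables, and the bricks are coupled only through the $O(1)$ global rows, which act as a shared hub enforcing conservation. The integer part of $\vez$ will be the all-ones vector on the tokens, contributing $\|\vez\|_1\ge n$ on its own, and the continuous part will be the unique sign-consistent flow routing all $n$ units through the gadget. The heart of the matter is a rigidity lemma: for every proper nonempty subset $S$ of tokens, any continuous flow meeting the demand generated by $S$ is forced, through the conservation constraints at the hub, to take the wrong sign on at least one coordinate, hence fails to be conformal to $\vez$; and there is no nonzero continuous circulation conformal to $\vez$ at all. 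Given this lemma, a partition of the integer support is impossible (the summand carrying $S$ would need a non-conformal completion) and a purely continuous summand is impossible (no conformal circulation exists), so $\vez$ is indecomposable.

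The subtle and essential point is that this rigidity must be produced by the \emph{sign} (conformality) constraints rather than by linear algebra alone. This is exactly what lets the example escape the bounds that keep both the integral Graver basis and any uniquely forced continuous completion small: over bounded blocks with $\|E\|_\infty=1$, a Steinitz/Graver-complexity argument bounds minimal integer zero-sum patterns, while a Schur-complement argument bounds any continuous value pinned down by an invertible local system. I would therefore deliberately leave a continuous circulation in the kernel, but orient the gadget so that this circulation points out of the orthant selected by $\vez$, that orthant in turn being forced by the requirement that all $n$ tokens fire simultaneously. Verifying this orientation, that is, showing that every proper subset of firing tokens induces a demand whose only feasible continuous completions leave the conformal box, is the main obstacle, and is precisely where the argument genuinely uses mixedness. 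Once it is established, computing $\|\vez\|_1=\Omega(n)$ is immediate from the $n$ unit tokens, and the facts that the blocks are of constant dimension and that $\|E\|_\infty=1$ hold by construction.
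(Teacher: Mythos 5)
Your proof skeleton is essentially the paper's own: atomic integer tokens of value $\pm 1$, constant-size blocks whose local constraint together with conformality forces an all-or-nothing contribution per block, and a global coupling that must then exclude every proper nonempty sub-collection of blocks. However, the entire mathematical content of the theorem sits in what you call the ``rigidity lemma,'' and you do not prove it --- you explicitly defer it as ``the main obstacle.'' Moreover, the mechanism you envision cannot work as sketched. If your gadget is symmetric across bricks (identical blocks, identical per-brick values, all-ones tokens), then each brick's local rows are satisfied by that brick's contribution alone, and since the $n$ identical per-brick contributions to the global rows sum to zero, each individual contribution must already be zero; hence every single brick is by itself a nonzero conformal kernel vector and $\vez$ decomposes into $n$ pieces. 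So per-brick \emph{asymmetry} in the continuous values is not a refinement of the construction, it is the crux, and nothing in your sketch supplies it or explains how $O(1)$ global rows could defeat all $2^n-2$ proper subsets.

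The paper's resolution is number-theoretic rather than sign-based. Lemma~\ref{lem:nsseq} constructs sets $S=\{s_1,\dots,s_m\}$ and $T=\{t_1,\dots,t_m\}$ (binary encodings of the rows and of the columns of an $m\times m$ bit grid) whose total sums agree, $\sum_{s\in S}s=\sum_{t\in T}t=V$, but such that no proper nonempty pair of subsets $S'\subseteq S$, $T'\subseteq T$ satisfies $\sum_{s\in S'}s=\sum_{t\in T'}t$. The Graver element then has blocks $(-1,\,s_i/V,\,1-s_i/V)$ and $(1,\,-t_i/V,\,-1+t_i/V)$ under the local row $(1,1,1)$: conformality plus the local row force any candidate $\vey\sqsubseteq\veg$ to be a union of whole blocks, and the second global row then demands exactly a coincidence of proper subset sums, which is impossible. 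Note that proper subsets fail the \emph{kernel} condition, not conformality as you predicted. The key insight your proposal misses is that the continuous entries of a kernel vector, unlike the matrix entries, are not bounded by the parameter, so they can encode $\Theta(n^2)$ bits of subset-sum-free structure; without this (or an equivalent) construction, your argument cannot be completed.
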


On the other hand, for $2$-stage stochastic matrices, the mixed Graver basis seems to behave similarly as the integer Graver basis, and the $\infty$-norm of its elements can be bounded by a function of the block dimensions and $\|E\|_\infty$:
\begin{theorem} \label{lem:2stagebound}
	For any $2$-stage stochastic matrix $E$, the maximum $\infty$-norm of an element of its mixed Graver basis
	is bounded by $h(r,s,\|E\|_\infty)$ for some computable function $h$.
\end{theorem}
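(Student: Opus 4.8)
The plan is to exploit the primal structure of a $2$-stage stochastic matrix: there is a bounded-size set of first-stage (global) columns shared across all $n$ blocks, while the second-stage columns of distinct blocks occupy disjoint rows. Accordingly I write a mixed Graver element as $\veg=(\veg^{(0)},\veg^{(1)},\dots,\veg^{(n)})$, where $\veg^{(0)}$ collects the first-stage coordinates and $\veg^{(i)}$ the coordinates living in block $i$, so that $B_i\veg^{(0)}+A_i\veg^{(i)}=\vezero$ for every $i$. The whole argument then reduces to two bounds: a bound on the global part $\veg^{(0)}$, and a bound on each local part $\veg^{(i)}$ in terms of $\veg^{(0)}$. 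Granting both, $\|\veg\|_\infty\le\max_i\max(\|\veg^{(0)}\|_\infty,\|\veg^{(i)}\|_\infty)$ yields the desired $h(r,s,\|E\|_\infty)$.

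The bound on the local parts is the easy direction. Fixing the global part, block $i$ must satisfy $A_i\veg^{(i)}=-B_i\veg^{(0)}$, a mixed-integer system with only $r$ rows, entries bounded by $\|E\|_\infty$, and right-hand side of norm at most $s\|E\|_\infty\|\veg^{(0)}\|_\infty$. Conformal minimality of $\veg$ forces $\veg^{(i)}$ to admit no nonzero conformal summand lying in $\ker A_i$: otherwise that summand, padded with zeros outside block $i$, would itself lie in $\ker E\cap\X$ and split $\veg$ conformally. Hence $\veg^{(i)}$ is a $\ker A_i$-reduced solution of a bounded-size system with bounded right-hand side, and such solutions have $\infty$-norm bounded by a function of $r$, $s$, $\|E\|_\infty$, and $\|\veg^{(0)}\|_\infty$ only: the continuous coordinates are controlled by subdeterminants of $A_i$ via Cramer/Hadamard, the integer coordinates by the (small) integer Graver basis of the single block $A_i$ together with the bounded right-hand side.

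The crux is the bound on $\veg^{(0)}$, and here I would split into two regimes. If $\veg$ has zero integer part it is a (suitably normalized) continuous circuit, \ie a minimal-support element of $\ker E$ over $\R$. A circuit using no global column must be confined to a single block, since the local columns of distinct blocks occupy disjoint rows; such circuits are bounded by subdeterminants of one block. A circuit that does use global columns is coordinated across blocks only through its $\le s$ global coefficients $\veg^{(0)}$: minimality limits the local support in each block to at most $r$ columns, so $\veg^{(i)}$ is a bounded linear image of $\veg^{(0)}$ (the relevant inverse has entries bounded by subdeterminants of $A_i$), and normalizing $\|\veg^{(0)}\|_\infty=1$ bounds the whole circuit. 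In the second regime $\veg$ has nonzero integer part; there I would bound the integer coordinates of $\veg^{(0)}$ by a Steinitz-lemma argument mirroring the purely integer $2$-stage case: were they large, reordering the blocks' contributions to the shared first-stage constraint and pigeonholing over the bounded set of block types would expose a conformal decomposition of $\veg$. Once the integer part is pinned down, the remaining continuous coordinates of $\veg^{(0)}$ are fixed at a vertex of a $2$-stage continuous polyhedron, and because only the $\le s$ global continuous coordinates must be determined, a bounded-complexity subsystem suffices.

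The hard part will be precisely this Steinitz bound on the integer first-stage coordinates: it is the only step where the $n$ blocks genuinely interact and cannot be localized to a single block, and it is the mixed-integer analogue of the known first-stage bound for integer $2$-stage stochastic programs. I expect the same convex-reordering flavor to carry over, with the continuous variables separated out and handled by the polyhedral argument above. Combining the global bound with the local bound then produces a single computable $h(r,s,\|E\|_\infty)$ bounding $\|\veg\|_\infty$; I would not attempt to optimize $h$, as tightness is not needed here.
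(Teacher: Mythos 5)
You have the right skeleton---bound the global part $\veg^0$, then recover the local parts from it, with the nonzero-integer-part regime as the crux---and your local-part and circuit arguments are essentially sound (modulo the co-primality normalization for circuits and the conformal-box setup needed to speak of vertices). But the crux is exactly where your proposal fails, on two counts. First, the tool you reach for is the wrong one: the known first-stage bound for purely integer $2$-stage stochastic programs is \emph{not} obtained by a Steinitz argument. Steinitz reorders a single zero-sum sequence of column contributions, which suits matrices whose blocks share \emph{constraints} (few rows, $n$-fold); $2$-stage blocks share \emph{variables}, and what is needed is to extract from \emph{every} block simultaneously a non-empty sub-collection of its decomposition whose global contributions are \emph{identical} across blocks (not merely bounded), so that the extracted pieces glue into a single kernel element with one consistent global part, conformal to $\veg$. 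That exact cross-block coordination is Klein's lemma on equal sums of submultisets, a genuinely nontrivial result; ``reordering the blocks' contributions and pigeonholing over the bounded set of block types'' does not reconstruct it, because although the \emph{vectors} appearing in the per-block decompositions come from a bounded set, the decompositions themselves are multisets whose sizes grow with $\|\veg^0\|_1$, and pigeonholing within one sequence yields no statement holding for all $n$ blocks at once.

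Second, even granting Klein's lemma as a black box, your proposal lacks the ingredient that makes it applicable in the \emph{mixed} setting, which is the paper's main technical contribution here. Decomposing the per-block mixed kernel element $(\veg^0,\veg^i)\in\ker_{\X}(B_i~A_i)$ produces pieces that are in general not integral, so their global parts are not integer vectors with (nearly) equal sums, and no equal-submultiset-sum lemma applies to them. The paper resolves this with the \emph{one-fat decomposition} (Lemma~\ref{lem:improved-basic-bound}, Corollary~\ref{cor:fat-decomp}): each $(\veg^0,\veg^i)$ splits conformally into integral Graver elements of $(B_i~A_i)$ plus a \emph{single} mixed remainder whose $1$-norm is at most a quantity $P$ depending only on $r,s,\|E\|_\infty$. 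The global parts of the integral pieces then form multisets $T_i$ whose sums agree with $\veg^0$ up to error $P$, and the paper invokes the \emph{robust} version of Klein's lemma (Proposition~\ref{prop:stronger-klein}), which tolerates exactly this $\epsilon=P$ discrepancy: either $\|\veg^0\|_\infty$ is bounded by $P$ times a function of $r$ and the block Graver norm, or there are equal-sum submultisets in all blocks, whose union is a nonzero integral $\bar\veg\sqsubseteq\veg$, contradicting minimality. Your plan to ``separate out the continuous variables and handle them by the polyhedral argument'' does not substitute for this: the vertex/Cramer arguments can bound coordinates \emph{after} the global part is controlled, but they do not make the per-block decomposition pieces integral, which is what the cross-block step requires as input.
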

This bound also implies a proximity result: for any integer optimum $\vez^*$, there is a nearby mixed optimum~$\vex^*$. Thus, we can first find $\vez^*$ (which can be done efficiently), and then only search in a small neighborhood around $\vez^*$.

Until now, a bound such as $h(r,s,\|E\|_\infty)$ on the Graver elements has always lead to an algorithm with a corresponding running time $h(r,s,\|E\|_\infty) \poly(n)$.
However, in the mixed case, such an algorithm is ruled out by Theorem~\ref{thm:2stageWh1}.
This shows that in the mixed case, also the common intuition that good bounds on the Graver norm directly lead to fast algorithms is not true.

%

\subsection{Organization}
We give all necessary preliminaries in Sect. \ref{sec:prelim}.
Then, we give new results on mixed Graver bases and algorithmic consequences for mixed-integer linear programs with few rows in Sect. \ref{sec:basic}.
In Sect. \ref{sec:2stagealg}, we then extend this to an algorithm for the 2-stage stochastic case, and Sect. \ref{sec:2stagelb} contains a matching lower bound.
In Sects. \ref{sec:nfoldlb} and \ref{sec:nfoldgraverlb}, we prove both complexity and Graver norm lower bounds for the $n$-fold case.

\section{Preliminaries} \label{sec:prelim}

We write vectors in boldface (\eg, $\vex, \vey$) and their entries in normal font (\eg, the $i$-th entry of~$\vex$ is~$x_i$).
Let $\X = \Z^{n_{\Z}} \times \R^{n_{\R}}$.
Any~\eqref{MIP} instance with infinite bounds $\vel, \veu$ can be reduced to an instance with finite bounds using standard techniques (solving the continuous relaxation and using proximity bounds to restrict the relevant region), so from now on we will assume finite bounds $\vel, \veu \in \X$.

The set of indices at which $\vex$ is non-zero is the \emph{support} of $\vex$, denoted $\suppo(\vex)$.
For positive integers $m \leq n$ we set $[m,n] \df \{m,\ldots, n\}$ and $[n] \df [1,n]$, and we extend this notation to vectors: for $\vel, \veu \in \Z^n$ with $\vel \leq \veu$, $[\vel, \veu] \df \{\vex \in \Z^n \mid \vel \leq \vex \leq \veu\}$.
If~$A$ is a matrix, $A_{i,j}$ denotes the $j$-th coordinate of the $i$-th row, $A_{i, \bullet}$ denotes the $i$-th row and $A_{\bullet, j}$ denotes the $j$-th column.
We use $\log \df \log_2$.
We define $\lfloor x \rceil$ to be $\floor{x}$ if $x \geq 0$ and $\ceil{x}$ otherwise, and we define the \emph{fractional part of $x$} to be $\{x\} \df x - \lfloor x \rceil$.
The division of variables into integer and continuous ones induces a division of the constraint matrix $E = (E_{\Z} ~ E_{\R})$ where $E_{\Z} \in \Z^{m \times n_{\Z}}$ and $E_{\R} \in \R^{m \times n_{\R}}$, and analogously $\vex = (\vex_{\Z}, \vex_{\R})$ and $f(\vex) = f_{\Z}(\vex_{\Z}) + f_\R(\vex_\R)$.
More generally, whenever we make reference to any subset $E'$ of columns or even submatrix of $E$, we will freely denote with $E'_\Z$ and $E'_\R$ the analogous division of $E'$ into its integral and fractional part, respectively.
Throughout, we assume that the rows of $E$ are linearly independent.

We consider $n$-fold and $2$-stage stochastic matrices. A matrix is of $2$-stage stochastic structure if non-zero entries appear only in the first $r$ columns and in $n$ blocks of size $t \times s$ along the diagonal beside. The overall size is $nt \times (r+sn)$. An $n$-fold matrix is the transpose of a $2$-stage stochastic matrix. It has thus $(r+sn)$ rows and $nt$ columns. For an illustration, see Figure~\ref{fig:schematicMatrices}.  

A vector $\veg \in \ker(E) \setminus \{\vezero\}$ is a \emph{circuit of $E$} if it is integral, its entries are co-prime, and it is support-minimal, that is, there is no vector $\veg' \in \ker(E) \setminus \{\vezero\}$ with $\suppo(\veg') \subset \suppo(\veg)$; let $\CC(E)$ denote the set of circuits of $E$.
For two vectors $\vex, \vey \in \R^n$, we say that \emph{$\vex$ is conformal to $\vey$} and write $\vex \sqsubseteq \vey$ if, for each $i \in [n]$, $|x_i| \leq |y_i|$ and $x_i \cdot y_i \geq 0$.
Intuitively, $\vex$ and $\vey$ are in the same orthant, and $\vey$ is at least as far from $\vezero$ as $\vex$ in each coordinate.
We say that $\vex = \sum_i \veg_i$ is a \emph{conformal sum} or a \emph{conformal decomposition of $\vex$} if, for all $i$, $\veg_i \sqsubseteq \vex$.
For an arbitrary set $S$, we write $\ker_{S}(E)$ as a shorthand for $\ker(E) \cap S$.
In particular, the \emph{mixed kernel of $E$} is defined as $\ker_{\X}(E).$
The \emph{Graver basis of $E$}, denoted $\G(E)$, is the set $\G(E) = \{\veg \in \ker_{\Z^n}(E) \setminus \{\vezero\} \mid \veg \text{ is $\sqsubseteq$-minimal}\}$.
\begin{definition}[Mixed Graver basis \cite{Hemmecke:2000}]
Let $E = (E_\Z ~ E_\R) \in \Z^{m \times n}$.
The \emph{mixed Graver basis $\G_{\X}(E)$ of $E$ with respect to $\X$} consists of all vectors $(\mathbf{0},\veg_\R)$, where $\veg_\R \in \CC(E_\R)$, together with all vectors
$(\veg_{\Z}, \veg_{\R}) \in \ker_{\X}(E)$ such that $\veg_{\Z} \neq \vezero$ and there is no $(\veg'_{\Z}, \veg'_{\R}) \in \left(\ker_{\X}(E) \setminus \{\vezero\} \right)$  such that $(\veg'_{\Z}, \veg'_{\R}) \sqsubseteq (\veg_{\Z}, \veg_{\R})$.
\end{definition}
For any $p$, $1 \leq p \leq \infty$, define $g^{\X}_p(E) \df \max_{\veg \in \G_{\X}(E)} \|\veg\|_p$.

The following is a helpful trick to reduce a~\eqref{MILPf} to a~\eqref{MIP} with integer input data and a constraint matrix $(E~I)$.
\begin{lemma}\label{lem:milp-to-mip}
Let an~\eqref{MILPf} instance be given.
It is possible to construct an equivalent~\eqref{MIP} instance in linear time with a constraint matrix $E'=(E~I)$, bounds $\vel', \veu' \in \Z^{n+m}$, and a right-hand side $\veb' \in \Z^m$.
\end{lemma}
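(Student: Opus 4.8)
The plan is to push all the fractionality of $\veb,\vel,\veu$ into a separable convex penalty, while rounding the numeric data to integers and using $m$ fresh continuous slack variables to absorb the fractional part of the right-hand side. Concretely, I would append an $m\times m$ identity block to $E$, obtaining $E'=(E~I)$, keep the integer/continuous partition of the original variables, and declare the new slacks $\ves\in\R^m$ continuous (so only $n_\R$ grows). Writing $\veb=\lfloor\veb\rceil+\{\veb\}$, I set the integral right-hand side $\veb'\df\lfloor\veb\rceil$; then the new equation $E\vex+\ves=\lfloor\veb\rceil$ is equivalent to the old one $E\vex=\veb$ exactly on the slice $\ves=-\{\veb\}$, and since each $\{b_i\}\in(-1,1)$ this slice lies inside the integral box $-\veone\le\ves\le\veone$. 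For the only other source of fractionality, the bounds on the continuous variables, I would enlarge them outward to the nearest integers, $l'_j\df\floor{l_j}$ and $u'_j\df\ceil{u_j}$, leaving the already-integral bounds of the integer variables untouched.

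It remains to recover the discarded constraints $\ves=-\{\veb\}$ and $l_j\le x_j\le u_j$ (for continuous $j$) through the objective. I keep the original linear term $\vew\vex$ and add, for each slack, the convex piecewise-linear penalty $K\,|s_i+\{b_i\}|$, and, for each continuous variable, the convex penalty that is $0$ on $[l_j,u_j]$ and grows with slope $K$ outside it; the integer variables receive no penalty. The resulting objective $f$ is a sum of univariate convex functions, hence separable convex as required by~\eqref{MIP}, and all of $\veb',\vel',\veu'$ are integral. Every feasible point of the original~\eqref{MILPf} lifts, by setting $\ves=-\{\veb\}$, to a feasible point of the new instance with penalty $0$ and identical objective value $\vew\vex$; this already yields one inequality between the optimal values.

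For the converse I would argue that, provided the slope $K$ is chosen large enough, every optimum of the constructed~\eqref{MIP} lies on the intended slice $\ves=-\{\veb\}$, $l_j\le x_j\le u_j$, on which the penalty vanishes and the objective coincides with $\vew\vex$. The mechanism is a sensitivity argument: starting from any feasible point of the penalty-free relaxation and projecting it back onto the slice changes $\vew\vex$ by at most $L$ times the size of the excursion in the slacks and continuous coordinates, where $L$ is a Lipschitz/proximity constant for the optimal value under perturbations of the right-hand side and of the bounds. Taking $K>L$ makes any excursion cost strictly more in penalty than it can ever save in $\vew\vex$, so optima sit on the slice and the map $(\vex,\ves)\mapsto\vex$ biject\-ively matches optima of the two problems with equal value.

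The main obstacle is exactly the existence and magnitude of this constant $K$. Because integer variables are present, one cannot simply invoke LP duality to bound $L$; the bound has to come from a mixed-integer proximity estimate (of the flavour developed later in the paper via the mixed Graver basis), guaranteeing that a point of the relaxed box sits near a point of the slice with controlled change in $\vew\vex$. I would also verify that such a $K$ has polynomial bit-length in the input, so that the construction can indeed be written down in essentially linear time, and that the reduction transfers infeasibility correctly -- for instance by reading feasibility of~\eqref{MILPf} off the threshold at which the penalty begins to contribute to the optimal value of the constructed~\eqref{MIP}.
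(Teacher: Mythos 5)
Your construction coincides with the paper's: append an identity slack block $E'=(E~I)$, round the bounds outward to integers, set $\veb'=\lfloor\veb\rceil$, and enforce the original fractional bounds and the slack values $-\{b_j\}$ via large-slope piecewise-linear penalties folded into the separable convex objective. What you flag as the main obstacle --- justifying the magnitude of the penalty slope $K$ --- is dispatched in the paper with a one-line ``by choosing $M$ large enough'', so your sensitivity discussion is, if anything, more careful than the published argument.
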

\begin{proof}
We will first show how to handle fractional lower and upper bounds $\vel$ and $\veu$.
Later, we will reduce the case of a fractional right-hand side $\veb$ to the case with an integer right-hand side and fractional $\vel, \veu$.
Let $M$ be a large number determined later on.
For each $i \in [n]$, define $f_i(x_i) = w_i x_i + M \cdot \dist(x_i, [l_i, u_i])$, where $\dist(a, [b,c])$ is the distance function which is $0$ if $b \leq a \leq c$ and otherwise is $\min \{|a-b|, |a-c|\}$.
Notice that $f_i$ is at most $3$-piecewise linear and convex, thus $f(\vex) = \sum_i f_i(x_i)$ is separable convex.
Let $\vel' = \floor{\vel}$ and $\veu' = \ceil{\veu}$.
Clearly, by choosing $M$ large enough, it is possible to enforce that an optimum of the new~\eqref{MIP} instance is feasible with respect to the original bounds unless the original instance is infeasible.
Moreover, notice that if $\vel \leq \vex \leq \veu$, $f(\vex) = \vew \vex$ and thus, such a solution minimizes the original objective function as desired.

To handle a fractional $\veb$, introduce $m$ slack variables $s_1, \dots, s_m$ and define the new constraint matrix as $E' = (E~I)$.
Define a new right-hand side $\veb' \df \lfloor \veb \rceil$.
Define bounds $\vel', \veu' \in \R^{n+m}$ to be $\vel$ and $\veu$ in the first $n$ variables, and for $j \in [m]$, set $l'_{n+j}, u'_{n+j} = -\{b_j\}$.
Any feasible solution $(\vex, \ves)$ of this new instance satisfies $E \vex + I \ves = E \vex - \{\veb\} = \veb' = \lfloor \veb \rceil = \veb - \{\veb\}$. Hence, by adding $\{\veb\}$ to both sides of the equation, we get that $E \vex = \veb$.
\end{proof}

\section{The Basic Case: Matrices with Few Rows and Small Coefficients}\label{sec:basic}

This section develops the basic version of our algorithmic result.
We begin by giving upper bounds for a certain notion of decompositions of elements in the mixed Graver basis, and then employ these bounds to our algorithmic ends.

\subsection{Mixed-Graver Bound}
We begin with an upper bound on the 1-norm for matrices with few rows and small coefficients.
The proof mimics that of~\cite[Lemma 27]{EisenbrandHunkenschroederKleinKouteckyLevinOnn19}; as such, we will also need the Steinitz lemma:
\begin{proposition}[Steinitz~\cite{Steinitz1913}, Sevastjanov, Banaszczyk~\cite{SevastjanovBanasczyk1997}]
	\label{prop:steinitz}
	Let $\|\cdot\|$ be any norm, and let $\vex_1, \dots, \vex_n \in \R^d$ be such that $\|\vex_i\| \leq 1$ for $i \in [n]$ and $\sum_{i=1}^n \vex_i = \vezero$.
	Then there exists a permutation $\pi \in S_n$ such that for each $k \in [n]$, $\|\sum_{i=1}^k \vex_{\pi(i)}\| \leq d$.
\end{proposition}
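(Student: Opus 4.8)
The plan is to prove the sharp bound $d$ (rather than the elementary $2d$ that comes out of the original greedy argument of Steinitz) by building the permutation one position at a time while carrying along a fractional ``balancing certificate.'' Observe first that finding $\pi$ with $\|\sum_{i=1}^k \vex_{\pi(i)}\|\le d$ for all $k$ is equivalent to building a maximal chain of index sets $\emptyset = T_0\subset T_1\subset\cdots\subset T_n=[n]$ with $|T_k|=k$ and $\|\sum_{i\in T_k}\vex_i\|\le d$ for every $k$: the permutation is recovered by letting $\pi(k)$ be the unique element of $T_k\setminus T_{k-1}$, since the partial sum after $k$ steps depends only on the set $T_k$.

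The key reformulation is that the bound for a fixed set $T$ with $|T|=k$ follows from a fractional certificate. If there are coefficients $\lambda_i\in[0,1]$, $i\in T$, with $\sum_{i\in T}\lambda_i\vex_i=\vezero$ and $\sum_{i\in T}\lambda_i=k-d$, then
\[
\Big\|\sum_{i\in T}\vex_i\Big\| = \Big\|\sum_{i\in T}(1-\lambda_i)\vex_i\Big\| \le \sum_{i\in T}(1-\lambda_i) = k-(k-d)=d,
\]
using $1-\lambda_i\ge 0$ and $\|\vex_i\|\le 1$. So it suffices to build the chain from the top down in such a way that every $T_k$ with $k\ge d$ carries such a certificate. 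The top is immediate: for $T_n=[n]$ the uniform choice $\lambda_i=(n-d)/n$ works precisely because $\sum_i\vex_i=\vezero$. The bottom is free: once $k\le d$ one has $\|\sum_{i\in T_k}\vex_i\|\le k\le d$ automatically, so the last $d$ vectors may be ordered arbitrarily.

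The heart of the argument is the peeling step: from a set $T$ of size $k>d$ carrying a certificate, I would produce an index $j\in T$ such that $T\setminus\{j\}$ again carries a certificate (now of total weight $(k-1)-d$). To this end I would pass to a vertex $\mu$ of the polytope $\{\mu\in[0,1]^{T} : \sum_{i\in T}\mu_i\vex_i=\vezero\}$ maximizing $\sum_i\mu_i$, whose optimal value is at least $k-d$ by the inductive certificate. Since this polytope is cut out by only $d$ linear equations, a vertex has at most $d$ fractional coordinates, hence at least $k-d\ge 1$ coordinates lie in $\{0,1\}$. If some such coordinate equals $0$, then deleting that index and restricting $\mu$ immediately yields a certificate on $T\setminus\{j\}$ of the required total weight.

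The main obstacle is exactly the complementary case, where the maximizing vertex has no zero coordinate: then deleting an index $j$ with $\mu_j=1$ preserves the correct total weight but destroys the homogeneous equation $\sum\mu_i\vex_i=\vezero$ (the restriction now sums to $-\vex_j$), so the old certificate cannot simply be truncated. Here one must re-solve the linear program on $T\setminus\{j\}$ and show that a fresh certificate of total weight $(k-1)-d$ still exists; I expect this to follow from vertex sparsity together with an LP-duality or dimension-count argument establishing that discarding a single vector lowers the maximum weight attainable subject to the balance equation by at most one. It is precisely this sparsity of vertices -- at most $d$ fractional coordinates, matching the $d$ equations -- that pins the constant at $d$, whereas a cruder peeling that ignores it yields only the weaker bound $2d$.
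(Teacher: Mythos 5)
The paper does not actually prove this proposition---it is imported from the literature (Steinitz; Sevastjanov--Banaszczyk)---so there is no in-paper proof to compare against; your attempt must stand on its own, and it has a genuine gap exactly at the step you flag yourself. Your skeleton (top-down chain of sets, fractional certificates, vertex sparsity) is the standard Grinberg--Sevastjanov argument, but the peeling step is not closed: the case where the weight-maximizing vertex of $\{\mu\in[0,1]^{T}:\sum_{i\in T}\mu_i\vex_i=\vezero\}$ has no zero coordinate really does occur (for $d=1$ and $\vex_1,\dots,\vex_4=(1,1,-1,-1)$ the maximizer is the all-ones vector, which is a vertex), and the lemma you hope for there---that discarding a single well-chosen index with $\mu_j=1$ lowers the attainable weight by at most one---is not something vertex sparsity or LP duality hands you directly. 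For an \emph{arbitrary} choice of deleted index the statement is even false: with $x_1=1$, $x_2=-1$ in $\R^1$, deleting $x_1$ drops the maximum weight from $2$ to $0$. Since the whole bound hinges on this step, the proof is incomplete as written.

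The standard repair is to stop maximizing and instead pin the weight at the \emph{next} level's target, so that the problematic case cannot arise. From a set $T$ with $|T|=k>d$ carrying a certificate of weight $k-d$, scale that certificate by $\tfrac{k-1-d}{k-d}$ to see that the polytope $Q=\{\mu\in[0,1]^{T}:\ \sum_{i\in T}\mu_i\vex_i=\vezero,\ \sum_{i\in T}\mu_i=k-1-d\}$ is nonempty, and take a vertex $\mu^*$ of $Q$. Now there are $d+1$ equality constraints, so at least $k-d-1$ coordinates of $\mu^*$ lie in $\{0,1\}$. If none of them were $0$, then at least $k-d-1$ coordinates would equal $1$ while the remaining $d+1\geq 1$ coordinates would be strictly positive, forcing $\sum_{i\in T}\mu^*_i>k-1-d$ and contradicting the pinned weight. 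Hence some coordinate $\mu^*_j=0$, and deleting $j$ and restricting $\mu^*$ gives precisely the certificate needed for $T\setminus\{j\}$. With this modification---your treatment of the top level, the bottom $d$ levels, and the certificate-implies-bound computation are all fine---the induction closes and yields the constant $d$.
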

\begin{lemma} \label{lem:basic-mip-bound}
	Let $E \in \Z^{m \times (n_{\Z} + n_{\R})}$. Then every $\veg \in \G_{\X}(E)$ satisfies $\|\veg\|_1 \leq \left( 2\|E\|_\infty (2m\|E\|_\infty+1)^m +1\right)^m$.
\end{lemma}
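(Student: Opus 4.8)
The plan is to adapt the proof of~\cite[Lemma 27]{EisenbrandHunkenschroederKleinKouteckyLevinOnn19} to the mixed setting via the Steinitz lemma, Proposition~\ref{prop:steinitz}. Take any $\veg = (\veg_\Z, \veg_\R) \in \G_\X(E)$. The two types of elements require separate treatment: elements of the form $(\vezero, \veg_\R)$ with $\veg_\R \in \CC(E_\R)$ are circuits, so their support has size at most $m+1$ and a standard Cramer's-rule bound controls their entries; this case is routine. The interesting case is when $\veg_\Z \neq \vezero$. The idea is to argue by contradiction: suppose $\|\veg\|_1$ exceeds the claimed bound, and exhibit a nonzero conformal decomposition $\veg = \veg' + \veg''$ with $\veg', \veg'' \in \ker_\X(E)$ and $\veg' \neq \vezero$, contradicting $\sqsubseteq$-minimality.

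First I would set $a = \|E\|_\infty$ and, using the column vectors of $E$, form the multiset of $\|\veg\|_1$ scaled-unit-step vectors that witness $E\veg = \vezero$: for each coordinate $i$ and each unit of $|g_i|$, take the vector $\sign(g_i) E_{\bullet,i}$. For the continuous coordinates one must be slightly careful, chopping $\veg_\R$ into integer-length pieces plus a single fractional remainder per coordinate; the total count of pieces is at most $\|\veg\|_1 + n_\R$, but the extra $n_\R$ remainders can be absorbed into a single bounded vector. Each such step vector lies in $\R^m$ and has $\infty$-norm at most $a$, so after scaling by $1/a$ the Steinitz lemma applies with $d = m$: there is a reordering of the partial sums all lying within $\infty$-distance $ma$ of the origin. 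Since these partial sums are $m$-dimensional vectors with integer-or-bounded-fractional entries in a box of side $2ma$, the number of \emph{distinct} achievable partial-sum values is at most $(2ma+1)^m$ (for the integral part; the bounded fractional contributions add a controlled multiplicative factor).

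Once the number of partial sums exceeds this count, by pigeonhole two partial sums coincide, and the block of step vectors between them sums to $\vezero$ — i.e.\ it lies in $\ker(E)$. Reassembling these steps into a vector $\veg'$ gives $\veg' \sqsubseteq \veg$ with $\veg' \in \ker_\X(E)$, $\veg' \neq \vezero$, and $\veg' \neq \veg$, contradicting minimality of $\veg$. The key quantitative point is that the length of the walk, $\|\veg\|_1$, cannot exceed the number of distinct partial sums without forcing such a repetition; this yields $\|\veg\|_1 \leq \bigl(2a(2ma+1)^m + 1\bigr)^m$ after the bookkeeping, matching the stated bound.

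\textbf{The main obstacle} I anticipate is the mixed (integer/continuous) nature of the partial sums: unlike the purely integral case, the partial sums need not be lattice points, so the naive pigeonhole over $(2ma+1)^m$ integer points does not immediately apply. I would resolve this by ensuring that the step vectors arising from the integral coordinates of $\veg$ produce integral partial sums, and by handling the at-most-$n_\R$ fractional remainders from the continuous coordinates as a single aggregated correction so that the repeated-partial-sum argument operates on integer points, with the repeated block then augmented by the fractional remainder to stay in $\ker_\X(E)$. Verifying that the extracted $\veg'$ genuinely remains conformal to $\veg$ and lies in $\X$ (i.e.\ that its integral coordinates stay integral) is the delicate step that distinguishes this argument from the ordinary Graver bound, and it is exactly where the structure of $\ker_\X$ versus $\ker_{\Z^n}$ must be used.
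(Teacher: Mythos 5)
Your skeleton matches the paper's proof: build a multiset of column steps witnessing $E\veg = \vezero$ (unit steps for the rounded-toward-zero parts of the coordinates, plus one aggregated vector collecting the fractional parts), apply the Steinitz lemma, pigeonhole on repeated prefix sums, and extract a nonzero conformal decomposition in $\ker_\X(E)$ contradicting $\sqsubseteq$-minimality. Your separate treatment of the circuit-type elements $(\vezero,\veg_\R)$ with $\veg_\R \in \CC(E_\R)$ is in fact a sound refinement of the paper: such elements are never $\sqsubseteq$-minimal in $\ker_\X(E)\setminus\{\vezero\}$ (any nonzero mixed vector can be conformally halved), so the minimality contradiction genuinely does not apply to them, and a Cramer/Hadamard bound on circuits covers this case, as you say.

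There is, however, a genuine gap in the quantitative core. You assert that ``each such step vector \dots has $\infty$-norm at most $a$,'' but this is false for the single aggregated remainder vector $\veo = \sum_i \{g_i\}E_{\bullet,i}$: a priori it has one fractional contribution per continuous coordinate, so $\|\veo\|_\infty$ can be as large as $n_\R \cdot \|E\|_\infty$. Consequently your Steinitz box of side $2ma$ and the count $(2ma+1)^m$ are unjustified, and this is not a ``controlled multiplicative factor'': without an $n$-independent bound on $\|\veo\|_\infty$, the pigeonhole count, and hence the final bound on $\|\veg\|_1$, depends on $n$, which is exactly what the lemma must avoid. The missing idea is the paper's key bookkeeping step: work per column \emph{type}. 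There are at most $(2\|E\|_\infty+1)^m$ distinct columns; for each set $S$ of coordinates sharing a column, round the aggregate $u = \sum_{j\in S} g_j$ (not each $g_j$ individually) into unit steps, so that only one fractional part $\{u\}$, with $|\{u\}|<1$, per column type contributes to $\veo$. This gives $\|\veo\|_\infty \leq \|E\|_\infty(2\|E\|_\infty+1)^m$ independently of $n$, and Steinitz then confines all prefix sums to a box of half-side $m\|E\|_\infty(2\|E\|_\infty+1)^m$. A second, easily repaired imprecision: $\veo$ is in fact integral (it is the negation of a sum of integer columns, since $E\veg=\vezero$ and $E$ is integral), so all prefix sums are integer points and the clean pigeonhole applies; your two-class ``integer-or-bounded-fractional'' hedge is unnecessary. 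The remainder of your argument (extracting the block between two equal prefix sums, conformality of both pieces, integrality of the $\Z$-coordinates of the pieces, and the fact that exactly one piece carries $\veo$) is correct and coincides with the paper.
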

\begin{proof}
	Let $\veg \in \G_{\X}(E)$ and assume that all columns of $E$ are distinct; we will show how to deal with doubled columns later.
	We define a sequence of vectors in the following manner:
	If $g_i \geq 0$, we add $\floor{g_i}$ copies of the $i$-th column of $E$ to the sequence,
	if $g_i < 0$ we add $|\ceil{g_i}|$ copies of the negation of column $i$ to the sequence.
	Thus, for each $i \in [n]$, we obtained vectors $\vev_1^i, \dots, \vev_{\lfloor g_i \rceil}^i$.
	Finally, we add the vector $\veo = \sum_{i=1}^n \{g_i\} E_{\bullet,i}$ to the sequence. Notice that this vector is integral.
	Let $q$ be the number of vectors in this sequence.
	
	Clearly, the sequence of vectors sums up to $\vezero$ as it exactly corresponds to $E \veg$ and $\veg \in \ker_{\X}(E)$.
	Moreover, their $\ell_\infty$-norm is bounded by $\|E\|_\infty (2\|E\|_\infty+1)^m$ since there are at most $(2\|E\|_\infty+1)^m$ distinct columns, $\|E\|_\infty$ is the largest number appearing in any of them, and this is an upper bound on any number appearing in $\veo = \sum_{i=1}^n \{g_i\} E_{\bullet,i}$.
	The remaining vectors $\vev^i_j$ are bounded by $\|E\|_\infty$ in $\ell_\infty$-norm.
	
	Using the Steinitz Lemma, there is a reordering $\veu^1,\dots,\veu^{q}$
	of this sequence such that
	each prefix sum $\vep_k \df \sum_{j=1}^k \veu^j$
	is bounded by $m\|E\|_\infty (2\|E\|_\infty+1)^m$ in the $l_\infty$-norm. Clearly,
	\begin{displaymath}
		\big| \{ \vex \in \Z^m \mid \|\vex\|_{\infty} \leq m \|E\|_\infty (2\|E\|_\infty+1)^m \} \big| = \left(2m\|E\|_\infty (2\|E\|_\infty+1)^m +1 \right)^m =: P.
	\end{displaymath}
	Assume for contradiction that $\|\veg\|_1 > P$. Then two of these prefix sums are the same, say, $\vep_{\alpha} = \vep_{\beta}$ with $1 \leq \alpha < \beta \leq q$.
	Obtain a vector $\veg'$ from the sequence $\veu^1, \dots, \veu^\alpha, \veu^{\beta+1}, \dots, \veu^{\|\veg\|_1}$ as follows: begin with $g'_i \df 0$ for each $i \in [n]$, and for every $\veu^\ell$ in the sequence, set
	\[g'_i \df
	\begin{cases}
		g'_i + 1 & \text{ if } \pi^{-1}(\ell) = (i,j) \text{ and } g_i \geq 0 \\
		g'_i - 1 & \text{ if } \pi^{-1}(\ell) = (i,j) \text{ and } g_i < 0 \\
		g'_i + \{g_i\} & \text{ if } \veu^\ell = \veo\text{, for each $i \in [n]$} \enspace .
	\end{cases}
	\]
	Similarly obtain $\veg''$ from the sequence $\veu^{\alpha+1} \dots, \veu^{\beta}$.
	We have $E\veg'' = \vezero$, as $\vep_\alpha - \vep_\beta = \vezero$ and thus, $\veg'' \in \ker_{\X}(E)$ and hence, $\veg' \in \ker_{\X}(E)$.
	Moreover, both $\veg'$ and $\veg''$ are non-zero and satisfy $\veg', \veg'' \sqsubseteq \veg$.
	This is a contradiction with $\sqsubseteq$-minimality of $\veg$ which is a condition needed for $\veg \in \G_{\X}(E)$, hence $\|\veg\|_1 \leq P$.
	Notice that only one of $\veg'$ or $\veg''$ may be fractional, as $\veo$ will be in exactly one subsequence.
	
	We are left to deal with the situation that $E$ contains doubled columns.
	The solution is to adjust the construction of the sequence accordingly.
	Fix a column $E_{\bullet,i}$ and let $S$ be the set of all indices $j$ such that $E_{\bullet,i} = E_{\bullet, j}$.
	Let $u = \sum_{j \in S} g_j$.
	If $u > 0$, add $\floor{u}$ copies of $E_{\bullet,i}$ into the sequence, else add $|\ceil{u}|$ copies of $-E_{\bullet,i}$ into the sequence.
	The contribution of this column type to $\veo$ will be $\{u\} E_{\bullet, i}$. Since $-1 < \{u\} < 1$ for each column type, and the number of column types is bounded by $(2\|E\|_\infty+1)^m$, our previous arguments hold.
\end{proof}
The proof of the above Lemma actually shows that there exists a particular decomposition of every element of $\ker_{\X}(E)$ into an element of $\ker_{\Z^n}(E)$ (which can be further decomposed into elements of $\G(E)$) and one element of $\ker_{\X}(E)$, which we can bound.
This mixed element might not be an element of $\G_{\X}(E)$, and a bound on the elements of $\G_{\X}(E)$ does not imply a bound on this element.
We crucially need this property in our proximity bound and the bounds on $\G_{\X}(E)$ for $2$-stage matrices, as well as the prospect of extending these to multi-stage matrices.
Thus, this emerges as an important feature:
\begin{definition}[One-fat decomposition, bound]
	Let $\vex \in \ker_{\X}(E)$.
	We say that $\vex = \veh + \veg$ is a \emph{one-fat decomposition} if it is a conformal decomposition, $\veh \in \ker_{\X}(E)$ and $\veg \in \ker_{\Z^n}(E)$, and we call $\veh$ the \emph{fat element} of the decomposition.
	For every $p$, $1 \leq p \leq \infty$, define $\wt_{p}^{\X}(\vex) = \min \|\veh\|_p$, where the minimum goes over all one-fat decompositions of $\vex$.
	Define the \emph{$\ell_p$-weight of $E$ with respect to $\X$} as $\wt^{\X}_{p}(E) = \max_{\vex \in \ker_{\X}(E)} \wt_{p}^{\X}(\vex)$.
\end{definition}
\begin{corollary}\label{cor:fat-decomp}
	For any matrix $E$, $\wt_1^{\X}(E) \leq \left( 2m\|E\|_\infty (2\|E\|_\infty+1)^m +1\right)^m$.
\end{corollary}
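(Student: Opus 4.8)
The plan is to extract the desired one-fat decomposition directly from the construction already carried out in the proof of Lemma~\ref{lem:basic-mip-bound}, rather than proving anything new from scratch. Fix an arbitrary $\vex \in \ker_{\X}(E)$. First I would run exactly the same Steinitz-based argument on $\vex$ that the lemma ran on a mixed Graver element: build the sequence of columns by taking $\lfloor x_i \rceil$ (signed) copies of each column $E_{\bullet,i}$ together with the single integral ``leftover'' vector $\veo = \sum_{i=1}^n \{x_i\} E_{\bullet,i}$, observe that the sequence sums to $\vezero$ since $\vex \in \ker(E)$, reorder it by the Steinitz Lemma so that every prefix sum has $\infty$-norm at most $m\|E\|_\infty(2\|E\|_\infty+1)^m$, and hence conclude that among any $P+1$ prefix sums two coincide, where $P = \left(2m\|E\|_\infty(2\|E\|_\infty+1)^m+1\right)^m$.

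The key observation, which the paragraph following the lemma already flags, is that the vector $\veo$ encoding the fractional parts lands in \emph{exactly one} of the two subsequences produced by a coinciding pair of prefix sums $\vep_\alpha=\vep_\beta$. Translating these two subsequences back into vectors $\veg',\veg''$ as in the lemma, one of them is purely integral (an element of $\ker_{\Z^n}(E)$) and the other carries all the fractionality; moreover both are conformal to $\vex$ and the two together reconstruct $\vex$. Thus I would repeatedly apply this cancellation step to peel off integral, conformal pieces: as long as the fractional part that still carries $\veo$ has $1$-norm exceeding $P$, a repeated prefix sum exists and I can split off a nonzero integral conformal summand. I would argue that this process terminates, leaving a single ``fat'' conformal remainder $\veh\in\ker_{\X}(E)$ with $\|\veh\|_1\le P$, while all the integral pieces collected along the way sum conformally to some $\veg\in\ker_{\Z^n}(E)$; conformality is preserved under summation of mutually conformal vectors lying in the same orthant, so $\vex=\veh+\veg$ is a genuine one-fat decomposition.

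The main obstacle is making the termination and the repeated-splitting bookkeeping rigorous: the lemma as stated only guarantees that if $\|\veg\|_1>P$ \emph{then} a repeat occurs, so I must be careful that stripping off the integral part $\veg'$ leaves a well-defined smaller element of $\ker_{\X}(E)$ still containing $\veo$, and that the $1$-norm strictly decreases at each step so the recursion halts with an $\infty$-norm-controlled fat piece. The cleanest way to sidestep an explicit induction is to invoke $\sqsubseteq$-minimality implicitly: one shows that \emph{any} $\vex\in\ker_{\X}(E)$ admits a conformal decomposition into a single mixed (fat) element plus purely integral elements, and that the fat element can be taken with $\|\veh\|_1\le P$ by the same contradiction argument the lemma uses, since if the fat part itself exceeded $P$ a Steinitz repeat would let us shrink it while preserving both kernel membership and the presence of $\veo$. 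Given all of this, the bound $\wt_1^{\X}(E)\le\left(2m\|E\|_\infty(2\|E\|_\infty+1)^m+1\right)^m$ follows by taking the maximum over $\vex$, and the doubled-columns caveat is handled exactly as in the lemma's final paragraph by grouping equal columns into column types before building the sequence.
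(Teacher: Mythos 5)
Your proposal is correct and follows essentially the same route as the paper: the paper's own (very terse) proof likewise iterates the splitting from Lemma~\ref{lem:basic-mip-bound}, using the observation that the fractional-parts vector $\veo$ lies in exactly one subsequence, so each split peels off a nonzero integral conformal kernel element until the remaining fat piece has $1$-norm at most the stated bound. Your additional care about termination (the $1$-norm drops by at least one per split) and conformality of the accumulated integral part just makes explicit what the paper leaves implicit.
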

\begin{proof}
	Note that if $\vex \in \ker_{\X}(E)$ is decomposable, then it has a decomposition into $\veg', \veg''$, only one of which is fractional.
	Iterating this, we obtain the decomposition of $\vex$ into several elements of $\G_{\Z^n}(E)$, and one element of $\ker_{\X}(E)$ which is bounded as stated.
\end{proof}

We will obtain a better bound on both $g_1^{\X}(E)$ and the $\ell_1$-weight of $E$, using a recent result:
\begin{proposition}[{\cite[Lemma 1]{PaatWW20}}]
	Let $\vex^1, \dots, \vex^n \in \Z^d$ and $\alpha_1, \dots, \alpha_n \in \R_+$ such that $\sum_{i=1}^n \alpha_i \vex^i \in \Z^d$.
	If $\sum_{i=1}^n \alpha_i > d$, then there exist numbers $\beta_1, \dots, \beta_n \in \R_+$ such that, for all $i \in [n]$, $\beta_i \leq \alpha_i$ and $\sum_{i=1}^n \beta_i \leq d$, and $\sum_{i=1}^n \beta_i \vex^i \in \Z^d$.
\end{proposition}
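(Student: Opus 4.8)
The plan is to read the statement as a Carathéodory-type bound: the coefficients $\alpha_i$ exhibit the lattice point $\vew \df \sum_i \alpha_i \vex^i$ as a nonnegative combination of the integer vectors $\vex^i$, and I want to extract a sub-combination $\beta \le \alpha$ whose value is still a lattice point but whose total coefficient mass has been pushed down to at most $d$. The engine is the elementary polyhedral fact that a point of $\R^n$ cut out by the $d$ linear equations $\sum_i \beta_i \vex^i = \vez$ for a fixed target $\vez$, together with a box $0 \le \beta_i \le c_i$, sits at a vertex with at most $\operatorname{rank}\{\vex^1,\dots,\vex^n\} \le d$ coordinates strictly inside their interval; every other coordinate is pinned to one of its two bounds.

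First I would strip the integer parts. Writing $\alpha_i = \lfloor \alpha_i \rfloor + \{\alpha_i\}$, the fractional-part vector satisfies $\sum_i \{\alpha_i\} \vex^i = \vew - \sum_i \lfloor \alpha_i \rfloor \vex^i \in \Z^d$ and $\{\alpha_i\} \le \alpha_i$, so it suffices to treat the normalized instance with $\alpha \in [0,1)^n$; any admissible $\beta \le \alpha$ then has every coordinate below $1$ automatically. In this regime I would take a vertex $\beta^*$ of a suitable box-plus-equality polytope: all but at most $d$ of its coordinates are at a bound, and the aim is to arrange that the pinned coordinates go to the \emph{lower} bound $0$ rather than to the fractional upper bound, changing the value only by an integral vector. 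The at most $d$ surviving coordinates each lie in $[0,1)$, so their sum is below $d$; this yields simultaneously $\beta \le \alpha$, $\sum_i \beta_i \le d$, and $\sum_i \beta_i \vex^i \in \Z^d$.

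The delicate point --- and the step I expect to be the genuine obstacle --- is precisely this coupling between the box upper bounds and integrality. A naive vertex of $\{\, 0 \le \beta \le \alpha : \sum_i \beta_i \vex^i = \vew \,\}$ that keeps the value fixed at $\vew$ can be forced to pin many coordinates at fractional upper bounds, the extreme case being all $\vex^i$ equal, where the single value constraint rigidly freezes the total mass and nothing can be removed. Hence the value $\vew$ must be allowed to float to a nearby lattice point, and the work is in showing that one can always retreat a coordinate to $0$ (rather than to its fractional upper bound) at the cost of an integral shift of the combination's value, thereby converting the bound ``at most $d$ interior coordinates'' into the desired bound ``total mass at most $d$''. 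The hypothesis $\sum_i \alpha_i > d$ is exactly what certifies that there is surplus mass to shed; an alternative, in the Steinitz-based spirit of Lemma~\ref{lem:basic-mip-bound} and Proposition~\ref{prop:steinitz}, is to reorder the contributions $\alpha_i \vex^i$ and cut the sequence where a partial sum returns to the lattice, but the same upper-bound bookkeeping reappears there.
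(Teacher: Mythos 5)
You should know at the outset that the paper contains no proof of this Proposition to compare against: it is imported verbatim from \cite[Lemma 1]{PaatWW20}, and only that reference proves it. Judging your attempt on its own, two preliminary remarks. First, as transcribed (reading $\R_+$ as the nonnegative reals), the statement is vacuous: $\vebeta = \vezero$ satisfies all three conclusions. The substance --- which the iteration behind the Packing Lemma (Lemma~\ref{lem:packing}) requires and which \cite{PaatWW20} actually proves --- is that one can additionally take $\vebeta \neq \vezero$; your plan rightly aims at this nontrivial version, so I judge it against that. Second, your normalization step is not sound as stated: after replacing $\vealpha$ by its vector of fractional parts, the hypothesis $\sum_i \alpha_i > d$ may fail. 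The repair is easy (if some $\alpha_i \geq 1$, then $\beta_i = 1$ and $\beta_j = 0$ for $j \neq i$ is already a valid nonzero solution), but it must be said; afterwards one may assume $0 < \alpha_i < 1$ for all $i$.

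The genuine gap is the one you flag yourself: ``the work is in showing that one can always retreat a coordinate to $0$ \ldots{} at the cost of an integral shift of the combination's value.'' That sentence is not a loose end to be tidied; it \emph{is} the lemma, and nothing in your proposal proves it. To locate the missing content precisely, write $\vew = \sum_i \alpha_i \vex^i$ and $Q = \{\vemu : \vezero \leq \vemu \leq \vealpha,\ \sum_i \mu_i \vex^i = \vew\}$. If $Q$ contains any point $\vemu \neq \vealpha$, no vertex argument is needed at all: $\vedelta = \vealpha - \vemu$ is a nonzero nonnegative kernel element, and scaling it by $\min\{1, d/\sum_i \delta_i\}$ preserves its value $\vezero \in \Z^d$ while forcing mass at most $d$. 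So the entire difficulty sits in the case $Q = \{\vealpha\}$, and by Gordan's theorem that case occurs exactly when all the $\vex^i$ lie in a common open halfspace --- your ``extreme case'' of all $\vex^i$ equal is its simplest instance, not a pathological corner. In that case every admissible nonzero $\vebeta$ must have $\sum_i \beta_i \vex^i$ equal to a lattice point different from both $\vezero$ and $\vew$: value $\vezero$ forces $\vebeta = \vezero$ because the kernel meets the nonnegative orthant trivially, and value $\vew$ forces $\vebeta = \vealpha$, whose mass exceeds $d$. Hence ``letting the value float to a nearby lattice point'' is not a refinement of the vertex argument but the unavoidable core task: one must \emph{produce} a third reachable lattice point with a representation of mass at most $d$, and neither the vertex description of box-constrained polytopes nor the Steinitz-style reordering you offer as an alternative yields one (as you concede). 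That existence argument is exactly what the proof in \cite{PaatWW20} supplies; without reproducing it or citing it, your proposal is a correct diagnosis of the difficulty rather than a proof.
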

An iterated use of this lemma gives rise to the following statement:
\begin{lemma}[Packing Lemma] \label{lem:packing}
	Let $\vex^1, \dots, \vex^n \in \Z^d$ and $\vealpha = (\alpha_1, \dots, \alpha_n) \in \R^n_+$ such that $\sum_{i=1}^n \alpha_i \vex^i \in \Z^d$.
	If $\sum_{i=1}^n \alpha_i > d$, there exist vectors $\vebeta^1, \dots, \vebeta^m \in \R^n_+$ such that, for each $j \in [m]$, $\vebeta^j \leq \vealpha$, $\sum_{i=1}^n \beta^j_i \vex^i \in \Z^d$, $\|\vebeta^j\|_1 \leq d$, and $\sum_{j=1}^m \vebeta^j = \vealpha$. Moreover, for all but at most one $j \in [m]$, $\|\vebeta^j\|_1 \geq d/2$.
\end{lemma}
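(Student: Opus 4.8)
The plan is to iterate the Proposition (Paat--Weismantel--Weltge, the displayed \enquote{Lemma 1}) so as to peel integral-combination pieces of bounded $\ell_1$-mass off of $\vealpha$, and then to run a merging post-processing step that forces all but one of the resulting pieces to have $\ell_1$-mass at least $d/2$. Throughout, call a vector $\vedelta \in \R^n_+$ an \emph{integral combination} if $\sum_i \delta_i \vex^i \in \Z^d$; the set of integral combinations is closed under addition, and the difference of two integral combinations is again an integral combination.

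First I would set up the peeling. Maintain a residual $\vegamma \in \R^n_+$, initially $\vegamma \df \vealpha$, preserving the invariant that $\vegamma$ is an integral combination; this holds initially by hypothesis and is preserved because each piece subtracted is itself an integral combination. While $\|\vegamma\|_1 > d$, I apply the Proposition to $\vegamma$ (legitimate, since $\sum_i \gamma_i \vex^i \in \Z^d$ and $\sum_i \gamma_i > d$) to obtain $\vebeta \le \vegamma$ with $\vebeta \in \R^n_+$, $\sum_i \beta_i \vex^i \in \Z^d$, and $\|\vebeta\|_1 \le d$; I record $\vebeta$ as an output piece and update $\vegamma \df \vegamma - \vebeta$. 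When the loop terminates, $\|\vegamma\|_1 \le d$, and I record the final $\vegamma$ as the last piece. By construction every recorded piece is nonnegative, is an integral combination, has $\ell_1$-mass at most $d$, and the recorded pieces sum to $\vealpha$; nonnegativity together with $\sum_j \vebeta^j = \vealpha$ then gives $\vebeta^j \le \vealpha$ for each $j$. This already establishes every conclusion of the lemma except the lower bound $\|\vebeta^j\|_1 \ge d/2$.

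To obtain that refinement I would post-process by merging. If two recorded pieces $\vebeta^{j}$ and $\vebeta^{k}$ both satisfy $\|\cdot\|_1 < d/2$, then $\vebeta^{j} + \vebeta^{k}$ is again nonnegative, is an integral combination, is $\le \vealpha$, and has $\ell_1$-mass strictly below $d/2 + d/2 = d$; hence replacing the two pieces by their sum preserves all the already-established properties while strictly decreasing the number of pieces of mass below $d/2$. Repeating this merge until at most one small piece remains yields $\vebeta^1, \dots, \vebeta^m$ with all but at most one satisfying $\|\vebeta^j\|_1 \ge d/2$; note that it is precisely the threshold $d/2$ paired with the cap $d$ that makes the merge self-consistent. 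The main obstacle is ensuring the peeling halts after finitely many steps, since the Proposition bounds each extracted $\vebeta$ only from above by $d$, not from below, so a priori the masses could shrink without the residual ever dropping to $\le d$. I would close this gap by arranging that, whenever $\|\vegamma\|_1 > d$, the peeled piece has mass in $[d/2,d]$: if the Proposition's $\vebeta$ already has mass $\ge d/2$ we keep it, and otherwise the complement $\vegamma - \vebeta$ is an integral combination of mass exceeding $d/2$, which we either peel directly (if its mass is $\le d$, ending the loop with a residual of mass $< d/2$) or feed back into the Proposition, observing that any single returned piece of mass $\ge d/2$ may be peeled on its own. Each peel then removes mass at least $d/2$, so the number of pieces is at most $2\lceil \|\vealpha\|_1/d \rceil$, which guarantees termination and simultaneously delivers the $\ge d/2$ conclusion with the final residual as the unique possibly-small piece; the merging argument above serves as the clean way to package this last step.
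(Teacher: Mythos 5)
Your first two steps---repeatedly applying the Proposition to peel off integral-combination pieces of $\ell_1$-mass at most $d$, then merging any two pieces of mass below $d/2$---are exactly the paper's proof: its proof consists of nothing but the merging observation, explicitly declaring the iterated peeling to be the obvious part. Up to that point your proposal is correct and identical in approach.

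The divergence is your termination discussion, and there the proposal has a genuine gap---one you created for yourself by correctly noticing a subtlety that the paper silently ignores. The worry is real: the Proposition as restated gives no lower bound on the mass of the returned piece (as stated it does not even exclude $\vebeta=\vezero$), so a black-box iteration need not halt. But your repair does not close the gap. In the branch where $\vebeta$ is small and $\vegamma-\vebeta$ still has mass $>d$, you feed the complement back into the Proposition, ``observing that any single returned piece of mass $\geq d/2$ may be peeled on its own''---yet nothing forces such a piece to ever appear. If the Proposition returns pieces of masses $2^{-1},2^{-2},2^{-3},\dots$, the total mass removed stays below $1$, so starting from a residual of mass $d+2$ the residual never drops to $d$, no returned piece ever has mass $\geq d/2$, and the loop runs forever; merging the accumulating small pieces does not help, since their combined mass also stays below $d/2$. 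Thus your claim that ``each peel removes mass at least $d/2$'' is unjustified in precisely the problematic branch. The missing idea is an extremal choice. The set $S=\{\vemu\in\R^n_+ : \vemu\leq\vegamma,\ \sum_{i=1}^n \mu_i\vex^i\in\Z^d,\ \|\vemu\|_1\leq d\}$ is compact (a box intersected with the closed preimage of the lattice $\Z^d$ under a linear map), so pick $\vemu^*\in S$ of maximum mass. If $\|\vemu^*\|_1\geq d/2$, peel it. Otherwise, either $\|\vegamma-\vemu^*\|_1\leq d$, and the residual splits into the two final pieces $\vegamma-\vemu^*$ (of mass $>d/2$) and $\vemu^*$ (the unique small one); or $\|\vegamma-\vemu^*\|_1>d$, and applying the Proposition to $\vegamma-\vemu^*$ yields $\vebeta'$ with $\|\vebeta'\|_1>0$ (this non-triviality is needed here; it is implicit in the original result of Paat, Weismantel and Weltge, which is vacuous without it, though missing from the restatement above). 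Then $\vemu^*+\vebeta'$ would lie in $S$ with larger mass than $\vemu^*$ unless $\|\vemu^*\|_1+\|\vebeta'\|_1>d$, so maximality forces $\|\vebeta'\|_1>d-\|\vemu^*\|_1>d/2$, and $\vebeta'$ can be peeled. Now every peel genuinely removes mass at least $d/2$, so the iteration stops after at most roughly $2\|\vealpha\|_1/d$ steps, and at most one small piece is ever produced.
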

\begin{proof}
	The only potentially non-obvious part is the last sentence of the statement.
	Notice that if there are $\vebeta^j$ and $\vebeta^{j'}$, $j \neq j'$, with $\|\vebeta^j\|_1, \|\vebeta^{j'}\|_1 \leq d/2$, then we can merge them.
	Formally, we set $\vebeta^{j} := \vebeta^j + \vebeta^{j'}$, and delete $\vebeta^{j'}$.
\end{proof}
Intuitively, the lemma allows us to take a non-negative linear combination of integer vectors whose result is an integer vector, and divide it into smaller such combinations while preserving the property that each smaller combination still results in an integer vector.

\begin{lemma} \label{lem:improved-basic-bound}
	Let $E \in \Z^{m \times (n_{\Z} + n_{\R})}$. 
	Then $g_1^{\X}(E) \leq (2m^2\|E\|_\infty+1)^{m+1}$ and $\wt^{\X}_1 \leq (2m^2\|E\|_\infty+1)^{2m+2}$.
\end{lemma}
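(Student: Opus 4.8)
The plan is to improve the bounds from Lemma~\ref{lem:basic-mip-bound} and Corollary~\ref{cor:fat-decomp} by replacing the crude counting of columns (which contributed the problematic factor $(2\|E\|_\infty+1)^m$) with the Packing Lemma. The key observation is that in the proof of Lemma~\ref{lem:basic-mip-bound}, the large $\ell_\infty$-norm of the sequence came entirely from the ``fat'' integral vector $\veo = \sum_{i=1}^n \{g_i\} E_{\bullet,i}$ aggregating all the fractional parts, whereas the individual integer-column copies $\vev^i_j$ have $\ell_\infty$-norm only $\|E\|_\infty$. So first I would apply the Packing Lemma to the fractional data alone: group the columns by column type, let $\alpha_i$ be the total fractional weight $\{u_i\}$ on each column type $i$, and write $\sum_i \alpha_i E_{\bullet,i} \in \Z^m$ (this sum is exactly $\veo$). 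The Packing Lemma with $d = m$ then splits this into pieces $\vebeta^1,\dots,\vebeta^M$, each of $\ell_1$-norm at most $m$, each producing an \emph{integral} vector $\sum_i \beta^j_i E_{\bullet,i}$ of $\ell_\infty$-norm at most $m\|E\|_\infty$.

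Next I would feed a cleverly chosen \emph{single} fat piece together with the integer columns into the Steinitz argument. Concretely, using the ``at most one piece of small norm'' clause of the Packing Lemma, I can ensure at most one fat vector $\veo'$ survives with $\ell_\infty$-norm bounded by $m\|E\|_\infty$, while all remaining fractional mass is absorbed into integer-column increments that obey the conformality constraints of $\veg$. Now every vector in the Steinitz sequence has $\ell_\infty$-norm at most $m\|E\|_\infty$, rather than the previous $\|E\|_\infty(2\|E\|_\infty+1)^m$. The Steinitz Lemma (Proposition~\ref{prop:steinitz}) then bounds each prefix sum in $\ell_\infty$-norm by $m \cdot m\|E\|_\infty = m^2\|E\|_\infty$, so the number of distinct lattice points a prefix sum can occupy is
\[
  \big|\{\vex \in \Z^m : \|\vex\|_\infty \leq m^2\|E\|_\infty\}\big| = (2m^2\|E\|_\infty+1)^m.
\]
Running the same pigeonhole/contradiction argument as in Lemma~\ref{lem:basic-mip-bound} (two equal prefix sums yield a conformal splitting $\veg',\veg''$ contradicting $\sqsubseteq$-minimality, with only one piece fractional) gives $g_1^{\X}(E) \le (2m^2\|E\|_\infty+1)^{m} \cdot (2m^2\|E\|_\infty+1) = (2m^2\|E\|_\infty+1)^{m+1}$, where the extra factor accounts for the fat vector $\veo'$ contributing up to $m$ to $\|\veg\|_1$ via its fractional coordinates. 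For the weight bound, I would mirror the reasoning of Corollary~\ref{cor:fat-decomp}: iteratively peel off integral $\G(E)$-elements, each of $\ell_1$-norm at most $g_1^{\X}(E)$, until only one fat $\ker_{\X}(E)$-element remains; bounding the number of peels times $g_1^{\X}(E)$ yields $\wt_1^{\X}(E) \le (2m^2\|E\|_\infty+1)^{2m+2}$, essentially squaring the Graver bound.

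The main obstacle I anticipate is the careful bookkeeping in the Packing-Lemma step, namely simultaneously respecting \emph{two} constraints: the integrality of each packed combination $\sum_i \beta^j_i E_{\bullet,i}$ \emph{and} conformality with the target $\veg$ (so that the resulting $\veg',\veg''$ are genuinely conformal and can be used against $\sqsubseteq$-minimality). One must verify that distributing the fractional weights $\beta^j_i$ across the subsequences does not push any coordinate of the partial reconstructions outside the sign/magnitude pattern of $\veg$, which is where the ``only one fractional piece'' clause and the precise $d/2$ threshold in Lemma~\ref{lem:packing} are needed. A secondary subtlety is confirming that the fat vector $\veo'$ of $\ell_\infty$-norm $\le m\|E\|_\infty$ really can be treated as a \emph{single} Steinitz element of norm $\le m\|E\|_\infty$ (after rescaling so each element has norm $\le m\|E\|_\infty$) rather than inflating the norm bound; this is what keeps the Steinitz dimension-factor multiplying $m\|E\|_\infty$ instead of the exponential column-count, and is the crux of the improvement over Lemma~\ref{lem:basic-mip-bound}.
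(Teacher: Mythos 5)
Your bound on $g_1^{\X}(E)$ follows essentially the paper's route and is sound: build the integer-column copies as in Lemma~\ref{lem:basic-mip-bound}, apply the Packing Lemma to the fractional weights grouped by column type, insert the resulting integral combinations into the Steinitz sequence (every element now has $\ell_\infty$-norm at most $m\|E\|_\infty$), and pigeonhole over the $(2m^2\|E\|_\infty+1)^m$ possible prefix sums. A repeated prefix sum yields a splitting of $\veg$ into two nonzero conformal elements of $\ker_{\X}(E)$, which contradicts $\sqsubseteq$-minimality \emph{regardless} of how many of the pieces are fractional, so the bound $(2m^2\|E\|_\infty+1)^{m+1}$ goes through.

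The genuine gap is your claim that ``at most one fat vector $\veo'$ survives,'' and it is load-bearing exactly where your argument for $\wt_1^{\X}(E)$ begins. The clause of Lemma~\ref{lem:packing} that all but one $\vebeta^j$ satisfies $\|\vebeta^j\|_1 \geq d/2$ only serves to bound the \emph{number} of pieces, $\ell \leq 2(2\|E\|_\infty+1)^m/m + 1$; it does not reduce the fat pieces to one. Each piece $\sum_i \beta^j_i \vex^i$ is an integral vector but a genuinely fractional combination of columns, and it cannot be ``absorbed into integer-column increments'': in reconstructing a conformal summand of $\veg$ from a subsequence, a fat piece contributes the fractional amounts $\beta^j_i$ to the coordinates, which integer column copies cannot emulate. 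So the Steinitz sequence contains up to $\ell$ fat vectors---potentially exponentially many in $m$---and when two equal prefix sums split $\veg$ into $\veg'+\veg''$, fat vectors can land in both parts, so \emph{both} parts may be mixed. This is precisely the failure mode the paper flags; it means your peeling iteration has no guarantee of ever producing an integral piece to peel, and hence bounds nothing. The paper closes the gap with a stronger pigeonhole: once $\|\veg\|_1 > \ell\cdot(2m^2\|E\|_\infty+1)^m$, some prefix sum must repeat $\ell+1$ times, giving $\ell+1$ conformal zero-sum subsequences; since there are only $\ell$ fat vectors, at least one subsequence is fat-free and yields an \emph{integral} kernel element. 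Iterating, the mixed mass stays confined to at most $\ell$ pieces of $1$-norm at most $m\|E\|_\infty(2m^2\|E\|_\infty+1)^m$ each, and their sum is the fat element, of norm at most $\ell\cdot m\|E\|_\infty(2m^2\|E\|_\infty+1)^m \leq (2m^2\|E\|_\infty+1)^{2m+2}$. That factor of $\ell$ is where the exponent $2m+2$ actually comes from; your ``squaring the Graver bound'' step has no mechanism behind it.
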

\begin{proof}
	
	Construct the vectors $\vev^i_j$ as in the proof of Lemma~\ref{lem:basic-mip-bound}.
	Now we want to get to the setting of the Packing Lemma, so we need to define a collection of vectors and a corresponding vector of coefficients.
	We are left to deal with the fractional parts of each coordinate.
	Thus, we have, for each $i \in [n]$, a non-negative coefficient $\alpha_i = |\{g_i\}|$ and a vector $\vex^i = \sign(g_i) A_{\bullet,i}$.
	Applying Lemma~\ref{lem:packing}, we obtain coefficient vectors $\beta^i, \dots, \beta^{\ell}$ such that, for each $j \in [\ell]$, $\sum_{i=1}^n \beta^j_i \vex_i$ is an integer vector and has $\infty$-norm at most $m$.
	Now, notice that $\|\vealpha\|_1 \leq (2\|E\|_\infty + 1)^m$ because there are at most this many distinct columns and each coordinate of $\vealpha$ is smaller than $1$.
	Since all but at most one $\vebeta^j$ satisfy $\|\vebeta^j\|_1 \geq m/2$, we have that $\ell \leq 2\left((2\|E\|_\infty + 1)^m\right) / m +1 \leq (4 m \|E\|_\infty + 2)^m / m$.
	So, we add, for each $j \in [\ell]$, the vector $\sum_{i=1}^n \beta^j_i \vex^i$ into the sequence.
	
	Now we use the Steinitz Lemma on the sequence $\veu^1, \dots \veu^p$.
	Since each vector in our sequence is bounded by $m \|E\|_\infty$ in $\infty$-norm, we have that unless there are at most $(2m^2 \|E\|_\infty+1)^m$ vectors in the sequence, some prefix sum repeats, the original vector $\veg$ can be decomposed, and thus every $\veg \in \G_{\X}(E)$ is bounded by $m\|E\|_\infty (2 m^2 \|E\|_\infty+1)^m$ in $1$-norm.
	However, this does \emph{not} yield a one-fat decomposition, because both the resulting vectors $\veg', \veg''$ may be mixed (non-integer).
	
	We deal with this as follows.
	Assume that some prefix sum repeats $\ell+1$ times, that is, there are indices $\gamma_1, \dots, \gamma_{\ell+1}$ such that $\vep_{\gamma_1} = \cdots = \vep_{\gamma_{\ell+1}}$.
	This means that, for each $j \in [\ell]$, the vectors in the subsequence $\veu^{\gamma_j}, \dots, \veu^{\gamma_{j+1}-1}$ sum up to $\vezero$, and a vector $\veh^j \in \ker_{\X}(E)$ can be constructed from them with the same procedure as in the proof of Lemma~\ref{lem:basic-mip-bound}.
	The same holds for the subsequence $\veu^{\gamma_{\ell+1}}, \dots, \veu^p, \veu^1, \dots, \veu^{\gamma_1 - 1}$, yielding $\veh^{\ell+1}$.
	However, because only $\ell$ of the vectors in the sequence originated by using the Packing Lemma, one of the subsequences above contains none of these vectors, and thus the corresponding vector $\veh^j$ is integral.
	This implies that if $\|\veg\|_1 > \ell \cdot (2m^2\|E\|_\infty+1)^{m}$, then $\veg$ can be decomposed into a $\ker_{\Z^n}(E)$ element and at most $\ell$ elements of $\ker_{\X}(E)$, each of them bounded by $m\|E\|_\infty (2 m^2 \|E\|_\infty+1)^m$ in $1$-norm.
	Summing all of these mixed elements up, we obtain one ``fat'' element $\veh$ with $\|\veh\|_1 \leq \ell \cdot m\|E\|_\infty (2 m^2 \|E\|_\infty+1)^m \leq (2m^2 \|E\|_\infty)^{2m+2}$.
\end{proof}

The one-fat decomposition also allows us to prove a bound on the distance between an integer and mixed optimum, which we will use in both of our algorithmic results:
\begin{lemma}[MIP Proximity] \label{lem:proximity}
	Let $\vez^* \in \Z^n$ be an integer optimum of a~\eqref{MIP} instance, and let $\vex^*$ be a mixed optimum closest to $\vez^*$ in $\ell_p$-norm, $1 \leq p \leq \infty$.
	Then $\|\vez^* - \vex^*\|_p \leq {\wt}^{\X}_p(E)$.
\end{lemma}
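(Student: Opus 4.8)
The plan is to establish the MIP Proximity lemma by exhibiting a one-fat decomposition of the difference $\vez^* - \vex^*$ and using the optimality and closeness of the two optima to show that the non-fat (integral) part must vanish, leaving only the fat element, whose $\ell_p$-norm is bounded by $\wt_p^{\X}(E)$ by definition.

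First I would set $\vey \df \vez^* - \vex^*$ and observe that $\vey \in \ker_{\X}(E)$, since both $\vez^*$ and $\vex^*$ are feasible and hence satisfy $E\vex = \veb$ with the same right-hand side. By the definition of $\wt_p^{\X}$, there is a one-fat decomposition $\vey = \veh + \veg$ that is conformal, with $\veh \in \ker_{\X}(E)$ the fat element satisfying $\|\veh\|_p \leq \wt_p^{\X}(E)$, and $\veg \in \ker_{\Z^n}(E)$. The goal is to show that we may take $\veg = \vezero$, so that $\vez^* - \vex^* = \veh$ and the desired bound follows immediately.

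The key step is the following exchange argument. Consider the two candidate points $\vez^* - \veg$ and $\vex^* + \veg$. Since $\veg \in \ker_{\Z^n}(E)$ is integral, $\vez^* - \veg$ is still an integer point in the kernel shift, hence $E(\vez^* - \veg) = \veb$ and it is integral; similarly $\vex^* + \veg \in \ker_{\X}(E)$-shift is still mixed-feasible. Because the decomposition is conformal, $\veg \sqsubseteq \vey$ and $\veh \sqsubseteq \vey$, so both $\vez^* - \veg$ and $\vex^* + \veg$ lie between $\vez^*$ and $\vex^*$ coordinatewise and therefore respect the bounds $\vel \leq \cdot \leq \veu$. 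By convexity and separability of $f$, the standard superadditivity-of-conformal-sums argument gives
\[
	f(\vez^*) + f(\vex^*) \geq f(\vez^* - \veg) + f(\vex^* + \veg).
\]
Since $\vez^*$ is an integer optimum and $\vez^* - \veg$ is a feasible integer point, $f(\vez^* - \veg) \geq f(\vez^*)$; likewise $\vex^* + \veg$ is a feasible mixed point, so $f(\vex^* + \veg) \geq f(\vex^*)$. Combining these forces equality throughout, so $\vex^* + \veg$ is also a mixed optimum. But $\|\vez^* - (\vex^* + \veg)\|_p = \|\veh\|_p \leq \|\vey\|_p = \|\vez^* - \vex^*\|_p$, with strict inequality unless $\veg = \vezero$; by the assumption that $\vex^*$ is the closest mixed optimum to $\vez^*$, we conclude $\veg = \vezero$, whence $\vez^* - \vex^* = \veh$ and $\|\vez^* - \vex^*\|_p \leq \wt_p^{\X}(E)$.

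The main obstacle I anticipate is making the convexity inequality fully rigorous in the \emph{mixed} setting: the classical proximity argument for pure IP relies on $\vez^* - \veg$ and $\vex^* + \veg$ being the ``right'' feasible points, and here one must carefully verify that adding the integral $\veg$ to the mixed point $\vex^*$ and subtracting it from the integral $\vez^*$ preserves both integrality constraints and bound constraints simultaneously — which is exactly why one needs $\veg$ to be \emph{integral} (so $\vex^* + \veg$ keeps its integer coordinates integral) and the decomposition to be \emph{conformal} (so the bounds are respected). The separable-convexity exchange step is then a coordinatewise application of the elementary fact that for a convex $f_i$ and conformal shift, $f_i(z_i) + f_i(x_i) \geq f_i(z_i - g_i) + f_i(x_i + g_i)$ whenever $g_i$ lies conformally between; this must be stated cleanly but is routine once the setup is correct.
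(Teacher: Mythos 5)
Your proposal follows essentially the same route as the paper's proof: the same one-fat decomposition $\vez^* - \vex^* = \veh + \veg$, the same exchange points $\vez^* - \veg$ and $\vex^* + \veg$, the same separable-convexity inequality (Proposition~\ref{prop:sepconvex} in the paper), and the same use of both optimalities to conclude that $\vex^* + \veg$ is again a mixed optimum. Your handling of feasibility (conformality gives the bound constraints, integrality of $\veg$ preserves integrality of the integer coordinates) is correct and matches the paper.

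The one genuine defect is your final step. The claim that $\|\veh\|_p \leq \|\vez^* - \vex^*\|_p$ holds \emph{with strict inequality unless $\veg = \vezero$} is false for $p = \infty$, which the lemma explicitly covers: take $\vez^* - \vex^* = (2,1)$, $\veh = (2,0)$, $\veg = (0,1)$; then $\|\veh\|_\infty = \|\vez^* - \vex^*\|_\infty = 2$ although $\veg \neq \vezero$. (For $1 \leq p < \infty$ strictness does hold, since conformality forces $|h_i| = |y_i| - |g_i| < |y_i|$ in any coordinate where $g_i \neq 0$.) So for $p = \infty$ you cannot conclude $\veg = \vezero$. Fortunately, you do not need to: once $\vex^* + \veg = \vez^* - \veh$ is known to be a mixed optimum, the assumption that $\vex^*$ is a mixed optimum \emph{closest} to $\vez^*$ immediately gives
\[
\|\vez^* - \vex^*\|_p \;\leq\; \|\vez^* - (\vex^* + \veg)\|_p \;=\; \|\veh\|_p \;\leq\; \wt_p^{\X}(E)
\]
for every $1 \leq p \leq \infty$, which is the desired bound with no case distinction. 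This is, in contrapositive form, exactly how the paper closes the argument: it assumes $\|\vez^* - \vex^*\|_p > \wt_p^{\X}(E)$ and exhibits the mixed optimum $\vez^* - \veh$ strictly closer to $\vez^*$, contradicting the choice of $\vex^*$.
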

We will need a small technical proposition in the proof of Lemma \ref{lem:proximity}:
\begin{proposition}[{\cite[Proposition 60]{EisenbrandHunkenschroederKleinKouteckyLevinOnn19}}] \label{prop:sepconvex}
	Let $\vex, \vey_1, \vey_2 \in \R^n$, $\vey_1, \vey_2$ be from the same orthant, and $f$ be a separable convex function. Then $f(\vex + \vey_1 + \vey_2) - f(\vex + \vey_1) \geq f(\vex + \vey_2) - f(\vex)$.
\end{proposition}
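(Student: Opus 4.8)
The plan is to exploit the separability of $f$ to reduce the claim to a one-dimensional convexity inequality, and then to recognize that inequality as the classical fact that, among pairs of reals with a fixed sum, a convex function assigns a larger total to the more spread-out pair.

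First I would write $f(\vez) = \sum_{i=1}^n f_i(z_i)$ with each $f_i$ convex. Both sides of the asserted inequality are sums over the coordinates of the corresponding one-variable differences, namely $f(\vex + \vey_1 + \vey_2) - f(\vex + \vey_1) = \sum_i \big(f_i(x_i + y_{1,i} + y_{2,i}) - f_i(x_i + y_{1,i})\big)$ and likewise for the right-hand side. Hence it suffices to prove, for each fixed $i$, the scalar inequality $f_i(x_i + y_{1,i} + y_{2,i}) - f_i(x_i + y_{1,i}) \geq f_i(x_i + y_{2,i}) - f_i(x_i)$. Abbreviating $a \df x_i$, $b \df y_{1,i}$, $c \df y_{2,i}$ and $g \df f_i$, and rearranging, the goal becomes $g(a + b + c) + g(a) \geq g(a+b) + g(a+c)$. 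The ``same orthant'' hypothesis on $\vey_1, \vey_2$ means precisely that $b \cdot c \geq 0$ in every coordinate, so $b$ and $c$ are either both nonnegative or both nonpositive.

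Next I would observe that the two inner points $a+b$ and $a+c$ are convex combinations of the two outer points $a$ and $a+b+c$ with complementary weights. Assuming $b + c \neq 0$, set $\lambda \df b/(b+c)$; the sign condition $bc \geq 0$ guarantees $\lambda \in [0,1]$ and $1 - \lambda = c/(b+c) \in [0,1]$ in both the all-nonnegative and the all-nonpositive case, since there $|b| \leq |b+c|$ and $|c| \leq |b+c|$. A direct check gives $a + b = (1-\lambda)\,a + \lambda\,(a+b+c)$ and $a + c = \lambda\,a + (1-\lambda)\,(a+b+c)$. Applying convexity of $g$ to each and adding the two resulting inequalities, the coefficients of $g(a)$ and of $g(a+b+c)$ each sum to $1$, which yields $g(a+b) + g(a+c) \leq g(a) + g(a+b+c)$, exactly the desired scalar inequality.

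Finally I would dispatch the degenerate case $b + c = 0$: under $bc \geq 0$ this forces $b = c = 0$, so both sides of the scalar inequality equal $g(a)$ and it holds with equality (this also trivially covers any coordinate with $b = 0$ or $c = 0$). Summing the established scalar inequalities over $i \in [n]$ then recovers the proposition. The only point requiring care is the uniform treatment of the two sign cases together with these degeneracies; there is no genuine obstacle beyond verifying that the convex-combination weights $\lambda$ and $1-\lambda$ lie in $[0,1]$ regardless of the common sign of $b$ and $c$.
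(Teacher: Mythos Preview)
Your argument is correct: the reduction to a single coordinate via separability, followed by the Karamata-type inequality $g(a)+g(a+b+c)\geq g(a+b)+g(a+c)$ established through the convex-combination representation with weight $\lambda=b/(b+c)\in[0,1]$, is valid, and your handling of the degenerate case $b+c=0$ is fine. Note, however, that the paper does not actually prove this proposition; it is quoted verbatim from~\cite[Proposition~60]{EisenbrandHunkenschroederKleinKouteckyLevinOnn19} and used as a black box in the proof of Lemma~\ref{lem:proximity}. Your proof is the standard one and would be what one finds in that reference.
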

\begin{proof}[Proof of Lemma~\ref{lem:proximity}]
	Assume for contradiction that $\|\vez^* - \vex^*\|_p > {\wt}^{\X}_p(E)$.
	Since $(\vez^* - \vex^*) \in \ker_{\X}(E)$, it has a one-fat decomposition $\veh + \veg$ where $\|\veh\|_p \leq {\wt}^{\X}_p(E)$.
	Because $\|\vez^* - \vex^*\|_p > {\wt}^{\X}_p(E)$, the integral part $\veg$ is non-zero.
	Let $\hat{\vez} := \vez^* - \veg = \vex^* + \veh$ and $\hat{\vex} := \vex^* + \veg = \vez^* - \veh$.
	Thus, $\vez^* - \vex^* = \veh + \veg = (\vez^* - \hat{\vex}) + (\vez^* - \hat{\vez})$.
	Now Proposition~\ref{prop:sepconvex} with $\vex = \vex^*$, $\vey_1 = \veh$, $\vey_2 = \veg$ shows
	\[
	f(\vez^*) - f(\hat{\vez}) \geq f(\hat{\vex}) - f(\vex^*) \enspace .
	\]
	By the conformality of the decomposition, $\hat{\vex}$ and $\hat{\vez}$ are within the $\vel, \veu$ bounds.
	Because $\veg \in \ker_{\Z^n}(E)$, $\hat{\vez}$ is an integer feasible solution, and because $\veh \in \ker_{\X}(E)$, $\hat{\vex}$ is a mixed feasible solution.
	Furthermore, because $\vez^*$ was an integer optimum and $\hat{\vez}$ is integer feasible, the left hand side is non-positive, and so is $f(\hat{\vex}) - f(\vex^*)$, thus $\hat{\vex}$ must be another mixed optimum and the right hand side must be zero, and so the left hand side, showing $\hat{\vez}$ to be another integer optimum.
	However, $\hat{\vez}$ is closer to $\vex^*$, a contradiction.
\end{proof}

\subsection{A Single-Exponential Algorithm}
Armed with the bounds on the mixed Graver basis and our insights into one-fat decompositions, we are now ready to develop the single-exponential algorithm.
Before we do so, however, a few general remarks are in order. These will also apply to the two-stage stochastic algorithm for fixed block-dimensions later on.
\begin{remark} \label{rem:feasibility}
Both algorithmic results will make use of the fact that if both the mixed and the integer version of the problem are feasible, then for every integral optimum, there is a mixed optimum nearby. It then suffices to first solve the (generally easier) integral version of the problem, and then solve an auxiliary mixed-integer program with the feasible region bounded by a small $n$-dimensional box around $\vex$.
Indeed, if $\vex$ is an integral solution of $E\vex = \veb, \vel \leq \vex \leq \veu$, then we will resort to solving the program $E(\vex+\vey) = \veb, \, \|\vey - \vex\|_\infty \leq P, \, \vel \leq \vex+\vey \leq \veu$ for $\vey$, which amounts to finding $\vey$ with $E\vey = \ve0, \, \vel' \leq \vey \leq \veu'$ for some new bounds $\vel',\veu'$ such that $\|\vel'-\veu'\|_\infty$ is small.
For general objectives, one optimizes the auxiliary objective $f'(\vey) = f(\vex + \vey)$, whereas for linear objectives no change is needed.
Hence, all of the algorithmic heavy lifting will be done in order to solve problems of this form.

Of course, this strategy rests on the assumption that both the mixed and the integral variant of the problem are feasible. This assumption can in turn be removed by a standard two-phase approach, similar to what is customary e.g. for the Simplex algorithm, in order to find an initial feasible solution. In short, this is done by introducing slack variables that are penalized in the objective, but admit a trivial feasible solution.
In the sequel, we will hence always assume feasibility.
\end{remark}

We say that $\vex_{\epsilon}$ is an \emph{$\epsilon$-accurate solution to~\eqref{MIP}} if there exists an optimum $\vex^*$ such that $\|\vex^* - \vex_{\epsilon}\|_\infty \leq \epsilon$. (For a discussion on the relationship of $\epsilon$-accurate and $\epsilon$-approximate optima and also the motivation to seek use the notion of $\epsilon$-accuracy, see~\cite[Section 1.2]{HochbaumShantikumar1990}.)

\begin{reptheorem}{thm:fewrows-mip}
An $\epsilon$-accurate solution of \eqref{MIP} can be found in single-exponential time $(m \|E\|_\infty)^{\Oh(m^2)} \cdot \mathcal{R}(\epsilon)$, where $\mathcal{R}(\epsilon)$ is the time needed to find an $\epsilon$-accurate optimum of the continuous relaxation of any~\eqref{MIP} with the constraint matrix $E$, and we assume $\mathcal{R}(\epsilon) \in \Omega(n)$.
\end{reptheorem}
\begin{proof}
The integer problem can be solved in time~$(m \|E\|_\infty)^{\Oh(m^2)} + \mathcal{R}(\epsilon)$ by known techniques~\cite{EisenbrandHunkenschroederKleinKouteckyLevinOnn19,EisenbrandW20} -- essentially, first solve the continuous relaxation, then reduce $\veb, \vel, \veu$ using proximity bounds, then solve a dynamic program.
Now by Lemma~\ref{lem:proximity}, a mixed optimum $\vex^*$ is at most $\wt_1^{\X}(E) \leq (2m^2\|E\|_\infty+1)^{2m+2} =: P$ far in $1$-norm.
The proximity bound implies that all prefix sums of $\vex^*_{\Z}$ with $E_{\Z}$ belong to the integer box $R := [-P, P]^{m}$, which has at most $(2P + 1)^m = (m \|E\|_\infty)^{\Oh(m^2)}$ elements.

This allows us to construct a dynamic program with $n_{\Z}+1$ stages.
Our DP table $D$ shall have an entry $D(i,\ver)$ for $i \in [n_{\Z}]$ and $\ver \in R$ whose meaning is the minimum objective attainable if the prefix sum of $\vex^*_{\Z}$ and $E_{\Z}$ restricted to the first $i$ coordinates is $\ver$.
To that end, for all $\ver \in R$, define $x^*_i({\ver})$ to be the choice of $x^*_i \in [-P,P]$ which minimizes $f_i$ and such that $E_{\bullet,i} x^*_i = \ver$; it is possible for the solution to be undefined if no number in $[-P,P]$ satisfies the conditions.
Similarly, define $\vex^*_{\R}({\ver})$ to be an $\epsilon$-accurate minimizer of $f_{\R}$ satisfying $E_{\R} \vex^*_{\R} = \ver$.
To compute $D$, set $D(0, \ver) \df 0$ for $\ver = \vezero$ and $D(0, \ver) \df +\infty$ otherwise, and for $i \in [n_{\Z}]$, set
$$D(i, \ver) \df \min_{\substack{\ver', \ver'' \in R:\\\ver'+\ver'' = \ver}} D(i-1, \ver') + f^i(x^*_i({\ver''})) \enspace .$$
The last stage is defined as
$$D(n_{\Z}+1, \vezero) \df \min_{\substack{\ver', \ver'' \in R:\\\ver'+\ver'' = \vezero}} D(n_{\Z}, \ver') + f_{\R}(\vex^*_{\R}({\ver''})) \enspace .$$

The value of the optimal solution is $D(n_{\Z}+1, \vezero)$ and the solution $\vex^*$ itself can be computed easily with a bit more bookkeeping in the table $D$.

As for complexity, the first $n_{\Z}$ stages of the DP can be computed in time at most $n_{\Z} \cdot |R|^2 = (m \|E\|_\infty)^{\Oh(m^2)} n_{\Z}$, and the last stage solves the continuous relaxation $|R|$ times, taking time $|R| \mathcal{R}(\epsilon)$.
Altogether, the algorithm takes time at most $(m \|E\|_\infty)^{\Oh(m^2)} \mathcal{R}(\epsilon)$.
Regarding correctness, note that any $\epsilon$-accurate solution $\vex^*$ is such that $\vex^*_{\R}$ is an $\epsilon$-accurate minimizer of $E_{\R} \vex_{\R} = -E_{\Z} \vex^*_{\Z}, \, \vel_{\R} \leq \vex_{\R} \leq \veu_{\R}$, and $\vex^*_{\Z}$ is an integer minimizer of $E_{\Z} \vex_{\Z} = -E_{\R} \vex^*_{\R}, \, \vel_{\R} \leq \vex_{\R} \leq \veu_{\Z}$.
Since the algorithm finds exactly such minimizers, its correctness follows.
\end{proof}


\section{Upper Bounds for the 2-Stage Stochastic Case} \label{sec:2stagealg}

After giving the basic version of our algorithm for the case of few rows, we will now develop our algorithm for the case of fixed block dimension.
\begin{reptheorem}{lem:2stagebound}
	Let $E$ be a 2-stage stochastic matrix with blocks $A_1, \dots, A_n$ and $B_1, \dots, B_n$ such that each $B_i$ has at most $r$ columns and each $A_i$ has at most $s$ columns.
	Then $g^{\X}_\infty(E) \leq h(r,s,\|E\|_\infty)$ and $\wt^{\X}_\infty \leq h'(r,s,\|E\|_\infty)$ for some double-exponential functions $h, h'$.
\end{reptheorem}
To prove Theorem~\ref{lem:2stagebound}, we will need the following recent result:
\begin{proposition}[{\cite[Theorem 9]{CslovjecsekEPVW21}}] \label{prop:stronger-klein}
	Let $T_1, \dots, T_n \subseteq \Z^d$ be multisets of integer vectors of $\ell_\infty$-norm at most $\Delta$ such that their respective sums are almost the same in the following sense: there is some $\veb \in \Z^d$ and a positive integer $\epsilon$ such that
	\[
	\left\|\left(\sum_{\vev \in T_i} \vev\right) - \veb\,\right\|_\infty < \epsilon \qquad \text{for all $i \in [n]$} \enspace .
	\]
	There exists a function $f(d, \Delta) \in 2^{\Oh(d\Delta)^d}$ such that the following holds.
	Assuming $\|\veb\|_\infty > \epsilon \cdot f(d, \Delta)$, one can find non-empty submultisets $S_1,  \dots, S_n$, for all $i \in [n]$, and a vector $\veb' \in \Z^d$ satisfying $\|\veb'\|_\infty  \leq f(d, \Delta)$ such that
	\[
	\left(\sum_{\vev \in S_i} \vev\right) = \veb' \qquad \text{for all $i \in [n]$} \enspace .
	\]
\end{proposition}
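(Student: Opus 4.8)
The plan is to prove this by a Steinitz-type rearrangement followed by a two-level pigeonhole argument, in the spirit of Klein's lemma for two-stage stochastic IPs. First I would dispose of the $\epsilon$-slack: since $\|\veb\|_\infty > \epsilon \cdot f(d,\Delta)$, the common target $\veb$ dominates each per-multiset deviation $\sigma_i - \veb$, where $\sigma_i \df \sum_{\vev \in T_i}\vev$, so the slack can be absorbed into the error terms of the Steinitz estimate and the sums treated as essentially equal to $\veb$ along a long, near-linear lattice path. Concretely, for each $i$ I apply Proposition~\ref{prop:steinitz} in the $\ell_\infty$-norm to order $T_i$ as $\vev_{i,1},\dots,\vev_{i,N_i}$ so that every prefix sum stays within $d\Delta$ of the segment from $\vezero$ to $\sigma_i$. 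Because each element has norm at most $\Delta$ while $\|\veb\|_\infty$ is enormous, $N_i \geq (\|\veb\|_\infty-\epsilon)/\Delta$ is large, so the path is genuinely long.

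The central structural step is to cut every path against the \emph{same} sequence of targets. I fix integer vectors $\vet_1,\dots,\vet_B$ with $\sum_{j=1}^{B}\vet_j = \veb$, each within $\ell_\infty$-distance $\Oh(\Delta)$ of $\veb/B$, where $B$ has order $\|\veb\|_\infty/\Delta$ so that each $\|\vet_j\|_\infty = \Oh(\Delta)$ and $B \leq N_i$. Using the prefix sums, I partition each $T_i$ into $B$ consecutive blocks whose $j$-th block sum $\vecc_{i,j}$ differs from the shared target $\vet_j$ by at most $\Oh(d\Delta)$ in $\ell_\infty$-norm. The decisive point is that this error box does not depend on $\veb$, $\epsilon$, or $n$: after subtracting the common targets, every deviation $\vecc_{i,j} - \vet_j$ of every multiset lands in one fixed box $\mathcal{C}$ of at most $(\Oh(d\Delta))^d$ integer points. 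Each $T_i$ is thereby recoded as a multiset of $B$ vectors over the fixed finite alphabet $\mathcal{C}$, and the deviations of $T_i$ sum to $\sigma_i - \veb$, of norm below $\epsilon$.

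I would then finish by forcing a common value among sub-collections of blocks. Choosing the same index set $J \subseteq [B]$ in all multisets contributes the $i$-independent vector $\sum_{j\in J}\vet_j$ plus an $i$-dependent deviation $\sum_{j\in J}(\vecc_{i,j}-\vet_j)$; the task is to choose $J$ so that these deviations coincide across all $i$ while staying bounded. Since the deviations live in the fixed alphabet $\mathcal{C}$, there are only $2^{|\mathcal{C}|} \leq 2^{(\Oh(d\Delta))^d}$ possible ``patterns'' of which symbols occur; scanning the $B \gg 2^{|\mathcal{C}|}$ windows of consecutive blocks, a pattern must recur, and the blocks between two recurrences form a candidate sub-collection whose deviation can be controlled uniformly over $i$. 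I would combine this with an induction on the dimension $d$: project away one coordinate, apply the statement in dimension $d-1$, then correct the last coordinate using the semigroup structure of the at most $2\Delta+1$ admissible one-dimensional values, whose relevant common multiples are bounded by $\mathrm{lcm}(1,\dots,\Delta)=2^{\Oh(\Delta)}$. Each of the $d$ recursion levels multiplies the admissible norm, which is exactly what yields the claimed bound $f(d,\Delta) \in 2^{\Oh(d\Delta)^d}$.

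The main obstacle is the simultaneity requirement: a submultiset must be produced for \emph{every} one of the $n$ multisets, so I may discard no index $i$, and yet the final common value $\veb'$ must be bounded by a function of $d$ and $\Delta$ \emph{alone}, independent of $n$ and of $\|\veb\|_\infty$. This is what rules out the naive pigeonhole ``some block value repeats in $T_i$'' and forces both the reduction to a $\veb$-independent alphabet and the lattice argument: even in dimension one the common value cannot be taken tiny --- multisets built from only even values versus only odd values force a common sum as large as a least common multiple --- so the bound must accommodate precisely these gcd/lattice coincidences. Making the dimension induction interact cleanly with the projection, so that the projected target stays large enough to keep the path long, is the delicate technical point, and is where the full strength of the argument of~\cite{CslovjecsekEPVW21} is needed.
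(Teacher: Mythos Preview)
The paper does not contain a proof of this proposition. It is quoted as \cite[Theorem~9]{CslovjecsekEPVW21} and used as a black box in the proof of Theorem~\ref{lem:2stagebound}; the surrounding text introduces it only with ``we will need the following recent result'' and immediately applies it. There is therefore nothing in the present paper to compare your attempt against.

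For what it is worth, your outline is broadly in the right spirit---Steinitz rearrangement, cutting into blocks against common targets so that all deviations live in a fixed box independent of $\veb$, $n$, and $\epsilon$, and then a combinatorial/lattice argument to force a common sub-sum across all $i$---and this is indeed the general shape of the argument in~\cite{CslovjecsekEPVW21}. The part of your sketch that remains genuinely underspecified is the ``pattern recurrence'' step: observing that some pattern of alphabet symbols repeats along the $B$ windows gives you, for each $i$ separately, a sub-collection with bounded deviation, but it does not by itself force the deviations to be \emph{equal across all $i$}, which is exactly the simultaneity obstacle you correctly flag at the end. Turning that into an actual proof requires the more careful machinery of~\cite{CslovjecsekEPVW21}, as you acknowledge.
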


\begin{proof}[Proof of Theorem~\ref{lem:2stagebound}]
	Assume $\veg = (\veg^0, \veg^1, \dots, \veg^n) \in \G_{\X}(E)$.
	For each $i \in [n]$, $(\veg^0, \veg^i) \in \ker_{\X}(B_i~A_i)$ and by 
	Lemma~\ref{lem:improved-basic-bound}, it can be decomposed as $(\veg^0, \veg^i) = (\veq^{0,i}, \veq^i) + \sum (\veh^{0,i,j}, \veh^{i,j})$ such that $\sum (\veh^{0,i,j}, \veh^{i,j})$ is integral and $\|\veq^i\|_1 \leq \wt^{\X}_1(B_i~A_i) \leq (2m^2\|E\|_\infty+1)^{2m+2} =: P$.
	Let $T_i = \{\veh^{i,j}\}_j$.
	Now, we can use Proposition~\ref{prop:stronger-klein} with $\epsilon = P$ and $\Delta = g_\infty(B_i~A_i)$, because, for each $i \in [n]$, $\veg^0 - \sum T_i = \veq^{0,i}$ and $\|\veq^{0,i}\|_\infty \leq P$.
	Proposition~\ref{prop:stronger-klein} yields that either $\|\veg^0\|_\infty \leq \epsilon \cdot f(d,\Delta) = P \cdot f(r,g_\infty(B_i~A_i)) =: h(r,s,\|E\|_\infty)$, or there are submultisets $S_1, \dots, S_n$ such that $\sum S_1 = \sum S_2 = \cdots = \sum S_n$.
	In the first case, we are done.
	In the second case, each $S_i$ corresponds to a collection of vectors $(\veh^{0,i,j}, \veh^{i,j})$ such that their sum is $(\bar{\veg}^0, \bar{\veg}^i)$ where $\bar{\veg}^0$ is indeed defined identically across $i \in [n]$ by Proposition~\ref{prop:stronger-klein}, and $\bar{\veg} \neq \vezero$ and $\bar{\veg} \sqsubseteq \veg$, a contradiction to $\veg \in \G_{\X}(E)$.
	Notice that we immediately obtain a one-fat decomposition and an $\ell_\infty$-width bound on $E$, because $\bar{\veg}$ is integral and $\veg - \bar{\veg}$ is mixed, and with iterated use of the procedure, bounded as desired.
\end{proof}

From Theorem~\ref{lem:2stagebound} and Lemma~\ref{lem:proximity}, it follows that:
\begin{corollary} \label{cor:proximity}
	Let $\vez^* \in \Z^n$ be an integer optimum of a $2$-stage stochastic~\eqref{MIP} instance.
	Then there exists a mixed optimum $\vex^* \in \X$ such that
	$\|\vez^* - \vex^*\|_\infty \leq h'(r,s,\|E\|_\infty)$ for a double exponential function $h$.
\end{corollary}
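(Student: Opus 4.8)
The plan is to combine the two preceding results directly, specialized to the $\ell_\infty$-norm; no new machinery is needed. First I would check that a mixed optimum exists at all, since Lemma~\ref{lem:proximity} presupposes one. As $\vez^*$ is an integer optimum, it is in particular a feasible point, so the feasible region of the~\eqref{MIP} instance is nonempty; together with the standing assumption of finite bounds $\vel, \veu \in \X$ and the continuity of the separable convex objective over the resulting compact region, a mixed optimum is attained. Let $\vex^*$ denote a mixed optimum that is closest to $\vez^*$ in $\ell_\infty$-norm.

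Next I would apply Lemma~\ref{lem:proximity} with $p = \infty$ to the pair $(\vez^*, \vex^*)$, which gives $\|\vez^* - \vex^*\|_\infty \leq \wt^{\X}_\infty(E)$. It then remains only to bound the right-hand side, and this is exactly the $\infty$-weight estimate in Theorem~\ref{lem:2stagebound}: since $E$ is $2$-stage stochastic with block column-dimensions $r$ and $s$, we have $\wt^{\X}_\infty(E) \leq h'(r,s,\|E\|_\infty)$ for the double-exponential function $h'$ produced there. Chaining the two inequalities yields $\|\vez^* - \vex^*\|_\infty \leq h'(r,s,\|E\|_\infty)$, as claimed. (Note the statement's reference to a function $h$ should read $h'$, matching Theorem~\ref{lem:2stagebound}.)

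Because both ingredients are already in hand, this is a direct composition rather than a genuinely new argument, and I do not expect a real obstacle. The only points meriting care are the two hypothesis checks flagged above: the existence of a mixed optimum to which Lemma~\ref{lem:proximity} applies, which is why I would invoke the finiteness of $\vel, \veu$ and compactness of the feasible region at the outset; and the consistency of the chosen norm ($p = \infty$) across both the proximity lemma and the weight bound of Theorem~\ref{lem:2stagebound}, so that the two estimates compose without any loss.
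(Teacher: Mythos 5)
Your proof is correct and follows exactly the paper's route: the paper derives Corollary~\ref{cor:proximity} as the direct composition of Lemma~\ref{lem:proximity} (with $p=\infty$) and the $\wt^{\X}_\infty$ bound of Theorem~\ref{lem:2stagebound}, precisely as you do. Your added checks (existence of a mixed optimum via finite bounds, and the $h$ versus $h'$ notational slip in the statement) are sound observations that the paper leaves implicit.
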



\subsection{A Polynomial Algorithm for Fixed Block Dimension}
Using the upper bounds for $2$-stage stochastic MIPs on proximity and weight as combined in Corollary \ref{cor:proximity},
we can now formulate an algorithm which solves the $2$-stage stochastic MILP problem in polynomial time whenever the block dimensions are fixed.
We recall that $h'$ is the double-exponential function from Theorem~\ref{lem:2stagebound}.
In accordance with Remark \ref{rem:feasibility}, we note two things:
Firstly, by following a standard two-phase approach, we may assume that the problem at hand is integrally feasible. 
Then, secondly, the algorithm solves the integer program corresponding to the instance to optimality, which is fixed-parameter tractable \cite{AH,KouteckyLO18}.
We thereby obtain an integer optimum $\vez^\ast$, 
and we can now restrict ourselves to solving the following auxiliary MIP to optimality:

\begin{equation}
	\min \vew \vex:\, E\vex = \vezero, \, \hat\vel \leq \vex \leq \hat{\veu}, \, \vex \in \Z^{n_{\Z}} \times \R^{n_{\R}}, \vel, \veu \in \R^n, \veb \in \Z^m. \label{AugMILP} \tag{AuxMILP}
\end{equation}
Here, $\hat{\ell}_i = \max\{\ell_i-z^\ast_i,-h'(r,s,\|E\|_\infty)\}$ and $\hat{u}_i = \min\{u_i-z^\ast_i,h'(r,s,\|E\|_\infty)\}$. 
Observe that
\begin{align} \label{eq:infty_bounds}
	\|\hat{\vel} - \hat{\veu}\|_\infty \leq 2h'(r,s,\|E\|_\infty)
\end{align}
holds.
For an optimal solution $\vex^\ast$ to \eqref{AugMILP},
the augmented solution $\vex^\ast+\vez^\ast$ is then an optimal solution to the original MILP,
by Corollary \ref{cor:proximity}.

What remains is to show how to solve \eqref{AugMILP} in the claimed time bound.
This is effected by  proving the following Lemma:
\begin{lemma} \label{lem:global_cont_vert}
		Let $V$ be the set of vertices of all integer slices of the auxiliary mixed-integer program \eqref{AugMILP}.
		There are at most $(8h'(r,s,\|E\|_\infty))^{(r+1)(s+1)} n^r$ 
		distinct global parts appearing in $V$, and they can be enumerated with polynomial delay.
\end{lemma}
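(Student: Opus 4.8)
The plan is to prove a structural statement about basic solutions of 2-stage stochastic LPs and then count. Write a solution of \eqref{AugMILP} as $\vex = (\vex^0, \vex^1, \dots, \vex^n)$, where $\vex^0 \in \R^r$ is the global part and $\vex^i \in \R^s$ is the local part living in block $i$, so that the feasibility of block $i$ reads $A_i\vex^i = -B_i\vex^0$. An integer slice is obtained by fixing all integer variables to feasible integer values, leaving a continuous polytope in the remaining variables, and the goal is to bound the number of distinct global parts $\vex^0$ arising at vertices of these polytopes, over all integer slices, and to enumerate them.

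First I would establish the key structural lemma: the global part of any such vertex is determined by at most $r$ of the $n$ blocks. Fix a vertex $\vex^*$ of an integer slice and let $F_0 \subseteq [r]$ be the set of continuous global coordinates strictly between their bounds, and $F_i$ the analogous set of free local coordinates in block $i$. At a vertex the basic (free) columns of $E$ are linearly independent, so within each block the free local columns $(A_i)_{\bullet,F_i}$ have full column rank. Eliminating the block-local free variables $\vex^i_{F_i}$ from block $i$'s equation -- which is legitimate since these columns are confined to block $i$'s rows -- leaves a residual system of $d_i := (\text{active rows of block } i) - |F_i| \ge 0$ linear equations constraining only $\vex^0_{F_0}$. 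Since $\vex^*$ is a vertex, $\vex^0_{F_0}$ is the unique solution of the union of these residual systems, so their combined rank equals $|F_0|$. Greedily selecting blocks that increase the rank of the residual system until it reaches $|F_0|$ then yields a set $I \subseteq [n]$ with $|I| \le |F_0| \le r$ whose residual equations already pin down $\vex^0_{F_0}$ uniquely.

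With this in hand, counting and enumeration follow. The global part $\vex^0$ decomposes into its fixed integer coordinates, its at-bound continuous coordinates (each equal to $\hat\ell_j$ or $\hat u_j$), and its free coordinates $\vex^0_{F_0}$; by the structural lemma the last are the unique solution of a linear system assembled from the blocks in $I$. Hence $\vex^0$ is completely determined by the following data: the ordered choice of the at most $r$ blocks in $I$ (at most $n^r$ possibilities), and, restricted to those blocks together with the global part, the integer-variable values and the at-bound configuration of the continuous variables. Since every variable lives in an interval of width at most $2h'(r,s,\|E\|_\infty)$ by \eqref{eq:infty_bounds}, each integer coordinate ranges over at most $8h'(r,s,\|E\|_\infty)$ values and each continuous coordinate contributes only a choice of bound; as the number of relevant coordinates is at most $(r+1)(s+1)$, this data ranges over at most $(8h'(r,s,\|E\|_\infty))^{(r+1)(s+1)}$ possibilities. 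Multiplying gives the claimed bound $(8h'(r,s,\|E\|_\infty))^{(r+1)(s+1)} n^r$. For the enumeration, I would iterate over all these data; for each, solve the associated constant-size linear system to recover a candidate $\vex^0$, discard infeasible or rank-deficient choices, and emit each new global part. As $r, s, \|E\|_\infty$ are fixed, the per-candidate work is polynomial and the total number of candidates is $O(n^r)$, so the global parts can be listed with polynomial delay.

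I expect the structural lemma -- that the free global part is governed by only $r$ blocks -- to be the main obstacle, since it rests on the block-local elimination and the rank-selection argument; in particular one must argue carefully that eliminating the free local variables is valid (their columns being independent and block-confined) and that the residual equations are genuine constraints on $\vex^0_{F_0}$ whose rank selection halts after at most $r$ blocks. The counting step is then routine bookkeeping, and polynomial delay is immediate from the polynomial bound on the number of candidate data.
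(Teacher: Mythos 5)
Your proposal is correct, and it arrives at the same two-part structure as the paper's proof---a structural lemma saying that the continuous global part of a vertex is pinned down by at most $r$ blocks together with bounded per-block data, followed by essentially the same bookkeeping count of $n^r$ block choices times $(8h')^{(r+1)(s+1)}$ per-block data---but the structural core is argued by a genuinely different route. The paper works with explicit bases: it defines the \emph{signature} of an invertible column subset (which blocks it meets in portrait fashion, and which columns it uses there), shows in Lemma~\ref{lem:block_bound} that an invertible subset can contain at most $r$ portrait blocks (a row/column count), and proves Lemma~\ref{lem:signature} by analyzing the block structure of the basis inverse: the \emph{significant part} $S$ sits in the upper-left corner with zeroes to its right, so the basic continuous global coordinates equal $-S_\R^{-1}\verho$, where $\verho$ depends only on the enumerated data. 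You instead use the basis-free characterization of vertices (the columns of the free, i.e.\ strictly-between-bounds, continuous variables are linearly independent), eliminate the free local variables block by block---valid precisely because those columns are confined to their block's rows---and then run a greedy rank-selection argument: each chosen block raises the rank of the residual system on $\vex^0_{F_0}$ by at least one, and the total rank equals $|F_0|\le r$, so at most $r$ blocks govern the free global part. Your version buys some robustness: it never commits to a basis, so degenerate vertices (where basic and free columns differ) need no special care, and the rank argument replaces the somewhat delicate zero-pattern reasoning about inverses; the paper's version buys a reusable object (the signature), on which the correctness of Proposition~\ref{prop:aux_solve} also leans. Two cosmetic points in your counting: each continuous coordinate has three states (free, at $\hat{\ell}_j$, or at $\hat{u}_j$), not just ``a choice of bound,'' and each integer coordinate ranges over $2h'+1$ values; both fit under the factor $8h'$ exactly as in the paper, so the claimed bound is unaffected.
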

To prove Lemma~\ref{lem:global_cont_vert}, we introduce a few bits of terminology:
If a matrix has more rows than columns, we call it \emph{portrait}.
We call it \emph{landscape} if the opposite is the case.
After arbitrarily fixing the set of integral variables $\vex_\Z$ of a mixed-integer linear program within its bounds, we are left with an ordinary linear program, that is, a polytope.
We call these polytopes the \emph{slices at $\vex_\Z$} of the mixed-integer programs.
The first $r$ columns of a $2$-stage stochastic matrix are called \emph{global}, and the remaining columns are called \emph{local}; we extend this terminology to the variables corresponding to these columns.
Similarly, if $\vex$ is a mixed solution of a $2$-stage stochastic~\eqref{MIP}, we call the first $r$ variables the \emph{global part} and the remaining variables the \emph{local part}.
We first observe:
\begin{lemma}
	The optimum of a~\eqref{MILPf} is attained at a vertex of one of its slices.
\end{lemma}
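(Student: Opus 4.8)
The plan is to take an arbitrary optimal solution and show that, within the slice determined by its integer part, we may move to a vertex without worsening the objective. Throughout I would use the decomposition $\vex = (\vex_\Z, \vex_\R)$, $E = (E_\Z~E_\R)$, and $\vew = (\vew_\Z, \vew_\R)$.

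First I would argue that an optimum exists. By the preliminaries we assume finite bounds $\vel, \veu$, so the feasible region $\{\vex \in \X : E\vex = \veb,\ \vel \leq \vex \leq \veu\}$ is compact, and (assuming feasibility) the linear objective $\vew\vex$ attains its minimum on it. Fix such an optimum $\vex^* = (\vex^*_\Z, \vex^*_\R)$. Now consider the slice at $\vex^*_\Z$, that is, the polytope
\[
P = \{\vex_\R \in \R^{n_\R} : E_\R \vex_\R = \veb - E_\Z \vex^*_\Z,\ \vel_\R \leq \vex_\R \leq \veu_\R\}.
\]
This set is nonempty (it contains $\vex^*_\R$) and bounded (by the finite bounds $\vel_\R, \veu_\R$), hence it is a polytope and has vertices. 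On $P$ the objective reads $\vew\vex = \vew_\Z \vex^*_\Z + \vew_\R \vex_\R$, whose first summand is a constant; so minimizing $\vew\vex$ over the slice coincides with minimizing the linear function $\vew_\R \vex_\R$ over $P$.

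The key observation is that $\vex^*_\R$ must be optimal for this linear program: every $\vex_\R \in P$ yields a feasible MILP solution $(\vex^*_\Z, \vex_\R)$ of objective value $\vew_\Z \vex^*_\Z + \vew_\R \vex_\R$, so a strictly better point of $P$ would contradict the global optimality of $\vex^*$. I would then invoke standard linear programming theory: a linear function attains its minimum over a nonempty bounded polytope at one of its vertices. Hence there is a vertex $\vev$ of $P$ with $\vew_\R \vev \leq \vew_\R \vex^*_\R$, and by optimality of $\vex^*_\R$ equality holds. Consequently $(\vex^*_\Z, \vev)$ is an optimal solution of the original \eqref{MILPf} that lies at a vertex of the slice at $\vex^*_\Z$, which is exactly the claim.

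The hard part will be essentially nonexistent: the statement is a direct consequence of the fact that a linear program over a bounded polytope attains its optimum at a vertex. The only point deserving (minor) care is ensuring that the slice $P$ is bounded, so that vertices exist and the minimum is realized at one of them; this is guaranteed by the assumption of finite bounds $\vel, \veu$ recorded in the preliminaries, and is the one place where that assumption is used.
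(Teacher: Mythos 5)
Your proof is correct and follows essentially the same route as the paper: restrict attention to the slice of an optimal solution, observe that the objective there is an ordinary linear program over a (bounded, nonempty) polytope, and invoke the standard fact that its optimum is attained at a vertex. The only difference is that you spell out existence of the optimum and boundedness of the slice, which the paper leaves implicit.
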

\begin{proof}
	Every optimal solution of the mixed-integer program is contained in some slice by definition.
	Even if this optimal solution is not a vertex of the slice,
	the optimum of the objective function over the slice is attained at one of it's vertices,
	since this is just an ordinary linear program. This solution is in the same slice and at least as good as the mixed-integer optimum considered before, that is, it is optimal.
\end{proof}
%
%

Let $D$ be a subset of columns of $E$.
We fix the following notation:
\begin{itemize}
	\item $\Pi_E(D)$ is the set of blocks in $E$ such that $D$ contains at least one of its columns,
	and the corresponding block in $D$ is portrait.
	\item For each block $A \in \Pi_E(D)$, we let $\Lambda_E(A,D)$ be the set of columns of $A$ that actually appear in $D$.
	\item We write $\Gamma_E(D)$ for the set of global columns of $E$ that appear in $D$.
	We collect these data in a tuple $(\Pi_E(D),\{\Lambda_E(A,D)\}_{A \in \Pi_E(D)}, \Gamma_E(D))$,
	which we call the \emph{signature} of $D$ in $E$, denoted as $\Sigma_D(E)$.
	\item Given $D$, we can rearrange the variables and constraints in $E$ into a new matrix $N_D(E)$, such that (1) the global part of $N_D(E)$ contains first the variables not in $D$, then those in $D$,
	and (2) the blocks in $\Pi_E(D)$ appear as the first blocks on the diagonal,
	and within each block, first the variables in $D$ appear, and then those outside $D$. We call the square submatrix determined by the upper left corner of $N_D(E)$ and the lower right corner of the last portrait block in $N_D(E)$ the \emph{significant part} of $E$ with respect to $D$.
\end{itemize}

\begin{lemma} \label{lem:block_bound}
	Let $D$ be a subset of columns of $E$.
	If $D$ is invertible, then $|\Pi_E(D)| \leq r$.
\end{lemma}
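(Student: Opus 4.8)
The plan is to prove the bound by a rank count, reading the hypothesis ``$D$ is invertible'' as the statement that the column submatrix $E_{\bullet,D}$ is square and nonsingular. Since the rows of $E$ are assumed linearly independent, $E$ has full row rank $nt$ (where $t$ is the number of rows per diagonal block and $n$ the number of blocks), so invertibility forces $|D| = nt$ and $\operatorname{rank}(E_{\bullet,D}) = nt$. First I would sort the columns of $D$ into the $\gamma \df |\Gamma_E(D)| \le r$ global columns and, for each block $i \in [n]$, the $k_i$ local columns of $A_i$ lying in $D$. By definition a block belongs to $\Pi_E(D)$ precisely when $1 \le k_i < t$, since its restriction to $D$ is then a $t \times k_i$ block with more rows than columns, i.e.\ portrait.

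Next I would delete the $\gamma$ global columns from $E_{\bullet,D}$. What remains is a block-diagonal matrix whose $i$-th diagonal block is the $t \times k_i$ restriction of $A_i$ to $D$; its rank is therefore at most $\sum_{i=1}^n \min(t,k_i)$. Subadditivity of rank under column concatenation (the global columns can raise the rank by at most their number $\gamma$) then yields
\[
nt = \operatorname{rank}(E_{\bullet,D}) \le \sum_{i=1}^n \min(t,k_i) + \gamma,
\]
so that, rearranging, $\sum_{i=1}^n \bigl(t - \min(t,k_i)\bigr) \le \gamma \le r$.

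Finally I would observe that every block in $\Pi_E(D)$ satisfies $1 \le k_i < t$, hence contributes $t - \min(t,k_i) = t - k_i \ge 1$ to the left-hand sum, whereas all remaining blocks contribute a non-negative amount. Consequently
\[
|\Pi_E(D)| \le \sum_{i=1}^n \bigl(t - \min(t,k_i)\bigr) \le r,
\]
which is the claim. The one point that needs care is the bookkeeping of the three regimes $k_i = 0$, $1 \le k_i < t$, and $k_i \ge t$: empty blocks only enlarge the left-hand sum and landscape or square blocks contribute zero, so the entire rank deficiency $nt - \operatorname{rank}(\text{local part})$ is genuinely charged against the at most $r$ global columns. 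This charging is the crux of the argument, and it is exactly what forces $|\Pi_E(D)| \le r$.
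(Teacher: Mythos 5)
Your proof is correct and takes essentially the same approach as the paper: both arguments charge the column (equivalently, rank) deficit of the diagonal blocks against the at most $r$ global columns, concluding that at most $r$ blocks can be portrait. The only cosmetic difference is that the paper first rules out landscape blocks and then counts columns of the square matrix $D$, whereas your rank-subadditivity formulation with the $\min(t,k_i)$ terms handles landscape and empty blocks automatically without a separate case.
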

\begin{proof}
	Observe that $D$ is again two-stage stochastic, say with $r_D \leq r$ global columns.
	Its blocks cannot be landscape, since this would contradict $D$ being invertible.
	Since $D$ is invertible, it is square.
	Therefore, the entire local part of $D$ must be portrait by exactly $r_D$ rows.
	In particular, there can be at most $r_D < r$ portrait blocks.
\end{proof}

\begin{lemma} \label{lem:signature}
	Let $\vex_\Z$ and $\vey_\Z$ be two choices for the integer variables of \eqref{AugMILP}
	such that their global parts agree.
	Let $B$ and $C$ be any two sets of continuous variables such that $E_{B,\R}$ and $E_{C,\R}$ are invertible, and assume there are vertices $\vex, \vey$ corresponding to the bases $B,C$ and agreeing with the slices $\vex_\Z,\vey_\Z$ in their integer parts, respectively.
	Suppose the following conditions hold:
	\begin{enumerate}
		\item $B$ and $C$ have the same signature in $E$.
		\item In the blocks that $B$ and $C$ intersect, $\vex$ and $\vey$ agree on all non-basic local variables and all integer local variables: For each block $A$ and each column $i$ of $A$ not in $\Lambda_E(A,B)$, 
		$\vex_i = \vey_i$.
		\item $\vex$ and $\vey$ agree in their integer and non-basic global part.
	\end{enumerate}
	Then, the entire continuous global parts of $\vex$ and $\vey$ agree.
\end{lemma}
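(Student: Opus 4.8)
The plan is to exploit the fact that a square invertible submatrix of a $2$-stage stochastic matrix is essentially block-triangular, so that the continuous global variables of a vertex are pinned down by a single small square system whose matrix \emph{and} right-hand side are both controlled by the three hypotheses. First I would analyze the basis $B$. Since $E_{B,\R}$ is invertible, the argument of Lemma~\ref{lem:block_bound} shows that $B$ meets each block of $E$ in a submatrix that is either square or portrait (never landscape), that at most $r$ blocks are portrait --- precisely those in $\Pi_E(B)$ --- and that the basic global columns number $r_B=|\Gamma_E(B)|$. I would then pass to the rearrangement $N_B(E)$, ordering the block-rows so the portrait blocks come first and the columns as (basic global, portrait-block basic local, square-block basic local). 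In these coordinates $E_{B,\R}$ is block lower-triangular: the square-block local columns meet only the square block-rows, where they form an invertible block-diagonal matrix $L$, so $\det E_{B,\R}=\pm\det(L)\cdot\det(S_B)$, where $S_B$ is the significant part of $E$ with respect to $B$. A column count (each portrait block contributing deficiency $t-|\Lambda_E(A,B)|$, with total deficiency $r_B$) shows $S_B$ is square, and the determinant identity shows it is invertible.

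Next I would read off the global part through this structure. Writing the slice equations as $E_{B,\R}\vex_B=-E_\Z\vex_\Z-E_{\bar B,\R}\vex_{\bar B}$ (integer and nonbasic-continuous variables moved to the right-hand side), the square blocks merely express their own basic locals in terms of the globals and impose no further constraint, whereas the portrait block-rows form a closed square system with matrix $S_B$, unknowns the basic continuous global variables together with the portrait-block basic local variables, and right-hand side $\veb^{P}_{\vex}$, the restriction of the above right-hand side to the portrait block-rows. Since $S_B$ is invertible, the basic continuous global part of $\vex$ is a function of $S_B$ and $\veb^{P}_{\vex}$ alone.

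Finally I would match $\vex$ and $\vey$. Because $B$ and $C$ have the same signature, $\Gamma_E(B)=\Gamma_E(C)$, $\Pi_E(B)=\Pi_E(C)$, and $\Lambda_E(A,B)=\Lambda_E(A,C)$ on every portrait block, so $S_B$ and $S_C$ select the same rows and columns of $E$, whence $S_B=S_C$. For the right-hand side, a portrait block-row has nonzero entries only in the global columns and in that block's own local columns; hence $\veb^{P}$ depends only on the integer global variables, the nonbasic continuous global variables, and, per portrait block, the local columns outside $\Lambda_E(A,\cdot)$ (integer or nonbasic continuous). These coincide for $\vex$ and $\vey$ by the standing hypothesis on the integer global parts, by condition~3 for the nonbasic continuous global part, and by condition~2 for the non-$\Lambda$ local columns. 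Thus $\veb^{P}_{\vex}=\veb^{P}_{\vey}$, and inverting the common $S_B=S_C$ yields that the basic continuous global parts agree; together with condition~3 for the nonbasic globals, the entire continuous global parts of $\vex$ and $\vey$ coincide.

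The main obstacle is the localization step: establishing rigorously that the square blocks decouple, that $S_B$ is genuinely square and invertible, and that $\veb^{P}_{\vex}$ sees only variables controlled by conditions~2 and~3. This rests entirely on the sparsity of the $2$-stage structure --- each portrait block-row touching only the global columns and its own local columns --- and on correctly accounting for the block types, in particular checking that blocks carrying no basic local variable do not spoil the squareness of $S_B$. Once this structure is pinned down, the remainder is linear algebra over a single invertible matrix.
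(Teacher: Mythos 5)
Your proposal is correct and follows essentially the same route as the paper's proof: both isolate the significant part $S$ determined by the shared signature, observe that the $2$-stage sparsity places only zeroes to its right so the portrait rows decouple into a closed square system, show the corresponding right-hand sides coincide via conditions~2 and~3, and invert the common $S$ to equate the basic continuous global parts. Your write-up in fact supplies details the paper only asserts (squareness and invertibility of $S$ via the column-deficiency count and the determinant factorization), but the underlying argument is the same.
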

\begin{proof}
	The system describing the slice of \eqref{AugMILP} at $\vex_\Z$ has only continuous variables $\vex_\R$ that are constrained as follows:
	\begin{align}
		E_{\R} \vex_{\R} = -E_{\Z} \vex_{\Z}, \, \hat\vel \leq \vex_{\R} \leq  \hat\veu\,.
	\end{align}
	Similarly, the slice at $\vey_\Z$ induces a system in continuous variables, say, $\vey_\R$:
	\begin{align}
		E_{\R} \vey_{\R} = -E_{\Z} \vey_{\Z}, \, \hat\vel \leq \vey_{\R} \leq  \hat\veu\,.
	\end{align}
	Since $B$ and $C$ are invertible, we have:
	\[
	\vex_{B,\R} =  -B_\R^{-1} \cdot (E_{\Z} \vex_{\Z} + {\bar B}_\R \vex_{\bar B,\R})
	\]
	and
	\[
	\vey_{C,\R} =  -C_\R^{-1} \cdot (E_{\Z} \vey_{\Z} + {\bar C}_\R \vey_{\bar C,\R}).
	\]
	Consider now the expressions
	\[
	\verho_1 = E_{\Z} \vex_{\Z} + {\bar B}_\R \vex_{\bar B,\R}
	\]
	and
	\[
	\verho_2 = E_{\Z} \vey_{\Z} + {\bar C}_\R \vey_{\bar C,\R}.
	\]
	First note that the conditions ensure that $E$ has the same significant part with respect to $B$ and $C$,
	hence we can assume $E$ to have this significant part, which we denote by $S$, in its upper left corner, of dimension, say, $d\times d$, with only zeroes to the right of it.
	Consequentially, $B^{-1}_\R$ and $C^{-1}_\R$ will have the inverse of $S$ in their respective upper left $d\times d$ corner, with only zeroes to the right of it.
	Furthermore, by assumption, $\vex$ and $\vey$ agree on the non-basic and integral portion of the variables in $S$.
	Since there are only zeroes to the right of $S$ in $E$, the first $d$ entries of $\verho_1$ and $\verho_2$ agree, which we refer to as $\verho \in \R^d$.
	But since the upper left $d\times d$ part of  $B^{-1}_\R$ and $C^{-1}_\R$ are also identical,
	so must be the first $d$ entries of $\vex_{B,\R}$ and $\vex_{C,\R}$, 
	which is the basic continuous global part of $\vex$ and $\vey$.
	Indeed, these entries are given by $-S_\R^{-1} \verho$.
	Since the integral and non-basic continuous part of $\vex$ and $\vey$ coincide by assumption,
	the entire continuous global parts of $\vex$ and $\vey$ coincide.
\end{proof}
	We can now state:
	\begin{proof}[Proof of Lemma~\ref{lem:global_cont_vert}]
		Let $h' := h'(r,s,\|E\|_\infty).$
		By Lemma \ref{lem:signature}, it suffices to bound the number of choices for the data that determine the global part. 
		In particular: 
		There are at most $n^r$ choices for choosing $\Pi_E(D)$, by Lemma \ref{lem:block_bound}.
		For each of the $r$ selected blocks $A$ with at most $s$ columns each in $\Pi_E(D)$, there are $2^s$ choices for $\Lambda_E(A,D)$, hence $2^{rs}$ choices for all $\Lambda_E(A,D)$ together.
		Similarly, there are $2^r$ choices for the global part $\Gamma_E(D)$.
		
		In each of the (at most) $r$ blocks of $\Pi_E(D)$, 
		the number of possible assignments to each local integral variable is bounded by $2h'$, using \eqref{eq:infty_bounds}, and there are at most two choices, $\hat u_i$ or $\hat\ell_i$, for each local non-basic continuous variable. 
		Therefore, each of the (at most) $r$ blocks with at most $s$ columns in $\Pi_E(D)$ contributes a factor of $2^s (2h')^s = (4h')^s$. All blocks collectively hence contribute a factor of $(4h')^{rs}$.
		Similarly, there are at most $r$ global integral and continuous non-basic variables to guess. 
		Again by \eqref{eq:infty_bounds}, there are only $2h'$ choices for the former variables, and two choices each for the latter. This contributes a factor of $2^r\cdot (2h')^r = (4h')^r$ to the total. 
		
		Putting this together yields the upper bound of $n^r \cdot 2^{(r+1)s} \cdot (4h')^{(s+1)r}
		\leq n^r \cdot (8h')^{(r+1)(s+1)}$.
		Clearly, all such choices can be enumerated with polynomial delay.
	\end{proof}

	Lemma \ref{lem:global_cont_vert} now suggests an obvious strategy to solve the~\eqref{AugMILP} to optimality:
	\begin{proposition} \label{prop:aux_solve}
		An optimal solution to \eqref{AugMILP} can be found in time~$h'(r,s,\|E\|_\infty)^{O(rs)}\cdot n^r$.
	\end{proposition}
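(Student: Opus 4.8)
The plan is to convert the structural bound of Lemma~\ref{lem:global_cont_vert} into an explicit enumeration algorithm. Since (as established above) the optimum of a~\eqref{MILPf} is attained at a vertex of one of its integer slices, it suffices to search over such vertices. The first observation I would exploit is that once the \emph{global part} $\vex^0 \in \R^r$ of a candidate solution is fixed, the $2$-stage structure completely decouples the problem: the residual constraints $A_i \vex^i = -B_i \vex^0$ together with the bounds $\hat\vel^i \leq \vex^i \leq \hat\veu^i$ involve pairwise disjoint sets of local variables, giving one independent subproblem per block $i \in [n]$. Hence, for a fixed $\vex^0$, the optimal objective equals $\vew^0 \vex^0 + \sum_{i=1}^n c_i(\vex^0)$, where $c_i(\vex^0)$ is the optimum of the $i$-th local subproblem.

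Concretely, I would first enumerate, via Lemma~\ref{lem:global_cont_vert}, the at most $(8h')^{(r+1)(s+1)} n^r$ distinct global parts occurring among vertices of integer slices, with polynomial delay. For each block $i$, the corresponding local subproblem is a mixed-integer program on at most $s$ variables whose ranges are bounded by $2h'$ thanks to \eqref{eq:infty_bounds}; it can be solved optimally in time $h'^{O(s)}$, for instance by enumerating the at most $(2h'+1)^s$ integral local assignments and solving the resulting small linear program for the continuous locals. Summing the block optima with the global cost $\vew^0\vex^0$ and minimizing over all enumerated global parts then yields the optimum of \eqref{AugMILP}, whose augmentation by $\vez^*$ solves the original instance by Corollary~\ref{cor:proximity}.

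For the running-time analysis, the delicate point, and the step I expect to be the main obstacle, is keeping the total within $h'^{O(rs)} n^r$ rather than $h'^{O(rs)} n^{r+1}$: naively re-solving all $n$ block subproblems for each of the $n^r$ enumerated global parts would cost an extra factor of $n$. I would avoid this by precomputing the per-block cost data once. For every block $i$ I would compute its (bounded, piecewise-linear) value function $\vex^0 \mapsto c_i(\vex^0)$ on the finitely many relevant global values in total time $n \cdot h'^{O(s)}$, and then aggregate these into the sum $\sum_i c_i$ so that each of the $h'^{O(rs)} n^r$ enumerated global parts can be evaluated in time independent of $n$. The additive preprocessing term $n \cdot h'^{O(s)}$ is dominated by $h'^{O(rs)} n^r$ (for $r \geq 1$), so the overall running time is $h'^{O(rs)} n^r$ as claimed.

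The remaining care lies in verifying that the decoupling is faithful: I must check that recombining independently optimal local parts with a fixed, enumerated global part reconstructs a feasible point of an integer slice, so that no optimum is missed and no spurious one is introduced. Feasibility of the recombination is immediate because the blocks share no variables beyond the common global part, while completeness of the search rests entirely on Lemma~\ref{lem:global_cont_vert} together with the vertex characterization above, which guarantees that an optimal global part is among those enumerated.
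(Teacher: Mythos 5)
Your core algorithm is the same as the paper's: enumerate the global parts of vertices of integer slices via Lemma~\ref{lem:global_cont_vert}, observe that fixing the global part decouples \eqref{AugMILP} into $n$ independent small block problems, solve each block, and keep the best total; correctness rests on Lemma~\ref{lem:signature} together with the vertex-optimality observation, exactly as in the paper. The only cosmetic difference is how the blocks are solved: the paper invokes Theorem~\ref{thm:fewrows-mip} on the residual block-diagonal system (noting that with a linear objective one is in the exact LP case, $\epsilon = 0$), whereas you enumerate the at most $(2h'+1)^s$ integral local assignments and solve a small LP for each; both give $(h')^{O(s)}$ per block and the difference is immaterial.

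Where you genuinely depart from the paper is the amortization step, and that is also the step that does not hold up as stated. The per-block value functions $c_i$ are piecewise-linear functions of $r$ continuous variables; evaluating $\sum_i c_i$ at a query point either costs $\Omega(n)$ per query (evaluate each $c_i$ separately), or requires a precomputed representation of the sum, whose underlying domain decomposition is the common refinement of $n$ subdivisions of $\R^r$ and can have far more than $n \cdot (h')^{O(s)}$ cells — and whose per-cell linear pieces, computed naively, reintroduce the factor $n$ you are trying to avoid. So the claimed preprocessing of $n \cdot (h')^{O(s)}$ with evaluation ``independent of $n$'' is not justified. You should know, however, that you are holding yourself to a standard the paper's own proof does not meet: read literally, the paper re-solves an $n$-block system for each of the $(h')^{O(rs)} n^r$ enumerated global parts, which yields $(h')^{O(rs)} \cdot n^{r+1}$ rather than the stated $(h')^{O(rs)} \cdot n^r$, and this extra factor is simply not accounted for (it is harmless for the qualitative conclusion, since the running time remains polynomial for fixed $r$). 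Your instinct that something must be said about this factor is sound; your proposed fix is not, and dropping it leaves you with precisely the paper's argument and the paper's (slightly optimistic) bound.
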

	\begin{proof}
		By Lemma \ref{lem:global_cont_vert}, we may enumerate all possible global parts of vertices in the required time bound, guess the corresponding global integer values, and then solve the resulting block-diagonal mixed-integer system to optimality using the algorithm of Theorem~\ref{thm:fewrows-mip} (notice that here we are in the special case of LP which can be solved exactly, i.e., with $\epsilon = 0$, and in strongly polynomial time since $\|E\|_\infty$ is small, so $\mathcal{R}(0) = \poly(n)$).
		Among all choices of global parts, pick the one that yields the optimal value for the full program.
		This strategy is correct by Lemma \ref{lem:signature}, and runs within the required time bound by Lemma \ref{lem:global_cont_vert}.
	\end{proof}
We have now obtained:
\begin{reptheorem}{thm:2stageXP}
	$2$-stage stochastic~\eqref{MILPf} with block dimensions $r,s$ can be solved in time $k^{O(rs)} n^r$, where $k = h'(r,s,\|E\|_\infty)$ is the bound from Theorem~\eqref{lem:2stagebound}.
\end{reptheorem}	
	
	\begin{proof}
		As mentioned before, it is enough to first solve the integer program corresponding to the MILP instances, and then solving the auxiliary problem using Proposition \ref{prop:aux_solve}.
	\end{proof}
	
	\begin{remark}
		Let us note two things: Firstly, the exponent of $n$ in our algorithm is only dependent on the number $r$ of global variables. Hence, for values of $s$ such that $h'(r,s,\|E\|_\infty)^{s} \leq n^{f(r)}$ for some function $f$, our algorithm remains polynomial for fixed $r$.
		
		Secondly, note that we may choose strongly polynomial (or rather, strongly fpt) subroutines to solve the arising integer and mixed-integer programs. In this case, also the algorithm we obtain is strongly polynomial for fixed block dimensions.
	\end{remark}

\section{W[1]-Hardness of $2$-Stage Stochastic MILPs with Fractional Bounds} \label{sec:2stagelb}

In the following we show that $2$-stage stochastic~\eqref{MILPf} and~\eqref{MIP} with integral data
is \W{1}-hard parameterized by the block dimension even if $\|E\|_\infty = 1$.
\begin{reptheorem}{thm:2stageWh1}
  $2$-stage stochastic~\eqref{MILPf} and~\eqref{MIP} with integral
  data is \W{1}-hard parameterized by the block dimensions and with
  $\|E\|_\infty = 1$.
\end{reptheorem}
\begin{proof}
  We show the theorem using a parameterized reduction from the well-known \ksum{} problem,
  which is \W{1}-hard when parameterized by the number of elements
  in a solution~\cite{DowneyF95}.

  \prob{\ksum{}}
  {A set $A$ of pairwise distinct natural numbers and two natural
    numbers $k$ and $t$.}
  {Decide whether there is a subset $S \subseteq A$ with $|S|=k$ and 
    $\sum_{s \in S}s=t$?}
	
  \noindent \textit{Transformation:}
  We give a formulation of \ksum~as a $2$-stage stochastic MILP. 
  To do so, we first scale all input numbers $a_1, a_2, \dots,
  a_n$ in $A$ and $t$ by $1/\max_i\{a_i\}$. Denote the new numbers as $a'_1, a'_2, \dots, a'_n$ and $t'$. The scaling ensures that all considered sums are smaller or equal to $1$, which comes in handy later on.
	
	Let $x_j^i$ be a binary variable that will indicate that $a'_i$ is the $j$th number appearing in the sum for all $i \in [n]$ and $j \in [k]$. We collect those numbers not appearing in a solution in a binary slack variable $x_{k+1}^i$ for each $i \in [n]$,
	yielding the constraints:
	\begin{align}
		& \sum_{j=1}^{k+1} x_j^i = 1 & \forall i \in \{1, \dots, n\}
	\end{align}
	To express the condition on the sum of the solution being $t'$, 
	we introduce fractional variables $y_j^i$ that take on the value $a'_i$ if and only if $x_j^i = 1$ for $i \in [n]$ and $j \in [k]$.
	While this is trivially achieved by $y_j^i = a_i' x_j^i$, the crux is to model this without including $a'_i$ as a coefficient, which would not be bounded by the parameter any more. 
	This is accomplished by requiring the following:
	\begin{align}
		& y_j^i \leq x_j^i & \forall i \in [n], \forall j \in [k] \\
		& \sum_{j=1}^{k+1} y_j^i = a'_i & \forall i \in [n]
	\end{align}
	This has the intended effect since $a'_i \leq 1$ by construction.
	We will then store the solution indicated by the assignment to the $x_j^i$ variables in yet another set of variables, denoted as $z_j$, where $j$ ranges from $1$ to $k$.
	\begin{align}
		& \sum_{j=1}^{k} z_j = t'
	\end{align}
	While it is easy to project the $y_j^i$ to $z_j$, the straightforward way to do so would blow up the block size to $\Omega(n).$ 
	Indeed, to ensure that the $z_j$ have the intended semantics, consider the following:
	The equality $z_j = y_j^i$ ought to be satisfied for exactly one choice of $i$, say when $i=i'$ (assuming distinct inputs); otherwise, $z_j = y_j^i + s_j^i$ holds for some non-zero compensation term $s_j^i$, whenever $i\neq i'$.
	Note that, while the $s_j^i$ do satisfy a function similar to slack variables, they may well need to be negative.
	In addition, we introduce binary variables $r_j^i$ for all $i \in [n]$ and $j \in [k]$, indicating whether or not $s_j^i = 0$. 
	The above semantics are captured in the following constraints:
	\begin{align}
		& z_j = y_j^i + s_j^i  & \forall i \in [n], \forall j \in [k] \label{eq:z_y_s_bound} \\
		& z_j \geq \min_i a'_i\\
		& -r_j^i \leq s_j^i \leq r_j^i & \forall i \in [n], \forall j \in [k] \label{eq:r_s_bound}
	\end{align}
	Our aim is then to minimize the number of times any of the $s_j^i$ are used, or conversely, to make $z_j = y_j^i$ for some $i$ as often as possible, which is expressed in the choice of objective function:
	\begin{align}
		\min  \sum_{j=1}^k \sum_{i=1}^n r_j^i
	\end{align}
	As argued, note that in a solution of a yes instance, for a fixed $j$, $z_j = y_j^i \geq \min_i a'_i$ (equivalently, $r_j^i = 0$) holds for exactly one choice of $i$, making the optimum equal to $k(n-1)$.
	%
	
	The above constraints define a $2$-stage stochastic MILP formulation with fractional variables $z_j$, $y_j^i$ and $s_j^i$, and binary variables $x_j^i$ and $r_j^i$.
	The global part is made up by the $z_j$, of which there are $k$. 
	The remaining variables are distributed across $n$ blocks of dimension $O(k)$ each, including the respective slack variables for the inequality constraints.
	The largest entry in the constraint matrix is $1 = O(k)$, and clearly, the transformation can be carried out in time polynomial in $n$ and $k$.\\
	
	\textit{Correctness:}
	We now show that an instance of \ksum~is feasible if and only if the corresponding $2$-stage stochastic MILP given by the above constraints has a solution with objective value at most $(n-1)k$.
	As noted above, by construction, a yes-instance of \ksum~leads to such a solution.
	%
	%
	%
	%
	%
	%
	%
	%
	%
	%
	%
	%
	%
	
	On the other hand, assume that the MILP has a feasible solution with objective value of at most $(n-1)k.$
	Since the $r_j^i$ are binary, this implies that at least $k$ of them are zero.
	Suppose that there is an index $j$ such that $r_j^i = 1$ for all $i$.
	Then, there must be some $j'$ and  $i\neq i'$ such that $r_{j'}^i = r_{j'}^{i'} = 0$.
	By \eqref{eq:r_s_bound}, $s_{j'}^i = s_{j'}^{i'} = 0$.
	In turn, by \eqref{eq:z_y_s_bound}, $z_{j'} = y_{j'}^i = y_{j'}^{i'}.$ 
	Now, since $y_j^i$ and $y_j^{i'}$ are either zero or equal to some $a_i'$,
	and the $a'_i$ are distinct, we have that $0 = y_{j'}^i = y_{j'}^{i'} = z_j < \min_i a_i$, which contradicts the lower bound on $z_j$.
	Hence, for every $j$ there is some $i$ such that $r_j^i = 0$ holds,
	and consequentially, for all $j$, $z_j = y^{j}_i$ holds for some $i$.
	By construction, these form a valid solution to the instance of \ksum.
\end{proof}

\section{NP-hardness of $n$-Fold MIPs} \label{sec:nfoldlb}

The upper bound for 2-stage stochastic programs stands in contrast to a much stronger bound for the $n$-fold case.
Namely, we show \NPhness of $n$-fold~\eqref{MILPf} for constant parameter values.
By Lemma~\ref{lem:milp-to-mip}, we immediately get that $n$-fold~\eqref{MIP} is also \NPh for constant parameter values.
\begin{reptheorem}{thm:nfoldNPh}
An $n$-fold~\eqref{MILPf} and~\eqref{MIP} with integral data is \NPh already with blocks of constant dimensions and with $\|E\|_\infty = 1$.
\end{reptheorem}
\begin{proof}
The well-known \textsc{Partition} problem is defined as follows:
\prob{\textsc{Partition}}
{Integers $a_1, \dots, a_n$}
{Find a set $I \subseteq [n]$ such that $\sum_{i \in I} a_i = \sum_{i \not\in I} a_i$.}

Let an instance of \textsc{Partition} be given.
Without loss of generality, assume that $a_{\max} \df \max_i a_i \leq 1$; this can be achieved, \eg, by scaling every number of the original instance by $1/a_{\max}$.
We will have $n$ bricks, with brick $i \in [n]$ representing the choice whether $i \in I$ or $i \not\in I$.

Specifically, for each $i \in [n]$, introduce integer variables $x_1^i, x_2^i \in \{0,1\}$ and continuous variables $y_1^i, y_2^i$ with bounds $0 \leq y_1^i, y_2^i \leq 1$.
The local constraints (matrix $A$) are as follows:
We enforce a disjunction on the $x$-variables by the constraint
\begin{align}
x^i_1 + x^i_2 &= 1 & \forall i, \label{eq:disjunction}
\end{align}
and we enforce that $y_1^i = a_i$ iff $x_1^i = 1$ and similarly $y_2^i = a_i$ iff $x_2^i = 1$:
\begin{align}
	y^i_1 + y^i_2 &= a_1 & \forall i, \label{eq:ai} \\
	y^i_1 & \leq x_1^i & \forall i, \label{eq:y1x1} \\
	y^i_2 & \leq x_2^i & \forall i.\label{eq:y2x2}
\end{align}
Constraints~\eqref{eq:y1x1} and~\eqref{eq:y2x2} enforce that $x_1^i = 0$ implies $y_1^i=0$, as well as $x_2^i=0$ implies $y_2^i = 0$.
Combined with constraint~\eqref{eq:ai}, this enforces the desired equivalence.

It is now easy to see that the following global constraint encodes the requirement that $\sum_{i \in I} a_i = \sum_{i \not\in I} a_i$:
\begin{align}
	\sum_{i=1}^n y_1^i &= \sum_{i=1}^n y_2^i . \label{eq:partition}
\end{align}
Altogether, the instance has four variables per block, four local constraints and one global constraint, and is feasible if and only if the original \textsc{Partition} instance is.
\end{proof}

\section{Lower Bound on the Graver Norm of $n$-fold MIPs} \label{sec:nfoldgraverlb}

In this section, we will show that the $1$-norm of the mixed Graver norm can be
unbounded even for $n$-fold matrices.

We start with the following
auxiliary lemma, which is crucial for constructing an element of the
mixed Graver basis with unbounded $1$-norm.
\begin{lemma}\label{lem:nsseq}
	Let $n$ be an integer. There are two sets $S$ and $T$ of natural
	numbers with $|S|=|T|=n$ such that:
	\begin{itemize}
		\item[(1)] $\sum_{s \in S}s=\sum_{t\in T}t=2^{n^2}-1$ and
		\item[(2)] for every two subsets $S' \subseteq S$ and $T'\subseteq T$,
		with $0 < |C'\cup D'| < 2n$, 
		it holds that $\sum_{s \in S'}s\neq \sum_{t \in T'}t$.
	\end{itemize}
\end{lemma}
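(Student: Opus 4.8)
The plan is to realize the elements of $S$ and $T$ as integers with pairwise disjoint binary supports, so that forming a subset sum never produces a carry and the resulting value is determined entirely by the set of bit positions that are switched on. Concretely, I would fix the $n^2$ bit positions $\{0,1,\dots,n^2-1\}$ and cover them by \emph{two different} partitions into $n$ consecutive intervals. Writing an interval of length $\ell$ starting at position $p$ as the integer $(2^\ell-1)2^p$ (a block of $\ell$ consecutive ones), $S$ is taken to be the set of blocks of the first partition and $T$ the set of blocks of the second. Since each partition covers all $n^2$ positions, both $\sum_{s\in S}s$ and $\sum_{t\in T}t$ equal the all-ones number $2^{n^2}-1$, so property~(1) comes for free.

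For the two partitions I would use the \emph{uniform} one for $S$, with cut points at the multiples of $n$ (blocks $[(i-1)n,\,in)$, so $s_i=(2^n-1)2^{(i-1)n}$), and for $T$ a \emph{shifted} partition whose interval lengths are $(n-1,n,n,\dots,n,n+1)$. The interior cut points of $S$ are then exactly $\{n,2n,\dots,(n-1)n\}$, all $\equiv 0 \pmod n$, whereas the interior cut points of $T$ are $\{n-1,2n-1,\dots,(n-1)n-1\}$, all $\equiv -1 \pmod n$; hence the two partitions share no interior cut point (for $n\ge 2$; the case $n=1$ has $S=T=\{1\}$ and is immediate). In particular no block of $S$ coincides with a block of $T$, so $S\cap T=\emptyset$ and $|S\cup T|=2n$.

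The heart of the argument is to translate an equality of subset sums into a statement about these partitions. Given $S'\subseteq S$ and $T'\subseteq T$ with $\sum_{s\in S'}s=\sum_{t\in T'}t$, disjointness of supports means the common value has a unique binary representation, so the set $U$ of switched-on positions equals both $\bigcup_{s_i\in S'}I_i$ and $\bigcup_{t_j\in T'}J_j$, where $I_i$ and $J_j$ are the intervals of the two partitions. Thus the indicator of $U$ is constant on every $S$-interval \emph{and} on every $T$-interval, so each of its interior jump points must lie in the interior cut points of both partitions — an empty intersection. Hence $U$ is either empty or all of $\{0,\dots,n^2-1\}$, forcing $(S',T')=(\emptyset,\emptyset)$ or $(S',T')=(S,T)$. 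The first has $|S'\cup T'|=0$ and the second $|S'\cup T'|=2n$, so every pair with $0<|S'\cup T'|<2n$ has distinct sums, which is exactly property~(2).

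The only genuine obstacle is arranging the second partition so that its interior cut points avoid those of the uniform one; everything else is bookkeeping about binary representations. The residue-class device above (lengths $(n-1,n,\dots,n,n+1)$, which places all of $T$'s cut points at $-1\bmod n$) resolves this cleanly, and one should separately record the trivial base case $n=1$.
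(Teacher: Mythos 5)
Your proof is correct for $n \ge 2$ and shares the paper's skeleton --- realize $S$ and $T$ as two partitions of the $n^2$ bit positions into $n$ blocks each, encoded as integers with pairwise disjoint binary supports, so that subset sums are carry-free, property (1) is automatic, and an equality of subset sums forces equality of the corresponding unions of blocks --- but the combinatorial core is genuinely different. The paper takes the blocks to be the rows (for $S$) and columns (for $T$) of an $n\times n$ grid of bit positions; property (2) then follows from a pointwise argument: if $0<|S'\cup T'|<2n$, there is a row $R_i\notin S'$ and a column $C_j\in T'$ (or symmetrically), and the single cell $p(i,j)$ lies in the union coming from $T'$ but not in the one coming from $S'$. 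You instead take two interval partitions of $\{0,\dots,n^2-1\}$ whose interior cut points lie in different residue classes modulo $n$, and argue globally: a set that is simultaneously a union of $S$-intervals and of $T$-intervals admits no jump point, hence is empty or everything. Both arguments are equally elementary; the grid construction localizes the discrepancy to one explicit bit position, while yours has the small bonus that the common sum is exactly $2^{n^2}-1$ as the statement claims (the paper's $1$-indexed grid actually sums to $2^{n^2+1}-1$, a minor inconsistency with the statement). One caveat: your remark that the case $n=1$ is ``immediate'' is not right --- for $n=1$, property (1) forces $S=T=\{1\}$, and then $S'=T'=\{1\}$ gives $|S'\cup T'|=1$ with equal sums, so the lemma as literally stated fails; the paper's construction degenerates in exactly the same way, so this is a defect of the statement (which implicitly assumes $n\ge 2$, as needed for the application) rather than of your construction, but you should flag it rather than claim it holds.
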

\begin{proof}
	Let $X \subseteq \N\setminus\{0\}$. We denote by $N(X)$, the natural number
	whose binary representation has a $1$ at the $i$-th bit
	(with $1$ being the lowest-value bit) if and only
	if $i \in X$. Conversely, for a natural number $x$, let
	$B(x)$ be the set of all indices $i$ such that the binary
	representation of $x$ is $1$ at the $i$-th bit. Note that
	$B(N(X))=X$ for every $X \subseteq \N\setminus \{0\}$.
	
	For every $i$ and $j$ with $1 \leq i,j \leq n$, let
	$p(i,j)=(i-1)n+j$.
	For every $i$ with $1 \leq i \leq n$, we set:
	\begin{itemize}
		\item $c_i$ is equal to $N(R_i)$, where $R_i=\{ p(i,j) \mid 1 \leq j \leq n\}$,
		\item $d_i$ is equal to $N(C_i)$, where $C_i=\{ p(j,i) \mid 1\leq j
		\leq n\}$.
	\end{itemize}
	We claim that setting $S=\{s_1,\dotsc,s_n\}$ and
	$T=\{t_1,\dotsc,t_n\}$ satisfies the statement of the
	lemma: As $\{B(s_1),\dotsc,B(s_n)\}$ and
	$\{B(t_1),\dotsc,B(t_n)\}$ form a partition of $[n^2]$, it holds that
	$\sum_{s \in S'}s=N(\bigcup_{s \in S'}B(s))$ and $\sum_{t \in
		T'}t=N(\bigcup_{t \in T'}B(t))$ for every subsets $S' \subseteq S$ and $T'\subseteq T$. Therefore, $\sum_{s \in
		S}s=\sum_{t\in T}t=N([n^2])=2^{n^2+1}-1$, which shows (1). 
	
	Towards showing (2), let $S'$ and $T'$ be any two subsets with $S'
	\subseteq S$ and $T'\subseteq T$ such that $0 < |S'\cup T'| < 2n$.
	Because $0 < |S'\cup T'| < 2n$, we obtain that either:
	\begin{itemize}
		\item there are $i$ and $j$ with $1 \leq i,j \leq n$ such that $s_i
		\in S\setminus S'$ and $t_j \in T'$ or
		\item there are $i$ and $j$ with $1 \leq i,j \leq n$ such that $t_i
		\in T\setminus T'$ and $s_j \in S'$.
	\end{itemize}
	Since the proofs for the two cases are analogous, we only give the
	proof for the former case.
	Let $O=B(s_i)\cap B(t_j)$ and note that $O=R_i\cap C_j=\{p(i,j)\}\neq \emptyset$.
	Since
	$t_i \in T'$, it holds that $O \in \bigcup_{t \in
		T'}B(t)$. However, due to $s_i \notin S'$, we have that
	$O \notin \bigcup_{s \in S'}B(s)$. Consequently, $\bigcup_{s
		\in S'}B(s)\neq \bigcup_{t \in T'}B(t)$ and therefore also
	$\sum_{s \in S'}s\neq \sum_{t\in T'}t$.
\end{proof}

	
	
\begin{reptheorem}{thm:nfoldmip-lowerbound}
  Let $n$ be an integer, $\X_n=(\Z\times \R\times \R)^n$, and 
  $E_n$ be the matrix given by the $n$-fold of $\begin{pmatrix}
    0 & I_3 \\ 0 & A \end{pmatrix}$, where $I_3$ is the identity
  matrix of dimension $3$ and $A=(1,1,1)$. Then,
  $\G_{\X_n}(E_n)$ contains a vector $\veg$ with $\|\veg\|_1\in
  \Omega(n)$, i.e., $g_1^{\X_n}(E_n) \in \Omega(n)$.
\end{reptheorem}
\begin{proof}
	Without loss of generality we assume that $n=2m$ for some integer $m$.
	Let $S=\{s_1,\dotsc,s_m\}$ and $T=\{t_1,\dotsc,t_m\}$ be the sets
	with $m$ elements each, whose existence is guaranteed due to
	Lemma~\ref{lem:nsseq}. Let $V=2^{n^2+1}-1$.
	Let $\veg \in \X_n$ be defined as follows:
	\begin{itemize}
		\item for every $i$ with $1 \leq i \leq m$, we set:
		$g_{3(i-1)}=-1$, $g_{3(i-1)+1}=s_i/V$, and
		$g_{3(i-1)+2}=1-s_i/V$,
		\item for every $i$ with $m+1 \leq i \leq 2m$, we set:
		$g_{3(i-1)}=1$, $g_{3(i-1)+1}=-t_i/V$, and
		$g_{3(i-1)+2}=-1+t_i/V$.
	\end{itemize}
	
	We claim that $\veg \in
	\G_{\X_n}(E_n)$; because $\|\veg\|_1\in \Omega(n)$ this would prove the theorem. We start by showing that
	$\veg \in \ker_{\X_n}(E_n)$, \ie, $\veg \in \X_n$ and $E_n\veg=0$.
	By definition of $\veg$, $\veg \in \X_n$ and moreover, the following shows that
	the first 3 rows of $E_n\veg$ are equal to $0$.
        The first row is equal to:
        \[\sum_{i=0}^{n-1}g_{3i}=(\sum_{i=1}^m-1)+(\sum_{i=1}^m1)=0.\]
        
        The second row is equal to (using (1) of Lemma~\ref{lem:nsseq}): 
	\[\sum_{i=0}^{n-1}g_{3i+1}=(\sum_{i=1}^ms_i/V)+(\sum_{i=1}^m-t_i/V)=(\sum_{i=1}^ms_i-t_i)/V=0.\]
        
        Finally, the third row is equal to (using (1) of Lemma~\ref{lem:nsseq}): 
        \[\sum_{i=0}^{n-1}g_{3i+2}=(\sum_{i=1}^m1-s_i/V)+(\sum_{i=1}^m-1+t_i/V)=(\sum_{i=1}^m-s_i+t_i)/V=0.\]
		
	Moreover, for every $i$ with $1 \leq i \leq m$, we obtain the value
	of the $(3+i)$-th row as:
	
	\begin{align*}
		g_{3(i-1)}+g_{3(i-1)+1}+g_{3(i-1)+2}=-1+s_i/V+1-s_i/V=0.
	\end{align*}
	
	Similarly, for every $i$ with $m+1 \leq i \leq 2m$, we obtain the value
	of the $(3+i)$-th row as:
	
	\begin{align*}
		g_{3(i-1)}+g_{3(i-1)+1}+g_{3(i-1)+2}=1-t_i/V-1+t_i/V=0.
	\end{align*}
	
	Therefore, $\veg \in \ker_{\X_n}(E_n)$. It remains to show that $\veg$ is
	$\sqsubseteq$-minimal w.r.t. all vectors in $\ker_{\X_n}(E_n)\setminus\{\vezero\}$.
	
	Towards showing this, we first show that the restriction
	$\veg^i=(g_{3(i-1)},g_{3(i-1)+1},g_{3(i-1)+2})$ of $\veg$ to
	the $i$-th block of $E_n$ is $\sqsubseteq$-minimal w.r.t. all
	vectors in $\ker_{\X_1}(A)\setminus \{\vezero\}$, where $\X_1=\Z\times\R\times\R$; note
	that $\veg^i \in \ker_{\X_1}(A)$ because $\veg \in \ker_{\X}(E_n)$.
	Note also that either $\veg^i=(-1,s/V,1-s/V)$
	(if $i\leq m-1$) or $\veg^i=(1,-t/V,-1+t/V)$ (if $i\geq m$) for some
	$s \in S$ respectively $t\in T$.
	Suppose for a contradiction that this is not the case and there is a
	vector $\vey \in \ker_{\X_1}(A)\setminus\{\vezero\}$ such that $\vey
	\sqsubseteq \veg^i$. 
	Moreover, note that because $\vey
	\in \ker_{\X_1}(A)\setminus\{\vezero\}$,
	it also holds that
	$y_0+y_1+y_2=0$ but $\vey\notin \{\vezero,\veg\}$.
	
	We start by showing the claim for the case that
	$\veg^i=(-1,s/V,1-s/V)$.
	Because $\vey \sqsubseteq \veg^i$ and $\vey \in \X_1$,
	we obtain that $y_0\in \{-1,0\}$, $y_1\in
	[0,s_i/V]$, and $y_2\in [0,1-s_i/V]$.
	Therefore, if $y_0=0$, then $y_1+y_2=0$, which implies
	that $\vey=\vezero$, a contradiction.
	Similarly, if
	$y_0=-1$, then $y_1+y_2=1$, which is only possible if
	$\vey=\veg^i$, a contradiction.
	
	The proof for the case that $\veg^i=(1,-t/V,-1+t/V)$ is now
	analogous. 
	In particular,
	because $\vey \sqsubseteq \veg^i$  and $\vey \in \X_1$,
	we obtain that $y_0\in \{0,1\}$, $y_1\in
	[-t_i/V, 0]$, and $y_2\in [-1+t_i/V,0]$.
	Therefore, if $y_0=0$, then $y_1+y_2=0$, which implies
	that $\vey=\vezero$, a contradiction.
	Similarly, if
	$y_0=1$, then $y_1+y_2=-1$, which is only possible if
	$\vey=\veg^i$, a contradiction.
	This completes the proof that $\veg^i$ is $\sqsubseteq$-minimal w.r.t. all
	vectors in $\ker_{\X_1}(A)\setminus \{\vezero\}$.
	
	Towards showing that $\veg$ is $\sqsubseteq$-minimal w.r.t. all
	vectors in $\ker_{\X}(E_n)\setminus \{\vezero\}$, suppose for a
	contradiction that this is not the case and there is a vector $\vey\in
	\ker_{\X}(E_n)\setminus \{\vezero\}$ with $\vey \sqsubseteq \veg$.
	Note that the fact that $\veg^i$ is $\sqsubseteq$-minimal w.r.t. all
	vectors in $\ker_{\X_1}(A)\setminus \{\vezero\}$ implies that the
	restriction $\vey^i=(y_{3(i-1)},y_{3(i-1)+1},y_{3(i-1)+2})$ of
	$\vey$ to the $i$-th block of $E_n$ is either equal to $\vezero$ or
	equal to $\veg^i$. Therefore, if $\vey\notin \{\vezero,\veg\}$, then the value of
	the second row of $E_n\vey$, i.e., 
	$\sum_{i=0}^{n-1}y_{3i+1}$, is equal to $(\sum_{s \in
		S'}s/V)+(\sum_{t\in T'}-t/V)=(\sum_{s \in
		S'}s+\sum_{t\in T'}-t)/V$ for some subsets $S' \subseteq S$
	and $T'\subseteq T$ such that $0<|S'\cup T'|<2n$.
	Because of Lemma~\ref{lem:nsseq} (2), we obtain that
	$\sum_{s \in    S'}s\neq\sum_{t\in T'}t$ and therefore the value of
	the second row of $E_n\vey$ cannot be $0$, which contradicts our
	assumption that $\vey \in \ker_\X(E_n)$.
\end{proof}

\section{Takeaways and Open Problems}
We have shown that several popular intuitions do not transfer either from the linear to the separable convex case or from the integer to the mixed integer case, or both.
What, then, are the intuitions one can take away from our results?

One is the contrast between the complexities of~\eqref{MILPz} and~\eqref{MILPf}.
We take this to mean that the source of hardness for MILP is not really in allowing continuous \emph{domains}, but in the combination of continuous domains and (highly) fractional \emph{input data}, be it in the explicit form of the bounds and right hand side, or the (potentially) implicit form of the objective function.
Simply put, we still find it reasonable to expect that~\eqref{MILPz} is not much harder than ILP with the same constraint matrix, given that $\veb, \vel, \veu$ is integral.
From this perspective, the complexity of bimodular~\eqref{MILPz} is an interesting open problem.

Other takeaways come from the interplay of mixed Graver norms and algorithms based on them.
One reason even good mixed Graver norms are not as helpful in the mixed case is that they do not allow simple enumeration, as in the integer case; bounding few variables in a small (continuous) box is no help without additional structure.
The structure seems to be this: in $2$-stage stochastic MILPs, only few continuous variables interact at a time.
This is in contrast with $n$-fold MILPs which may encode the interaction of many continuous variables simultaneously.
Finally, the intuition that good Graver bounds yield good proximity bounds needs an adjustment: for good proximity, one needs not only a mixed Graver bound, but the (stronger) weight bound.
Still, in the settings of interest to us, those two come hand in hand.
It is unclear whether this is a general feature, and we pose it as a question: can $\wt^{\X}_p(E)$ be bounded in terms of $g_p^{\X}(E)$?


\begin{thebibliography}{CCK{\etalchar{+}}20}

\bibitem[AH07]{AH}
Matthias Aschenbrenner and Raymond Hemmecke.
\newblock Finiteness theorems in stochastic integer programming.
\newblock {\em Foundations of Computational Mathematics}, 7(2):183--227, 2007.

\bibitem[AWZ17]{ArtmannWZ17}
Stephan Artmann, Robert Weismantel, and Rico Zenklusen.
\newblock A strongly polynomial algorithm for bimodular integer linear
  programming.
\newblock In Hamed Hatami, Pierre McKenzie, and Valerie King, editors, {\em
  Proceedings of the 49th Annual {ACM} {SIGACT} Symposium on Theory of
  Computing, {STOC} 2017, Montreal, QC, Canada, June 19-23, 2017}, pages
  1206--1219. {ACM}, 2017.

\bibitem[Bix02]{Bixby:2002}
Robert~E Bixby.
\newblock Solving real-world linear programs: A decade and more of progress.
\newblock {\em Operations research}, 50(1):3--15, 2002.

\bibitem[BKO21]{BrandKO21}
Cornelius Brand, Martin Kouteck{\'{y}}, and Sebastian Ordyniak.
\newblock Parameterized algorithms for milps with small treedepth.
\newblock In {\em Thirty-Fifth {AAAI} Conference on Artificial Intelligence,
  {AAAI} 2021, Thirty-Third Conference on Innovative Applications of Artificial
  Intelligence, {IAAI} 2021, The Eleventh Symposium on Educational Advances in
  Artificial Intelligence, {EAAI} 2021, Virtual Event, February 2-9, 2021},
  pages 12249--12257. {AAAI} Press, 2021.

\bibitem[CCK{\etalchar{+}}20]{ChanCKKP20}
Timothy F.~N. Chan, Jacob~W. Cooper, Martin Kouteck{\'{y}}, Daniel Kr{\'{a}}l',
  and Krist{\'{y}}na Pek{\'{a}}rkov{\'{a}}.
\newblock Matrices of optimal tree-depth and row-invariant parameterized
  algorithm for integer programming.
\newblock In Artur Czumaj, Anuj Dawar, and Emanuela Merelli, editors, {\em 47th
  International Colloquium on Automata, Languages, and Programming, {ICALP}
  2020, July 8-11, 2020, Saarbr{\"{u}}cken, Germany (Virtual Conference)},
  volume 168 of {\em LIPIcs}, pages 26:1--26:19. Schloss Dagstuhl -
  Leibniz-Zentrum f{\"{u}}r Informatik, 2020.

\bibitem[CEH{\etalchar{+}}21]{CslovjecsekEHRW21}
Jana Cslovjecsek, Friedrich Eisenbrand, Christoph Hunkenschr{\"{o}}der, Lars
  Rohwedder, and Robert Weismantel.
\newblock Block-structured integer and linear programming in strongly
  polynomial and near linear time.
\newblock In D{\'{a}}niel Marx, editor, {\em Proceedings of the 2021 {ACM-SIAM}
  Symposium on Discrete Algorithms, {SODA} 2021, Virtual Conference, January 10
  - 13, 2021}, pages 1666--1681. {SIAM}, 2021.

\bibitem[CEP{\etalchar{+}}21]{CslovjecsekEPVW21}
Jana Cslovjecsek, Friedrich Eisenbrand, Michal Pilipczuk, Moritz Venzin, and
  Robert Weismantel.
\newblock Efficient sequential and parallel algorithms for multistage
  stochastic integer programming using proximity.
\newblock In Petra Mutzel, Rasmus Pagh, and Grzegorz Herman, editors, {\em 29th
  Annual European Symposium on Algorithms, {ESA} 2021, September 6-8, 2021,
  Lisbon, Portugal (Virtual Conference)}, volume 204 of {\em LIPIcs}, pages
  33:1--33:14. Schloss Dagstuhl - Leibniz-Zentrum f{\"{u}}r Informatik, 2021.

\bibitem[Chu16]{Chubanov16}
Sergei Chubanov.
\newblock A polynomial-time descent method for separable convex optimization
  problems with linear constraints.
\newblock {\em {SIAM} J. Optim.}, 26(1):856--889, 2016.

\bibitem[CM18]{ChenM18}
Lin Chen and D{\'{a}}niel Marx.
\newblock Covering a tree with rooted subtrees - parameterized and
  approximation algorithms.
\newblock In Artur Czumaj, editor, {\em Proceedings of the Twenty-Ninth Annual
  {ACM-SIAM} Symposium on Discrete Algorithms, {SODA} 2018, New Orleans, LA,
  USA, January 7-10, 2018}, pages 2801--2820. {SIAM}, 2018.

\bibitem[CSEW09]{ConfortiSEW09}
Michele Conforti, Marco~Di Summa, Friedrich Eisenbrand, and Laurence~A. Wolsey.
\newblock Network formulations of mixed-integer programs.
\newblock {\em Math. Oper. Res.}, 34(1):194--209, 2009.

\bibitem[DF95]{DowneyF95}
Rodney~G. Downey and Michael~R. Fellows.
\newblock Fixed-parameter tractability and completeness {II:} on completeness
  for {W[1]}.
\newblock {\em Theor. Comput. Sci.}, 141(1{\&}2):109--131, 1995.

\bibitem[EHK{\etalchar{+}}19]{EisenbrandHunkenschroederKleinKouteckyLevinOnn19}
Friedrich Eisenbrand, Christoph Hunkenschr{\"{o}}der, Kim{-}Manuel Klein,
  Martin Kouteck{\'{y}}, Asaf Levin, and Shmuel Onn.
\newblock An algorithmic theory of integer programming.
\newblock {\em CoRR}, abs/1904.01361, 2019.

\bibitem[EW20]{EisenbrandW20}
Friedrich Eisenbrand and Robert Weismantel.
\newblock Proximity results and faster algorithms for integer programming using
  the steinitz lemma.
\newblock {\em {ACM} Trans. Algorithms}, 16(1):5:1--5:14, 2020.

\bibitem[Hem03]{Hemmecke:2000}
Raymond Hemmecke.
\newblock On the positive sum property and the computation of graver test sets.
\newblock {\em Math. Program.}, 96(2):247--269, 2003.

\bibitem[Hof56]{HoffmanK56}
A~Hoffman.
\newblock Integral boundary points of convex polyhedra.
\newblock {\em Linear Inequalities and Related Systems}, pages 223--246, 1956.

\bibitem[HS90]{HochbaumShantikumar1990}
D.~S. Hochbaum and J.~G. Shantikumar.
\newblock Convex separable optimization is not much harder than linear
  optimization.
\newblock {\em J. ACM}, 37(4):843--862, 1990.

\bibitem[JKL21]{JansenKL21}
Klaus Jansen, Kim{-}Manuel Klein, and Alexandra Lassota.
\newblock The double exponential runtime is tight for 2-stage stochastic ilps.
\newblock In Mohit Singh and David~P. Williamson, editors, {\em Integer
  Programming and Combinatorial Optimization - 22nd International Conference,
  {IPCO} 2021, Atlanta, GA, USA, May 19-21, 2021, Proceedings}, volume 12707 of
  {\em Lecture Notes in Computer Science}, pages 297--310. Springer, 2021.

\bibitem[JKMR18]{JansenKMR19}
Klaus Jansen, Kim{-}Manuel Klein, Marten Maack, and Malin Rau.
\newblock Empowering the configuration-ip - new {PTAS} results for scheduling
  with setups times.
\newblock In Avrim Blum, editor, {\em 10th Innovations in Theoretical Computer
  Science Conference, {ITCS} 2019, January 10-12, 2019, San Diego, California,
  {USA}}, volume 124 of {\em LIPIcs}, pages 44:1--44:19. Schloss Dagstuhl -
  Leibniz-Zentrum fuer Informatik, 2018.

\bibitem[KK18]{KnopK18}
Du{\v{s}}an Knop and Martin Kouteck{\'{y}}.
\newblock Scheduling meets n-fold integer programming.
\newblock {\em J. Scheduling}, 21(5):493--503, 2018.

\bibitem[KKL{\etalchar{+}}19]{KnopKLMO:2019}
Du{\v{s}}an Knop, Martin Kouteck{\`y}, Asaf Levin, Matthias Mnich, and Shmuel
  Onn.
\newblock Multitype integer monoid optimization and applications.
\newblock {\em arXiv preprint arXiv:1909.07326}, 2019.

\bibitem[KKM17]{KnopKM17}
Du{\v{s}}an Knop, Martin Kouteck{\'{y}}, and Matthias Mnich.
\newblock Combinatorial n-fold integer programming and applications.
\newblock In Kirk Pruhs and Christian Sohler, editors, {\em 25th Annual
  European Symposium on Algorithms, {ESA} 2017, September 4-6, 2017, Vienna,
  Austria}, volume~87 of {\em LIPIcs}, pages 54:1--54:14. Schloss Dagstuhl -
  Leibniz-Zentrum fuer Informatik, 2017.

\bibitem[KLO18]{KouteckyLO18}
Martin Kouteck{\'{y}}, Asaf Levin, and Shmuel Onn.
\newblock A parameterized strongly polynomial algorithm for block structured
  integer programs.
\newblock In Ioannis Chatzigiannakis, Christos Kaklamanis, D{\'{a}}niel Marx,
  and Donald Sannella, editors, {\em 45th International Colloquium on Automata,
  Languages, and Programming, {ICALP} 2018, July 9-13, 2018, Prague, Czech
  Republic}, volume 107 of {\em LIPIcs}, pages 85:1--85:14. Schloss Dagstuhl -
  Leibniz-Zentrum fuer Informatik, 2018.

\bibitem[KR21a]{KleinR21}
Kim{-}Manuel Klein and Janina Reuter.
\newblock Collapsing the tower - on the complexity of multistage stochastic
  ips.
\newblock {\em CoRR}, abs/2110.12743, 2021.

\bibitem[KR21b]{klein2021collapsing}
Kim-Manuel Klein and Janina Reuter.
\newblock Collapsing the tower--on the complexity of multistage stochastic ips.
\newblock {\em arXiv preprint arXiv:2110.12743}, 2021.

\bibitem[Len83]{Lenstra:1983}
Hendrik~W. Lenstra, Jr.
\newblock Integer programming with a fixed number of variables.
\newblock {\em Mathematics of Operations Research}, 8(4):538--548, 1983.

\bibitem[Pap81]{Papadimitriou81}
Christos~H. Papadimitriou.
\newblock On the complexity of integer programming.
\newblock {\em J. ACM}, 28(4):765--768, 1981.

\bibitem[PWW20]{PaatWW20}
Joseph Paat, Robert Weismantel, and Stefan Weltge.
\newblock Distances between optimal solutions of mixed-integer programs.
\newblock {\em Math. Program.}, 179(1):455--468, 2020.

\bibitem[SB97]{SevastjanovBanasczyk1997}
Sergey Sevast'janov and Wojciech Banaszczyk.
\newblock To the {S}teinitz lemma in coordinate form.
\newblock {\em Discrete Math.}, 169(1-3):145--152, 1997.

\bibitem[Ste16]{Steinitz1913}
E.~Steinitz.
\newblock Bedingt konvergente {R}eihen und konvexe {S}ysteme.
\newblock {\em J. Reine Angew. Math.}, 146:1--52, 1916.

\end{thebibliography}

\newcommand{\etalchar}[1]{$^{#1}$}

\end{document}